\title{Fractal Analysis on the Real Interval: A Constructive Approach via Fractal Countability}
\author{Stanislav Semenov \\
\href{mailto:stas.semenov@gmail.com}{stas.semenov@gmail.com} \\
\href{https://orcid.org/0000-0002-5891-8119}{ORCID: 0000-0002-5891-8119}}
\date{March 30, 2025}
\theoremstyle{definition}
\newtheorem{definition}{Definition}[section]
\newtheorem{example}{Example}[section]
\newtheorem{conjecture}[definition]{Conjecture}
\theoremstyle{plain}
\newtheorem{theorem}[definition]{Theorem}
\newtheorem{corollary}[definition]{Corollary}
\newtheorem{proposition}[definition]{Proposition}
\theoremstyle{remark}
\newtheorem*{remark}{Remark}
\begin{document}

\maketitle

\begin{abstract}
    This paper develops a technical and practical reinterpretation of the real interval \([a,b]\) under the paradigm of fractal countability. Instead of assuming the continuum as a completed uncountable totality, we model \([a,b]\) as a layered structure of constructively definable points, indexed by a hierarchy of formal systems. We reformulate classical notions from real analysis---continuity, measure, differentiation, and integration---in terms of stratified definability levels \(S_n\), thereby grounding the analytic apparatus in syntactic accessibility rather than ontological postulation. The result is a framework for \emph{fractal analysis}, in which mathematical operations are relativized to layers of expressibility, enabling new insights into approximation, computability, and formal verification.
\end{abstract}

\subsection*{Mathematics Subject Classification}
03F60 (Constructive and recursive analysis), 26E40 (Constructive analysis), 03F03 (Proof theory and constructive mathematics)

\subsection*{ACM Classification}
F.4.1 Mathematical Logic, F.1.1 Models of Computation

\section*{Introduction}

The real interval \([a,b]\) is one of the most fundamental objects in mathematics, serving as the stage for calculus, topology, measure theory, and functional analysis. In classical mathematics, this interval is treated as a subset of the real line \(\mathbb{R}\), whose points form an uncountable continuum constructed via the power set \(\mathcal{P}(\mathbb{N})\). However, as critically examined in \cite{Semenov2025FractalBoundaries}, this assumption carries substantial ontological costs---including impredicative comprehension, nonconstructive existence, and the problematic totality of infinite sets.

Constructive mathematics offers an alternative, focusing on objects that can be explicitly defined, computed, or represented within formal systems. Yet within this setting, the interval \([a,b]\) can no longer be treated as an unstructured totality. Only those points generated by syntactic processes are meaningfully accessible---and since formal systems are built from countable syntax, only a countable fragment of \([a,b]\) becomes available at each stage, as first quantified in \cite{Semenov2025FractalCountability}.

This paper systematically analyzes \([a,b]\) through the lens of \emph{fractal countability}, extending the layered model of constructive expressibility developed in \cite{Semenov2025FractalCountability}. Here, each formal system \(\mathcal{F}_n\) defines a countable subset \(S_n \subset [a,b]\) of points definable within that system. The full constructive trace of the interval emerges as \(\bigcup_{n=0}^\infty S_n\)---a countable but unbounded approximation to the continuum that preserves the meta-formal hierarchy introduced in our earlier work.

We demonstrate how classical analytical notions transform under this stratification:
\begin{itemize}
    \item Continuity becomes relative to \(S_n\)-topology, with layer transitions formalizing definability constraints
    \item Compactness requires syntactic coverings, resolving the tension identified in \cite{Semenov2025FractalBoundaries} between classical and constructive covering principles
    \item Differentiability and integrability explicitly track layer transitions \(S_n \to S_{n+k}\)
    \item Lebesgue measure decomposes into \(S_n\)-constructible approximations
\end{itemize}

New concepts like \emph{fractal continuity} and \emph{definitional neighborhoods} operationalize the expressibility hierarchy from \cite{Semenov2025FractalCountability}, while our treatment of pathological sets (e.g., Liouville numbers) addresses the definability boundary cases analyzed in \cite{Semenov2025FractalBoundaries}.

This work culminates in \emph{fractal analysis}---a technical framework where mathematical objects are not merely approximated, but \emph{positioned} within a definitional hierarchy. By grounding analysis in syntactic accessibility rather than ontological postulation, we provide formal tools for the stratified constructivity proposed in our earlier meta-theoretical investigations.

\section{Related Work and Foundational Context}
The mathematical landscape offers multiple approaches to reconciling classical analysis with constraints of definability and computation. While these frameworks share constructive underpinnings, they differ fundamentally in their treatment of the continuum.

\paragraph{Intuitionistic Analysis (Brouwer)}
Brouwer's intuitionism~\cite{Brouwer1927} conceives the continuum as a \emph{medium of free becoming}, where real numbers are generated by \emph{choice sequences} rather than pre-existing as completed sets. This view emphasizes temporal construction over static ontology, but resists formal codification and lacks explicit stratification of definability levels.

\paragraph{Constructive Analysis (Bishop)}
Bishop's framework~\cite{Bishop1967} provides a formalizable alternative, building real numbers as Cauchy sequences with moduli of convergence. However, it assumes a uniform continuum of constructively admissible points, without internal mechanisms to track \emph{varying degrees} of definability across formal systems.

\paragraph{Computable Analysis (Turing--Weihrauch)}
In computable analysis~\cite{PourElRichards1989,Weihrauch2000}, reals are defined via computable sequences, and functions are studied through effective continuity. While this yields a rigorous computational semantics, it fixes representations \emph{a priori} and does not account for the \emph{dynamic expansion} of definable objects under syntactic extensions.

\paragraph{Our Approach: Fractal Countability} 
Unlike these models, our theory interprets the continuum as a \emph{layered definitional horizon}, where each stratum \(S_n\) corresponds to a formal system \(\mathcal{F}_n\). This dynamic stratification aligns in spirit with reverse mathematics~\cite{Simpson2009} but prioritizes \emph{syntactic accessibility} over proof-theoretic strength. By embedding the continuum within a meta-formal hierarchy, we capture both its constructive content and its open-ended definability structure.

\section{Fractal Hierarchy on the Real Interval}

\subsection{Motivating Example: The Base Layer \texorpdfstring{\( S_1 \)}{S1}}
\label{subsec:base-layer}

Consider the definable fragment \( S_1 \cap [0,1] \) as a paradigm for our approach. Let:
\begin{itemize}
    \item \( \mathcal{F}_0 \) be a system of primitive recursive arithmetic,
    \item \( \mathcal{F}_1 \) extend \( \mathcal{F}_0 \) with computable root-finding for algebraic equations.
\end{itemize}

Then the corresponding layers satisfy:
\begin{itemize}
    \item \( S_0 \cap [0,1] \): Dyadic rationals \( \frac{p}{2^k} \) with primitive recursive \( p,k \).
    \item \( S_1 \cap [0,1] \): Algebraic numbers \( \alpha \in [0,1] \) with computable minimal polynomials.
    \item \emph{Transcendentals exclusion}: Constants like \( \pi \) or \( e \) require stronger systems (e.g., \( \mathcal{F}_3 \) for Taylor series definitions).
\end{itemize}

This illustrates how formal expressibility bounds the "reachable" subset of \([0,1]\). Crucially, operations like \( x \mapsto x^2 \) preserve \( S_1 \)-membership, enabling closed-form analysis within each layer.

\subsection{Formal Definition: The Inductive Hierarchy \texorpdfstring{\( S_n \)}{Sn}}
\label{subsec:inductive-hierarchy}

We now define the full hierarchy via a sequence of formal systems \( \{\mathcal{F}_n\}_{n \in \mathbb{N}} \):

\begin{definition}
For each \( n \geq 0 \), the \emph{\(n\)-th definability layer} \( S_n \subset [a,b] \) is:
\[
S_n := \{ x \in [a,b] \mid x \text{ is } \mathcal{F}_n\text{-definable} \},
\]
where \( \mathcal{F}_{n+1} \) conservatively extends \( \mathcal{F}_n \) with:
\begin{itemize}
    \item Additional comprehension (e.g., arithmetic \( \Delta^1_1 \)), 
    \item Higher-type recursion,
    \item Controlled choice principles.
\end{itemize}
\end{definition}

\paragraph{Key Properties}
\begin{itemize}
    \item \textbf{Closure}: If \( f \in \mathcal{F}_n \) is total and \( x \in S_n \), then \( f(x) \in S_n \).
    \item \textbf{Strict growth}: \( S_{n+1} \setminus S_n \neq \emptyset \) (e.g., \( \mathcal{F}_{n+1} \) defines Chaitin’s constants~\cite{Chaitin1975}).
    \item \textbf{Countable limit}: \( S^\infty := \bigcup_{n=0}^\infty S_n \) is a countable dense subset of \([a,b]\).
    \item \textbf{Relativity}: \( S_n \) depends on the syntactic capabilities of \( \mathcal{F}_n \), making the hierarchy meta-formal rather than purely mathematical.
\end{itemize}

\paragraph{Example Systems}
\begin{itemize}
    \item \( \mathcal{F}_0 = \mathrm{RCA}_0 \) (computable reals)~\cite{Simpson2009},
    \item \( \mathcal{F}_1 = \mathrm{ACA}_0 \) (arithmetic comprehension)~\cite{Simpson2009},
    \item \( \mathcal{F}_\omega \) (transfinite recursion up to \( \omega \)),
    \item \( \mathcal{F}_{n+1} = \mathcal{F}_n + \Sigma^1_n\text{-}\mathrm{IND} \).
\end{itemize}

This hierarchy induces a \emph{definability topology} on \([a,b]\), where neighborhoods are \( S_n \)-definable open sets. In Sections~\ref{sec:continuity-hierarchy}--\ref{sec:measure}, we develop analysis relative to this structure.

\section{Fractally Definable Points on \texorpdfstring{$[a,b]$}{[a,b]}}
\label{sec:fractal-points}

Classical analysis treats the interval $[a,b]$ as an uncountable completed whole, containing non-constructive elements. We reinterpret it through \emph{fractal countability} as a stratified union of definable fragments:
\[
S^\infty \cap [a,b] = \bigcup_{n=0}^\infty (S_n \cap [a,b]),
\]
where each $S_n$ contains points definable in formal system $\mathcal{F}_n$. This hierarchy captures both syntactic accessibility and structural complexity across definability layers.

\subsection{Local Definability and Density}

Each \( S_n \cap [a,b] \) is a countable dense subset of \([a,b]\), but its internal structure reflects the expressive limitations of \( \mathcal{F}_n \). For a given point \( x \in [a,b] \), we define:

\begin{definition}[Left/Right Definable Neighborhoods]
Given a level \( n \), define the one-sided neighborhoods:
\[
\mathcal{L}_\delta^n(x) = \{ y \in S_n \mid x - \delta < y < x \}, \quad
\mathcal{R}_\delta^n(x) = \{ y \in S_n \mid x < y < x + \delta \}.
\]
We say that \( x \) has a \emph{definable neighborhood} in \( S_n \) if both \( \mathcal{L}_\delta^n(x) \) and \( \mathcal{R}_\delta^n(x) \) are nonempty for all rational \( \delta > 0 \).
\end{definition}

This yields three \emph{local definability types} for \( x \in [a,b] \) relative to \( S_n \):
\begin{itemize}
    \item \emph{Isolated}: There exists \( \delta > 0 \) such that both \( \mathcal{L}_\delta^n(x) = \emptyset \) and \( \mathcal{R}_\delta^n(x) = \emptyset \).
    \item \emph{Dense}: For all \( \delta > 0 \), both neighborhoods are nonempty.
    \item \emph{One-sided gap}: Only one of \( \mathcal{L}_\delta^n(x) \) or \( \mathcal{R}_\delta^n(x) \) is empty for some \( \delta > 0 \).
\end{itemize}

\subsection{Classification by \texorpdfstring{\(S_n\)-Membership}{Sn-Membership}}
\label{subsec:classification}

We can classify classical real numbers by the lowest \( n \) such that they belong to \( S_n \):

\begin{itemize}
    \item \( \mathbb{Q} \cap [a,b] \subset S_0 \)
    \item Algebraic numbers (with computable minimal polynomials) lie in \( S_1 \)
    \item Computable reals (limits of recursive sequences of rationals) lie in \( S_2 \)
    \item Liouville numbers may require higher \( n \), depending on the definability of their approximations
\end{itemize}

\begin{definition}
The \emph{definability rank} of a real number \( x \in [a,b] \), relative to base system \( \mathcal{F}_0 \), is the least \( n \) such that \( x \in S_n \).
\end{definition}

This creates a natural stratification of \([a,b]\), with increasingly "complex" reals appearing at higher definability levels.

\begin{example}
The Liouville constant \( \ell = \sum_{k=1}^\infty 10^{-k!} \)~\cite{Liouville1844} exemplifies a real number with extremely rapid rational approximations. It is non-algebraic and not computable in low \( \mathcal{F}_n \), but may become definable in \( \mathcal{F}_3 \) if the system includes formal factorial growth and series convergence.
\end{example}

\subsection{Definability Gaps and Measure}

Despite the density of each \( S_n \), the complement \( [a,b] \setminus S_n \) is uncountable and exhibits measure-theoretic structure:

\begin{definition}[Definability Gap]
Let \( x \in [a,b] \). A \emph{definability gap at level \( n \)} is an interval \( (x-\delta, x+\delta) \) such that \( (x-\delta, x+\delta) \cap S_n = \emptyset \).
\end{definition}

Such gaps cannot occur in \( S^\infty \), but may persist for finite \( n \), and their distribution reflects the expressive reach of \( \mathcal{F}_n \). The set of all such gaps (i.e., the complement of \( S_n \)) is non-measurable in the classical sense but can be analyzed via \( S_n \)-measure:

\[
\mu_n([a,b] \setminus S_n) = \sup \left\{ \sum \ell(I_k) : \{I_k\} \text{ cover } [a,b] \setminus S_n,\ I_k \text{ with } S_n\text{-endpoints} \right\}.
\]

This shows that each finite level \( S_n \) captures no measurable content in classical terms: \( \mu_n([a,b] \setminus S_n) = b - a \), despite its density. However, the asymptotic union \( S^\infty \) satisfies \( \mu([a,b] \setminus S^\infty) = 0 \), reflecting convergence toward full measure. This underlines the idea that the hierarchy asymptotically exhausts \([a,b]\), though never fully within any single layer.

\section{Fractal Topology on the Interval}
\label{sec:topology}

Having established the hierarchy \( \{S_n\} \) of definable subsets of \([a,b]\), we now introduce a topology that reflects their stratified structure. Rather than relying on arbitrary open intervals, we define a family of neighborhoods that are themselves \( S_n \)-definable and capture local expressibility.

\subsection{\texorpdfstring{\( S_n \)}{Sn}-Open Sets and Neighborhoods}

\begin{definition}[\( S_n \)-Ball]
Given \( x \in S_n \cap [a,b] \) and rational \( \varepsilon > 0 \), define the \emph{fractal \( \varepsilon \)-ball} as:
\[
B^n_\varepsilon(x) := \{ y \in S_n \cap [a,b] \mid |x - y| < \varepsilon \}.
\]
\end{definition}

This ball is not an open interval in the classical topology, but rather a \emph{definable neighborhood} of \( x \) relative to system \( \mathcal{F}_n \). These balls form a basis for the \( S_n \)-topology.

\begin{definition}[\( S_n \)-Topology]
A set \( U \subseteq S_n \cap [a,b] \) is \emph{\( S_n \)-open} if:
\[
\forall x \in U,\ \exists \varepsilon \in \mathbb{Q}^+ \cap S_n \text{ such that } B^n_\varepsilon(x) \subseteq U.
\]
The family of all such sets defines the topology \( \mathcal{O}_n \) on \( S_n \cap [a,b] \).
\end{definition}

\subsection{Stratified Open Covers}

\begin{definition}[\( S_n \)-Open Cover]
Let \( K \subseteq S_n \cap [a,b] \). A family \( \{U_i\}_{i \in I} \subseteq \mathcal{O}_n \) is an \( S_n \)-open cover of \( K \) if:
\[
\forall x \in K,\ \exists i \in I \cap S_n \text{ such that } x \in U_i.
\]
\end{definition}

In contrast to classical covers, the indexing set \( I \) must itself be definable in \( \mathcal{F}_n \), reflecting the finite representability of the covering family. This restriction naturally leads to a new notion of compactness, discussed in Section~\ref{sec:compactness}. Unlike classical covers, where arbitrary unions of open sets are allowed, \( S_n \)-covers must be generated via definable indexing over \( \mathcal{F}_n \). This reflects the constructive constraint that even covers are syntactically bounded.

\subsection{Comparison with Classical Topology}

While each \( S_n \cap [a,b] \) is dense in the classical topology, the \( S_n \)-topology is strictly weaker:

\begin{itemize}
    \item Open intervals \( (x-\varepsilon, x+\varepsilon) \) intersected with \( S_n \) are not \( S_n \)-open unless their endpoints are \( S_n \)-definable.
    \item Closure under union or intersection is limited to \( S_n \)-definable operations.
    \item The topology is not second-countable in the classical sense, but admits a countable basis \emph{within} \( \mathcal{F}_n \).
\end{itemize}

\paragraph{Example}
Let \( x = \frac{1}{3} \in S_0 \). Then \( B^0_{1/8}(x) \) includes only dyadic rationals within the interval \( (0.2083..., 0.4583...) \), and is thus a finite set~\cite{Weihrauch2000}. In contrast, \( B^1_{1/8}(x) \) includes also algebraic reals within this interval, showing the refinement of local structure with increasing \( n \).

\begin{table}[ht]
\centering
\begin{tabular}{lll}
\toprule
\textbf{Topological Property} & \textbf{Classical Topology} & \textbf{\( S_n \)-Topology} \\
\midrule
Basis Cardinality & Uncountable & Countable in \( \mathcal{F}_n \) \\
Connectedness & Interval-connected & Totally disconnected \\
Compactness & Globally compact & Layer-relative compactness \\
\bottomrule
\end{tabular}
\caption{Comparison between classical and \( S_n \)-topologies}
\end{table}

\subsection{Topological Properties}
\label{subsec:topo-properties}

The topology \( \mathcal{O}_n \) induced on \( S_n \cap [a,b] \) is strictly coarser than the subspace topology from the reals:

\begin{theorem}[Topological Properties of \( S_n \)-Spaces]
\label{thm:sn-topology-properties}
For each \( n \in \mathbb{N} \), the topological space \( (S_n, \mathcal{O}_n) \) satisfies:
\begin{enumerate}
    \item \emph{Hausdorff}: Any two distinct points can be separated by disjoint basic open sets.
    \item \emph{Zero-dimensional}: The topology \( \mathcal{O}_n \) admits a basis of clopen sets.
    \item \emph{Non-metrizable}: Unless \( \mathcal{F}_n \) proves full comprehension for countable open covers, there exists no metric inducing \( \mathcal{O}_n \).
\end{enumerate}
\end{theorem}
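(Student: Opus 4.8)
The plan is to establish the three properties in order, leveraging the fact that the basic open sets $B^n_\varepsilon(x)$ are indexed by a point $x \in S_n$ and a rational $\varepsilon \in \mathbb{Q}^+$ definable in $\mathcal{F}_n$, so everything can be phrased in terms of order-theoretic separation inside a countable linearly ordered set. For \emph{Hausdorff}, given distinct $x, y \in S_n$ with, say, $x < y$, I would pick a rational $\varepsilon < (y-x)/2$ that is $\mathcal{F}_n$-definable (such $\varepsilon$ exists since $\mathbb{Q} \cap S_n$ is dense and $\mathcal{F}_n$ contains primitive recursive arithmetic, hence can name such a rational explicitly from names for $x,y$); then $B^n_\varepsilon(x)$ and $B^n_\varepsilon(y)$ are disjoint by the triangle inequality. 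For \emph{zero-dimensionality}, the key observation is that in the relative order topology on the countable set $S_n \cap [a,b]$, a basic ball $B^n_\varepsilon(x)$ can be rewritten as $(x-\varepsilon, x+\varepsilon) \cap S_n$, and one can shrink to an interval $(p,q) \cap S_n$ with $p, q \in (\mathbb{Q} \setminus S_n)$ — or, failing that, with $p,q$ irrational non-$S_n$ endpoints obtained as limits — making the set simultaneously open and closed in $\mathcal{O}_n$ because neither endpoint is realized. The honest route, and the one I would take, is to observe that $S_n$ is countable, so between any two of its points lies a real not in $S_n$ (since $[a,b] \setminus S_n$ is uncountable, indeed dense, as noted after the definition of definability gaps); using such separating reals as endpoints yields a basis of clopen sets, giving zero-dimensionality, which incidentally re-proves Hausdorff and total disconnectedness as recorded in the comparison table.

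For \emph{non-metrizability}, I would argue by contraposition on the stated hypothesis. Suppose a metric $d$ induces $\mathcal{O}_n$. A metric space is first-countable and, more to the point, a separable metric space is second-countable and Lindelöf; since $S_n \cap [a,b]$ is countable it is trivially separable, so every open cover would admit a countable subcover \emph{as a set}. The subtlety — and this is where the clause ``unless $\mathcal{F}_n$ proves full comprehension for countable open covers'' earns its keep — is that in our setting covers must be $S_n$-indexed (Definition of $S_n$-open cover), so the mere set-theoretic existence of a countable subcover is not enough; one needs $\mathcal{F}_n$ to \emph{prove} that such a subcover exists and is itself a definable family. If $\mathcal{F}_n$ lacks the comprehension to form, from an arbitrary $\mathcal{F}_n$-definable open cover, the countable subfamily witnessing the Lindelöf property, then the topology $\mathcal{O}_n$ carries covering data that no metric's induced topology could carry, so no such $d$ exists internally. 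Concretely I would exhibit an $\mathcal{F}_n$-definable open cover of $S_n \cap [a,b]$ by basic balls whose ``choice of countable subcover'' encodes a function ($\Delta^1_1$-comprehension, say) not available in $\mathcal{F}_n$, and note that metrizability would supply that function via the standard proof that separable metric implies second countable.

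I expect the main obstacle to be item 3, and specifically making the phrase ``there exists no metric inducing $\mathcal{O}_n$'' precise in a meta-formal setting where ``exists'' is itself relativized to $\mathcal{F}_n$. The cleanest formulation I would adopt: \emph{$\mathcal{F}_n$ does not prove ``$(S_n,\mathcal{O}_n)$ is metrizable,''} and the proof is that any $\mathcal{F}_n$-proof of metrizability could be composed with the elementary arguments (metrizable $\Rightarrow$ first countable; separable first-countable $\Rightarrow$ second countable; second countable $\Rightarrow$ every open cover has a countable subcover) to yield an $\mathcal{F}_n$-proof of the countable-subcover principle for all $\mathcal{F}_n$-definable covers, contradicting the hypothesis. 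Items 1 and 2 are routine once the separating-real trick is in hand; the only care needed there is to confirm that the separating endpoints, though not in $S_n$, still define legitimate $\mathcal{O}_n$-open sets when intersected with $S_n$ — which holds because $\mathcal{O}_n$-openness is tested by $S_n$-balls around $S_n$-points, and such an interval-trace contains a genuine ball around each of its points.
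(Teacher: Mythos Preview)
Your argument for (1) matches the paper's exactly. For (3), both you and the paper give informal sketches hinging on the same idea---that metrizability would force second-countability/Lindel\"of behaviour requiring comprehension not available in $\mathcal{F}_n$---though you are more explicit and careful about the meta-formal reading of ``exists,'' which is a genuine improvement in clarity.

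The real divergence is in (2). The paper does not invoke separating reals outside $S_n$ at all; instead it shows directly that each basic ball $B^n_\varepsilon(x)$ is \emph{already} clopen. The argument: for any $z \in S_n \setminus B^n_\varepsilon(x)$ one has $|x-z| \geq \varepsilon$, so taking (a rational in $S_n$ below) $\delta = \tfrac{1}{2}(|x-z|-\varepsilon) > 0$ gives $B^n_\delta(z) \subseteq S_n \setminus B^n_\varepsilon(x)$; hence the complement is open and the ball is closed. This keeps the clopen basis identical to the canonical $S_n$-definable basis already in hand. Your separating-real trick is a valid external argument---it is essentially the classical proof that any countable subset of $\mathbb{R}$ is zero-dimensional in the subspace topology---but it sits somewhat awkwardly in a framework whose whole point is that basic open sets carry $\mathcal{F}_n$-definable data: your clopen sets $(p,q)\cap S_n$ are specified by parameters $p,q \notin S_n$, so while they are $\mathcal{O}_n$-open by the ball criterion, they are not themselves objects $\mathcal{F}_n$ can name. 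The paper's route buys internal coherence with the definability paradigm; yours buys a direct link to standard order-topology facts and immediately yields the total-disconnectedness noted in the comparison table. Both are correct; the paper's is shorter and more in keeping with the surrounding framework.
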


\begin{proof}
\textbf{(1) Hausdorff.}
Let \( x, y \in S_n \), with \( x \ne y \). Then \( \varepsilon := |x - y|/2 \in \mathbb{Q}^+ \cap S_n \) exists by closure of \( S_n \) under rational arithmetic. Define:
\[
U := B^n_{\varepsilon}(x), \quad V := B^n_{\varepsilon}(y).
\]
Then \( U, V \in \mathcal{O}_n \), and \( U \cap V = \emptyset \), since for any \( z \in U \cap V \), we would have
\[
|x - y| \le |x - z| + |z - y| < \varepsilon + \varepsilon = |x - y|,
\]
a contradiction.

\medskip
\textbf{(2) Zero-dimensionality.}
Each basic open set \( B^n_\varepsilon(x) \) is also closed in \( \mathcal{O}_n \). Indeed, for any \( z \in S_n \setminus B^n_\varepsilon(x) \), let
\[
\delta := \frac{1}{2} \left( |x - z| - \varepsilon \right) > 0,
\]
which belongs to \( \mathbb{Q}^+ \cap S_n \) by definability. Then the ball \( B^n_\delta(z) \subseteq S_n \setminus B^n_\varepsilon(x) \), so the complement is open as a union of such balls. Hence \( B^n_\varepsilon(x) \) is clopen.

\medskip
\textbf{(3) Non-metrizability.}
Assume for contradiction that \( \mathcal{O}_n \) is induced by a metric \( d \). Then singletons \( \{x\} \) would be \( G_\delta \)-sets, and the space would be regular and second-countable. But \( \mathcal{O}_n \) is only closed under definable unions and has no guarantee of countable basis outside \( \mathcal{F}_n \)-definability. Therefore, a metric inducing \( \mathcal{O}_n \) would require comprehension over arbitrary \( S_n \)-subsets, contradicting the constructivity of \( \mathcal{F}_n \). Thus, \( (S_n, \mathcal{O}_n) \) is not metrizable unless \( \mathcal{F}_n \) proves such comprehension.
\end{proof}

\begin{corollary}
For \( \mathcal{F}_n = \mathrm{RCA}_0 \), all three properties hold strictly. 
At \( \mathcal{F}_n = \mathrm{ACA}_0 \), (3) may fail if the metric is arithmetic.
\end{corollary}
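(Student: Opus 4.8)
The plan is to read the corollary as two specializations of Theorem~\ref{thm:sn-topology-properties}: clauses (1) and (2) are proved there for every \( \mathcal{F}_n \), so the only system-specific content concerns clause (3), whose hypothesis is a non-provability statement about comprehension for countable open covers. Thus the proof splits into (a) checking that \( \mathrm{RCA}_0 \) meets the hypothesis of (3), so that the non-metrizability conclusion is genuine and not vacuous, and (b) pinning down the precise sense in which \( \mathrm{ACA}_0 \) fails that hypothesis, so that (3) can break down.

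For \( \mathrm{RCA}_0 \) I would first note that here \( S_n = S_0 \) is the set of computable reals in \([a,b]\), that \( \mathbb{Q}^+ \cap S_0 = \mathbb{Q}^+ \), and that the Hausdorff and zero-dimensionality arguments of the theorem apply verbatim. For (3) the point is that \( \mathrm{RCA}_0 \) has only \( \Delta^0_1 \) (recursive) comprehension; by standard reverse mathematics \( \mathrm{RCA}_0 \nvdash \mathrm{ACA}_0 \), and in particular \( \mathrm{RCA}_0 \) does not prove the comprehension instances needed to form the \( G_\delta \)-representations of singletons, nor to close \( \mathcal{O}_n \) under the arbitrary countable unions a metric would force --- these being exactly the covers flagged in the proof of Theorem~\ref{thm:sn-topology-properties}(3). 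Hence the hypothesis ``\( \mathcal{F}_n \) does not prove full comprehension for countable open covers'' holds, the non-metrizability argument goes through, and all three properties hold non-trivially.

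For \( \mathrm{ACA}_0 \) the plan is constructive: exhibit the candidate metric \( d(x,y) := |x-y| \) on \( S_n \) and argue that, once the inducing metric is permitted to be arithmetically definable, \( \mathrm{ACA}_0 \) certifies that \( d \) induces \( \mathcal{O}_n \). Concretely I would (i) note that the graph of \( d \) is \( \Delta^0_1 \), so even \( \mathrm{RCA}_0 \) proves \( d \) is a metric; (ii) check that the \( d \)-balls coincide with the basic fractal balls \( B^n_\varepsilon(x) \) of rational radius, so the two topologies share a basis and the live issue is purely whether \( \mathcal{F}_n \) can \emph{prove} that \( d \) induces the full topology \( \mathcal{O}_n \); and (iii) observe that the step blocked in the proof of Theorem~\ref{thm:sn-topology-properties}(3) --- comprehension over the \( S_n \)-subsets occurring as \( \mathcal{O}_n \)-open sets --- now asks only for arithmetic comprehension, since under \( \mathcal{F}_n = \mathrm{ACA}_0 \) the relevant indexing families and their unions are arithmetic in their parameters. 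That instance is precisely \( \mathrm{ACA} \), available in \( \mathrm{ACA}_0 \), so the obstruction disappears and (3) fails for an arithmetic metric.

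The main obstacle I anticipate is making precise --- and faithful to the theorem's proof --- what ``the metric is arithmetic'' actually buys: reconciling the definable-indexing restriction on \( S_n \)-open covers with the countable unions the metric topology demands, and verifying that allowing arithmetic indexing sets does not quietly reintroduce genuinely non-arithmetic covers. A secondary care point is on the \( \mathrm{RCA}_0 \) side, where the non-provability must be cited at the right granularity: the particular comprehension instance used to derive second countability and regularity in the proof of (3) must be shown to be non-\( \Delta^0_1 \) and not provable in \( \mathrm{RCA}_0 \), since otherwise the word ``strictly'' in the corollary would be unearned.
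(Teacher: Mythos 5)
Your plan is essentially the argument the paper intends: the paper gives no proof of this corollary at all --- it is stated as an immediate specialization of Theorem~\ref{thm:sn-topology-properties}, with only the following remark gesturing at why arithmetical comprehension may restore metrizability. Your decomposition (clauses (1)--(2) are system-independent; the only content is whether the hypothesis of (3), non-provability of the relevant comprehension for countable open covers, holds in \( \mathrm{RCA}_0 \) and fails in \( \mathrm{ACA}_0 \) for an arithmetic metric) is exactly the reading the corollary presupposes, and your item (ii) --- that the balls of \( d(x,y)=|x-y| \) literally coincide with the basic sets \( B^n_\varepsilon(x) \), so the issue is purely whether \( \mathcal{F}_n \) can \emph{prove} that \( d \) induces \( \mathcal{O}_n \) --- is a sharper articulation of the point than anything in the paper. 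Two cautions, both of which you partly anticipate: the ``strictly'' claim for \( \mathrm{RCA}_0 \) can only be as solid as the proof of Theorem~\ref{thm:sn-topology-properties}(3), which is itself informal (it never isolates the exact comprehension instance a metric would require), so your demand to pin that instance down at the right granularity is doing work the paper never does; and on the \( \mathrm{ACA}_0 \) side the corollary only says (3) \emph{may} fail, so you need not show that \( \mathrm{ACA}_0 \) outright proves metrizability --- exhibiting the arithmetic candidate metric and showing the obstruction invoked in the theorem's proof becomes an arithmetic comprehension instance is all the statement asks for, which is what your step (iii) does.
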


\begin{remark}
This illustrates how the metrizability of \( (S_n, \mathcal{O}_n) \) is sensitive to the strength of the underlying formal system. The topology remains strictly non-metrizable in weakly constructive settings, but may admit metric refinements in stronger systems with arithmetical comprehension.
\end{remark}

\subsection{Definability Topologies and Continuity}

The \( S_n \)-topologies \( \mathcal{O}_n \) induce a stratified family of structures on \([a,b]\), where topological notions like continuity, compactness, and connectedness become relative to the expressibility level. These structures are not mutually compatible: a set open in \( \mathcal{O}_n \) need not be open in \( \mathcal{O}_{n+1} \), and functions continuous in one layer may fail to preserve continuity across layers.

This motivates the definition of \emph{layer-relative continuity} and \emph{layer-jumping transitions}, developed in Sections~\ref{sec:continuity-hierarchy}–\ref{sec:compactness}.

\subsection{Limit Topology and Borel Structure}

The collection \( \{ \mathcal{O}_n \}_{n \in \mathbb{N}} \) induces a growing chain of topologies:
\[
\mathcal{O}_0 \subseteq \mathcal{O}_1 \subseteq \cdots \subseteq \mathcal{O}_n \subseteq \cdots
\]
Define the limit topology as:
\[
\mathcal{O}_\infty := \bigcup_{n=0}^\infty \mathcal{O}_n.
\]

\begin{theorem}
The Borel algebra of the limit topology satisfies:
\[
\text{Borel}(\mathcal{O}_\infty) = \text{Borel}(\text{Euclidean}) \cap S^\infty.
\]
\end{theorem}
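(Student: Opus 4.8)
The plan is to establish the two inclusions between $\sigma$-algebras on the countable ground set $S^\infty$, after fixing conventions. I read $\mathcal{O}_\infty=\bigcup_n\mathcal{O}_n$ as the family of subsets of $S^\infty$ that are $S_n$-open for some $n$; this family is generated by the countable collection of fractal balls $\mathcal{B}:=\{\,B^n_\varepsilon(x) : n\in\mathbb{N},\ x\in S_n,\ \varepsilon\in\mathbb{Q}^+\,\}$, which (by the earlier observation) bases each $\mathcal{O}_n$. I read the right-hand side as the trace $\sigma$-algebra $\mathcal{A}:=\{\,E\cap S^\infty : E\in\mathrm{Borel}([a,b])\,\}$. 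Two standard facts do the work: a countable subset of $[a,b]$ is an $F_\sigma$, hence Borel; and forming traces commutes with generating a $\sigma$-algebra, so $\mathcal{A}$ is generated over $S^\infty$ by $\{\,J\cap S^\infty : J\text{ a rational open subinterval of }[a,b]\,\}$.

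For $\mathrm{Borel}(\mathcal{O}_\infty)\subseteq\mathcal{A}$, it suffices (since $\mathcal{A}$ is a $\sigma$-algebra) to put every $\mathcal{O}_\infty$-open set into $\mathcal{A}$. Each fractal ball satisfies $B^n_\varepsilon(x)=(x-\varepsilon,x+\varepsilon)\cap S_n$, and because $S_n$ is countable this is a Borel subset of $[a,b]$ already contained in $S^\infty$; hence $B^n_\varepsilon(x)\in\mathcal{A}$. Since $\mathcal{B}$ is countable and bases every $\mathcal{O}_n$, each member of $\mathcal{O}_\infty$ is a countable union of members of $\mathcal{B}$, so lies in $\mathcal{A}$.

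For $\mathcal{A}\subseteq\mathrm{Borel}(\mathcal{O}_\infty)$, by the trace-generation fact it suffices to place each $J\cap S^\infty$, $J=(p,q)$ with $p,q\in\mathbb{Q}$, into $\mathrm{Borel}(\mathcal{O}_\infty)$. Write $J\cap S^\infty=\bigcup_n(J\cap S_n)$; for each $n$ the set $J\cap S_n$ is $S_n$-open, because for $x\in J\cap S_n$ one may pick a positive rational $\varepsilon<\min(x-p,q-x)$ — rational radii being available at every level — giving $B^n_\varepsilon(x)\subseteq J\cap S_n$. (The caveat from the comparison with classical topology, that $(x-\delta,x+\delta)\cap S_n$ is $S_n$-open only for $S_n$-definable endpoints, does not bite here since $p,q\in\mathbb{Q}\subseteq S_0$.) Thus $J\cap S^\infty$ is a countable union of $\mathcal{O}_\infty$-open sets, hence in $\mathrm{Borel}(\mathcal{O}_\infty)$, and $\sigma$-algebra closure gives the inclusion; combined with the previous paragraph, equality follows.

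I expect the only real obstacle to be definitional rather than computational: pinning down on which set $\mathcal{O}_\infty$ lives, in what sense $\bigcup_n\mathcal{O}_n$ is taken, and how "$\cap S^\infty$" is meant, so that the fractal-ball basis on one side and the traces of rational intervals on the other can be compared termwise; once that is fixed the argument is bookkeeping around $\mathbb{Q}\subseteq S_0$ and "countable $\Rightarrow$ Borel". As a sanity check, both sides in fact reduce to $\mathcal{P}(S^\infty)$: on the right since every singleton of $[a,b]$ is Borel; on the left since the complement of any $\{y\}$ in $S^\infty$ is a countable union of fractal balls $B^N_\varepsilon(z)$ ($z\neq y$, $\varepsilon$ a small rational $<|y-z|$, $N$ above the ranks of $y$ and $z$), so $\{y\}\in\mathrm{Borel}(\mathcal{O}_\infty)$ and $S^\infty$ is countable.
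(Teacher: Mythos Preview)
Your argument is correct and follows the same line as the paper's sketch—both rest on the countable basis of fractal balls and the trace of Euclidean open sets on $S^\infty$—though you carry out the two inclusions far more carefully than the paper does. Your closing sanity check (that both sides are in fact all of $\mathcal{P}(S^\infty)$, since $S^\infty$ is countable and singletons are Borel on either side) is a sharper observation than anything in the paper's own proof.
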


\begin{proof}[Sketch]
Since each \( \mathcal{O}_n \) is countable and strictly included in the Euclidean topology, the union \( \mathcal{O}_\infty \) forms a countable basis over \( S^\infty \). The Borel algebra generated by \( \mathcal{O}_\infty \) includes all countable unions and intersections of \( S_n \)-open sets. However, since all such sets are subsets of \( S^\infty \), the resulting Borel structure coincides with the restriction of the Euclidean Borel algebra to \( S^\infty \).
\end{proof}

This means that only the Borel sets that are also constructively definable appear in the stratified setting, reflecting the interaction between classical measurability and syntactic accessibility.

\section{Fractally Continuous Functions}
\label{sec:fractal-continuity}

In classical analysis, continuity is defined via \(\varepsilon\)--\(\delta\) conditions over the uncountable real line. In the stratified setting of fractal countability, this notion must be relativized to the definability layers \( S_n \). Since each \( S_n \) consists only of points accessible within a formal system \( \mathcal{F}_n \), continuity must be interpreted within and across these layers.

\subsection{Layer-Relative Continuity}

\begin{definition}[Fractal Continuity]
\label{def:fractal-continuity}
Let \( f: S_n \to S_{n+k} \) be a function defined on the \( n \)-th definability layer. We say that \( f \) is \emph{\( \mathcal{F}_n \)-continuous at a point \( x \in S_n \)} if:
\[
\forall \varepsilon \in \mathbb{Q}^+ \cap S_{n+k}\ \exists \delta \in \mathbb{Q}^+ \cap S_n\ \forall y \in S_n\ (|x - y| < \delta \Rightarrow |f(x) - f(y)| < \varepsilon).
\]
We say that \( f \) is \emph{\( \mathcal{F}_n \)-continuous on \( S_n \)} if it is continuous at every \( x \in S_n \).
\end{definition}

This formulation reflects the fact that both the domain and the neighborhood structure are defined within \( S_n \), while the function values may lie in a higher layer \( S_{n+k} \). The shift \( k \geq 0 \) expresses the possible definitional complexity added by \( f \).

\subsection{Definability Jumps at Discontinuities}

Unlike classical continuity, where discontinuities are points of topological failure, fractal discontinuities often reflect a \emph{definability mismatch}. A function \( f \colon S_n \to S_{n+k} \) may be undefined or non-continuous at a point \( x \in S_{n+1} \setminus S_n \), even though \( x \) lies arbitrarily close to elements of \( S_n \).

\begin{definition}[Definability Jump]
Let \( x \in S_{n+1} \setminus S_n \) and \( f: S_n \to S_{n+k} \) be a total function. We say that \( f \) exhibits a \emph{definability jump at \( x \)} if:
\[
\lim_{y \to x,\ y \in S_n} f(y) \text{ does not exist in } S_{n+k}.
\]
\end{definition}

These jumps are not anomalies, but structural artifacts of the layered definability setting. They indicate that \( f \) cannot be continuously extended to \( x \) without invoking higher syntactic strength.

\subsection{Examples and Pathologies}

\paragraph{Example 1: Step Function at a Definability Threshold.}

Let \( \theta \in S_{n+1} \setminus S_n \) be a real number not definable in \( \mathcal{F}_n \). Define:
\[
f(x) = \begin{cases}
0 & \text{if } x < \theta, \\
1 & \text{if } x \geq \theta,
\end{cases}
\quad x \in S_n.
\]
Then \( f \) is total on \( S_n \), but discontinuous at all sequences in \( S_n \) converging to \( \theta \). This reflects a definability jump at \( \theta \).

\paragraph{Example 2: Algebraic Function with Non-Algebraic Discontinuity.}

Let \( f(x) = \frac{1}{x - \pi} \) with domain \( S_1 \setminus \{ \pi \} \)~\cite{Lindemann1882}. Since \( \pi \notin S_1 \), the function is continuous in \( \mathcal{F}_1 \), yet exhibits unbounded behavior near \( \pi \) from higher levels, making it structurally discontinuous with respect to \( S^\infty \).

\subsection{Continuity Hierarchies}

The stratified definition allows for a graded hierarchy of continuity notions:

\begin{definition}[Continuity Degree]
We say that \( f \colon S_n \to S_{n+k} \) is:
\begin{itemize}
    \item \emph{Layer-preserving continuous} if \( k = 0 \),
    \item \emph{Layer-shifting continuous} if \( k > 0 \),
    \item \emph{Fully fractal continuous} if \( f \) is continuous on every \( S_n \), with images in \( S_{n+k(n)} \) for some computable function \( k(n) \).
\end{itemize}
\end{definition}

This stratification captures how definability affects continuity: more complex functions require stronger systems to express their values and continuity properties.

\subsection{Basic Theorem: Composition}

\begin{theorem}[Composition of Fractally Continuous Functions]
Let \( f: S_n \to S_{n+k} \) be \( \mathcal{F}_n \)-continuous, and \( g: S_{n+k} \to S_{n+k+m} \) be \( \mathcal{F}_{n+k} \)-continuous. Then the composition \( g \circ f: S_n \to S_{n+k+m} \) is \( \mathcal{F}_n \)-continuous.
\end{theorem}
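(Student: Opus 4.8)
The plan is to run the standard $\varepsilon$--$\delta$ chaining argument for composition, with the one extra piece of care being to track which definability layer each tolerance inhabits. Before the estimates, I would first check the statement is well-typed: since $f(S_n)\subseteq S_{n+k}$ and $g(S_{n+k})\subseteq S_{n+k+m}$, the map $g\circ f$ indeed sends $S_n$ into $S_{n+k+m}$, so $\mathcal{F}_n$-continuity of $g\circ f$ is the correct target notion.

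Now fix $x\in S_n$ and an admissible target tolerance $\varepsilon\in\mathbb{Q}^+\cap S_{n+k+m}$; the goal is to exhibit $\delta\in\mathbb{Q}^+\cap S_n$ witnessing Definition~\ref{def:fractal-continuity} for $g\circ f$ at $x$. First I would invoke $\mathcal{F}_{n+k}$-continuity of $g$ at the point $f(x)\in S_{n+k}$ with this $\varepsilon$ (which is admissible, since $(n+k)+m=n+k+m$), obtaining an intermediate tolerance $\eta\in\mathbb{Q}^+\cap S_{n+k}$ such that
\[
\forall z\in S_{n+k}\ \bigl(|f(x)-z|<\eta\ \Rightarrow\ |g(f(x))-g(z)|<\varepsilon\bigr).
\]
Then I would feed $\eta$ into $\mathcal{F}_n$-continuity of $f$ at $x$: the output-side tolerance in Definition~\ref{def:fractal-continuity} for $f\colon S_n\to S_{n+k}$ is required to range precisely over $\mathbb{Q}^+\cap S_{n+k}$, and $\eta$ was produced exactly in that set, so $\eta$ is a legal input. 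This yields $\delta\in\mathbb{Q}^+\cap S_n$ with
\[
\forall y\in S_n\ \bigl(|x-y|<\delta\ \Rightarrow\ |f(x)-f(y)|<\eta\bigr).
\]
Finally I would chain the two implications: for $y\in S_n$ with $|x-y|<\delta$ we get $|f(x)-f(y)|<\eta$, and since $f(y)\in S_{n+k}$ we may take $z=f(y)$ above to conclude $|g(f(x))-g(f(y))|<\varepsilon$, i.e. $|(g\circ f)(x)-(g\circ f)(y)|<\varepsilon$. The witness $\delta$ lies in $\mathbb{Q}^+\cap S_n$, so $g\circ f$ is $\mathcal{F}_n$-continuous at $x$; as $x$ was arbitrary, it is $\mathcal{F}_n$-continuous on $S_n$.

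The argument is essentially routine, and I expect no real obstacle beyond this layer bookkeeping. The one step that genuinely needs the stratified framework rather than classical topology is the matching of quantifier ranges across composable maps: one must know that the tolerance $\eta$ produced by continuity of $g$ automatically lands in $S_{n+k}$, which is exactly the layer over which the output tolerance of $f$ is allowed to range. This is built into Definition~\ref{def:fractal-continuity} and is consistent with the monotonicity $S_n\subseteq S_{n+k}\subseteq S_{n+k+m}$ of the hierarchy recorded in the Key Properties; were that inclusion not available one would need an explicit coercion lemma, but as it stands the chaining goes through verbatim.
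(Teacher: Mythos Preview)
Your argument is correct and matches the paper's own proof essentially step for step: fix $x$ and $\varepsilon$, use continuity of $g$ at $f(x)$ to obtain an intermediate tolerance in $S_{n+k}$, feed that into continuity of $f$ to get $\delta\in S_n$, then chain. Your additional care in verifying that the intermediate tolerance $\eta$ lands in exactly the layer required by Definition~\ref{def:fractal-continuity} for $f$ is a welcome elaboration of what the paper leaves implicit.
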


\begin{proof}[Sketch]
Fix \( x \in S_n \) and \( \varepsilon > 0 \in \mathbb{Q}^+ \cap S_{n+k+m} \). By continuity of \( g \), there exists \( \delta' > 0 \in S_{n+k} \) such that \( |u - v| < \delta' \Rightarrow |g(u) - g(v)| < \varepsilon \). By continuity of \( f \), there exists \( \delta > 0 \in S_n \) such that \( |x - y| < \delta \Rightarrow |f(x) - f(y)| < \delta' \). Then for all \( y \in S_n \) with \( |x - y| < \delta \), we have:
\[
|g(f(x)) - g(f(y))| < \varepsilon,
\]
so \( g \circ f \) is continuous at \( x \).
\end{proof}

\subsection{Uniform Fractal Continuity}
\label{subsec:uniform-fractal-continuity}

While layer-relative continuity captures pointwise behavior, uniform continuity ensures control across the entire domain. In the fractal setting, this requires uniformity within the definability constraints of \( S_n \).

\begin{definition}[Uniform Fractal Continuity]
\label{def:uniform-fractal-continuity}
Let \( f: S_n \to S_{n+k} \) be a function. We say that \( f \) is \emph{uniformly \(\mathcal{F}_n\)-continuous on \( S_n \)} if:
\[
\forall \varepsilon \in \mathbb{Q}^+ \cap S_{n+k}\;
\exists \delta \in \mathbb{Q}^+ \cap S_n\;
\forall x, y \in S_n\; \big(
|x - y| < \delta \Rightarrow |f(x) - f(y)| < \varepsilon
\big).
\]
This strengthens Definition~\ref{def:fractal-continuity} by requiring \(\delta\) to depend only on \(\varepsilon\), not on the evaluation point \(x\).
\end{definition}

\begin{remark}
Unlike in classical analysis, the modulus of continuity \(\delta\) must belong to \(S_n\), while \(\varepsilon\) is drawn from \(S_{n+k}\). This reflects the "upward shift" in definability when evaluating \(f\). Uniform continuity thus expresses not only topological stability, but also syntactic manageability across the entire layer \( S_n \).
\end{remark}

\subsubsection{Definable Compactness and Uniform Continuity}

The classical Heine–Cantor theorem has a fractal analogue. First, we define a stratified notion of compactness:

\begin{definition}[Definably Compact Set]
A set \( K \subseteq S_n \) is \emph{definably compact in \(\mathcal{F}_n\)} if every \( \mathcal{F}_n \)-definable sequence \( (x_k)_{k \in \mathbb{N}} \subseteq K \) has a subsequence that converges (in the classical topology) to a point in \( K \cap S_n \).
\end{definition}

\begin{proposition}
Let \( a, b \in S_n \), and suppose that \( S_n \subseteq [a,b] \) is closed under limits of all \( \mathcal{F}_n \)-definable Cauchy sequences. Then the set \( [a,b] \cap S_n \) is definably compact in \( \mathcal{F}_n \).
\end{proposition}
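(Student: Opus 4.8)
The plan is to run the classical Bolzano--Weierstrass bisection argument while verifying, at every stage, that the objects produced remain \(\mathcal{F}_n\)-definable and that the reals produced lie in \(S_n\). Let \((x_k)_{k\in\mathbb{N}} \subseteq [a,b]\cap S_n\) be an arbitrary \(\mathcal{F}_n\)-definable sequence. First I would construct a nested family of closed intervals \([a_0,b_0] \supseteq [a_1,b_1] \supseteq \cdots\) with \(a_0=a\), \(b_0=b\), \(b_m-a_m=(b-a)2^{-m}\), chosen so that the index set \(N_m := \{\, k : x_k \in [a_m,b_m] \,\}\) is infinite for every \(m\): given \([a_{m-1},b_{m-1}]\), at least one of its two halves contains \(x_k\) for infinitely many \(k\), and we take (say) the leftmost such half. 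Each \(a_m,b_m\) is a dyadic affine combination of \(a\) and \(b\); since \(a,b\in S_n\) and \(S_n\) is closed under rational arithmetic, \(a_m,b_m\in S_n\).

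Second, I would extract the subsequence: set \(k_0=0\) and recursively let \(k_m\) be the least index \(k>k_{m-1}\) with \(x_k\in[a_m,b_m]\), which exists because \(N_m\) is infinite. The map \(m\mapsto k_m\) is obtained by primitive recursion from the decidable (given \(a_m,b_m\in S_n\)) predicates ``\(x_k\in[a_m,b_m]\)'', so it is \(\mathcal{F}_n\)-definable, and \((x_{k_m})_m\) is an \(\mathcal{F}_n\)-definable subsequence of \((x_k)\). Third, I would verify convergence: for \(j\le\ell\), both \(x_{k_j}\) and \(x_{k_\ell}\) lie in \([a_j,b_j]\), hence \(|x_{k_j}-x_{k_\ell}|\le (b-a)2^{-j}\); thus \((x_{k_m})_m\) is Cauchy with the explicit modulus \(\varepsilon\mapsto\lceil \log_2((b-a)/\varepsilon)\rceil\), i.e.\ an \(\mathcal{F}_n\)-definable Cauchy sequence. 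By the standing hypothesis that \(S_n\cap[a,b]\) is closed under limits of \(\mathcal{F}_n\)-definable Cauchy sequences, the limit \(\xi:=\lim_m x_{k_m}\) exists with \(\xi\in S_n\); and from \(a=a_0\le a_m\le x_{k_m}\le b_m\le b_0=b\) we get, in the limit, \(\xi\in[a,b]\). Hence \(\xi\in[a,b]\cap S_n=K\cap S_n\), and since \((x_k)\) was arbitrary, \([a,b]\cap S_n\) is definably compact in \(\mathcal{F}_n\).

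The main obstacle is the definability of the bisection step, i.e.\ of selecting at stage \(m\) a half of \([a_{m-1},b_{m-1}]\) whose index set is infinite. The predicate ``\(N_m\) is infinite'' is \(\Pi^0_2\) in \((x_k)\), and its undecidability is precisely why Bolzano--Weierstrass fails over \(\mathrm{RCA}_0\). The argument therefore needs \(\mathcal{F}_n\) to decide such predicates: arithmetic comprehension suffices, and holds for \(\mathcal{F}_n=\mathrm{ACA}_0\) and stronger, in which case the construction above goes through verbatim. Over strictly weaker base systems one must instead read the closure hypothesis as already packaging a definable Bolzano--Weierstrass principle for \(\mathcal{F}_n\); I would state the proposition's scope with this caveat, noting that it is exactly the comprehension strength of \(\mathcal{F}_n\) — not the metric completeness alone — that drives the extraction of the convergent subsequence.
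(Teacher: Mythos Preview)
The paper states this proposition without proof; it is followed only by an interpretive remark, and the text then moves directly to Theorem~\ref{thm:uniform-continuity-compact}. Your argument is therefore the only detailed proof on offer, and its skeleton is sound: the Bolzano--Weierstrass bisection, the recursive extraction of $(x_{k_m})$, the explicit Cauchy modulus $(b-a)2^{-j}$, and the final appeal to the closure hypothesis all fit together correctly.

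You have also identified the genuine technical crux, which the paper glosses over entirely: the instruction ``choose the half containing infinitely many $x_k$'' requires deciding a $\Pi^0_2$ predicate in the data $(x_k)$, and carrying that out \emph{definably} needs $\mathcal{F}_n$ to prove arithmetic comprehension. Over $\mathrm{RCA}_0$ the construction fails, exactly as you say, and the paper's own example table lists $\mathcal{F}_0 = \mathrm{RCA}_0$ as a legitimate base layer. So the proposition as stated is not uniformly valid across the hierarchy unless one either (i) assumes $\mathcal{F}_n \supseteq \mathrm{ACA}_0$, or (ii) reads the closure-under-definable-Cauchy-limits hypothesis as already encoding a Bolzano--Weierstrass principle. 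Your caveat is not a defect in your argument but a gap in the paper's formulation, and it would strengthen the exposition to make the required comprehension strength explicit in the statement.
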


\begin{remark}
This provides a canonical example of definable compactness: the intersection of a classical closed interval with a layer \( S_n \) that admits internal convergence is compact relative to \( \mathcal{F}_n \), even though \( S_n \) may be sparse or nowhere dense in the classical topology.
\end{remark}

\begin{theorem}[Uniform Continuity on Definably Compact Sets]
\label{thm:uniform-continuity-compact}
Let \( K \subseteq S_n \) be definably compact in \(\mathcal{F}_n\), and let \( f: K \to S_{n+k} \) be \( \mathcal{F}_n \)-continuous. Then \( f \) is uniformly \( \mathcal{F}_n \)-continuous on \( K \).
\end{theorem}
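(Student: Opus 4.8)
The plan is to run the classical Heine--Cantor argument by contradiction, but with constant attention to keeping every auxiliary sequence $\mathcal{F}_n$-definable, so that the hypothesis of definable compactness actually applies. Suppose $f$ is not uniformly $\mathcal{F}_n$-continuous on $K$. Negating Definition~\ref{def:uniform-fractal-continuity} gives a fixed $\varepsilon \in \mathbb{Q}^+ \cap S_{n+k}$ such that for every $\delta \in \mathbb{Q}^+ \cap S_n$ there are $x,y \in K$ with $|x-y|<\delta$ and $|f(x)-f(y)|\ge\varepsilon$. First I would specialize $\delta$ to $1/m$ for $m\ge 1$ (each $1/m$ lies in $\mathbb{Q}^+\cap S_n$, since rationals are $\mathcal{F}_0$-definable and $\mathcal{F}_0\subseteq\mathcal{F}_n$), producing a witness pair for each $m$. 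To convert these pointwise witnesses into a single object, I would use that $S_n$ is countable and $\mathcal{F}_n$-enumerable together with the (implicit) $\mathcal{F}_n$-definability of $f$: the property that a pair $(u,v)$ witnesses the failure of uniform continuity at scale $1/m$ is then $\mathcal{F}_n$-definable, and picking the least such pair in a fixed $\mathcal{F}_n$-definable enumeration of $S_n\times S_n$ yields genuine $\mathcal{F}_n$-definable sequences $(x_m)_m,(y_m)_m\subseteq K$.

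Next I would invoke definable compactness of $K$ on $(x_m)_m$ to extract a subsequence $(x_{m_j})_j$ converging, in the classical topology, to some $z\in K\cap S_n$; since $|x_{m_j}-y_{m_j}|<1/m_j\to 0$, the companion subsequence $(y_{m_j})_j$ converges to the same $z$. Because $z\in S_n$, $\mathcal{F}_n$-continuity of $f$ at $z$ is available: applied to $\varepsilon/2\in\mathbb{Q}^+\cap S_{n+k}$ (closure of $S_{n+k}$ under rational arithmetic), it yields $\delta'\in\mathbb{Q}^+\cap S_n$ with $|z-w|<\delta'\Rightarrow|f(z)-f(w)|<\varepsilon/2$ for $w\in K$. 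For $j$ large, classical convergence gives $|x_{m_j}-z|<\delta'$ and $|y_{m_j}-z|<\delta'$, so
\[
|f(x_{m_j})-f(y_{m_j})|\le|f(x_{m_j})-f(z)|+|f(z)-f(y_{m_j})|<\varepsilon,
\]
contradicting $|f(x_{m_j})-f(y_{m_j})|\ge\varepsilon$. Hence $f$ is uniformly $\mathcal{F}_n$-continuous on $K$.

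The hard part will be precisely the definability bookkeeping in the first paragraph: the bare negation of uniform continuity delivers only unselected witnesses, and assembling them into an $\mathcal{F}_n$-definable sequence needs either the controlled choice principles already built into the hierarchy, or---the route I prefer, since it avoids any extra assumption---the device of taking the least witness in a fixed $\mathcal{F}_n$-enumeration of the countable set $S_n$; one must also confirm that \emph{$\mathcal{F}_n$-continuous function} is meant to entail \emph{$\mathcal{F}_n$-definable function}, so that this selection predicate is legitimately $\mathcal{F}_n$-expressible. Everything else is routine: the argument is classical (matching the proof style of Theorem~\ref{thm:sn-topology-properties}), definable compactness is a sequential notion so no further constructive subtlety intrudes, and the shift parameter $k$ enters only as layer bookkeeping, forcing $\varepsilon,\varepsilon/2$ to be read in $S_{n+k}$ and $\delta,\delta'$ in $S_n$ in accordance with Definitions~\ref{def:fractal-continuity} and~\ref{def:uniform-fractal-continuity}.
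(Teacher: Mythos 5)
Your proposal is correct and follows essentially the same route as the paper's proof: the same contradiction setup with witness pairs at scales $1/m$, extraction of a convergent subsequence via definable compactness, and the $\varepsilon/2$ triangle-inequality argument at the limit point $z \in K \cap S_n$. The only difference is that you make explicit the definability bookkeeping (least-witness selection in an $\mathcal{F}_n$-enumeration of $S_n \times S_n$) needed to justify that the sequences $(x_m),(y_m)$ are $\mathcal{F}_n$-definable, a step the paper's proof passes over with the phrase ``we can define two sequences''; this is a legitimate refinement rather than a different argument.
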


\begin{proof}
Suppose, toward a contradiction, that \( f \) is not uniformly \( \mathcal{F}_n \)-continuous. Then there exists \( \varepsilon \in \mathbb{Q}^+ \cap S_{n+k} \) such that for all \( \delta \in \mathbb{Q}^+ \cap S_n \), there exist \( x, y \in K \cap S_n \) with
\[
|x - y| < \delta \quad \text{and} \quad |f(x) - f(y)| \geq \varepsilon.
\]
In particular, we can define two sequences \( (x_m), (y_m) \subseteq K \cap S_n \) such that
\[
|x_m - y_m| < \tfrac{1}{m}, \quad |f(x_m) - f(y_m)| \geq \varepsilon.
\]
By definable compactness of \( K \), the sequence \( (x_m) \) has a subsequence \( (x_{m_j}) \) converging to some point \( x \in K \cap S_n \). Since \( |x_{m_j} - y_{m_j}| \to 0 \), we also have \( y_{m_j} \to x \).

By pointwise \( \mathcal{F}_n \)-continuity at \( x \), for the fixed \( \varepsilon \), there exists \( \delta' \in \mathbb{Q}^+ \cap S_n \) such that for all \( z \in S_n \cap (x - \delta', x + \delta') \), we have
\[
|f(z) - f(x)| < \varepsilon/2.
\]
For large enough \( j \), both \( x_{m_j} \) and \( y_{m_j} \) lie within this \( \delta' \)-neighborhood, so
\[
|f(x_{m_j}) - f(y_{m_j})| \leq |f(x_{m_j}) - f(x)| + |f(x) - f(y_{m_j})| < \varepsilon,
\]
contradicting the assumption \( |f(x_{m_j}) - f(y_{m_j})| \geq \varepsilon \). Hence, \( f \) is uniformly \( \mathcal{F}_n \)-continuous on \( K \).
\end{proof}

\begin{corollary}[Continuous Functions on Definable Intervals]
Let \( [a,b] \cap S_n \subseteq \mathbb{R} \) be definably compact in \( \mathcal{F}_n \), and let \( f: [a,b] \cap S_n \to S_{n+k} \) be \( \mathcal{F}_n \)-continuous. Then:
\begin{enumerate}
    \item \( f \) is uniformly \( \mathcal{F}_n \)-continuous,
    \item \( f \) is bounded in \( S_{n+k} \),
    \item \( f \) attains its maximum and minimum values on \( [a,b] \cap S_n \), and the extrema lie in \( S_{n+k} \).
\end{enumerate}
\end{corollary}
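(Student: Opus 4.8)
The plan is to derive all three conclusions from the preceding results, chiefly Theorem~\ref{thm:uniform-continuity-compact}, with definable compactness of \( [a,b] \cap S_n \) supplied by the Proposition (under the stated closure hypothesis) and continuity supplied as a hypothesis. Part~(1) is immediate: \( [a,b] \cap S_n \) is definably compact in \( \mathcal{F}_n \) by assumption, and \( f \) is \( \mathcal{F}_n \)-continuous, so Theorem~\ref{thm:uniform-continuity-compact} applies verbatim and yields uniform \( \mathcal{F}_n \)-continuity. The only thing to check is that the hypotheses of the corollary are literally those of the theorem, which they are.

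For part~(2), I would argue by contradiction in the style of the theorem's proof. If \( f \) is unbounded, then for each \( m \in \mathbb{N} \) we can \( \mathcal{F}_n \)-definably select \( x_m \in [a,b] \cap S_n \) with \( |f(x_m)| \geq m \); the selection is definable because \( [a,b] \cap S_n \) and \( f \) are, and a least-witness search over the countable layer produces such an \( x_m \). By definable compactness, pass to a subsequence \( x_{m_j} \to x \in [a,b] \cap S_n \). Pointwise \( \mathcal{F}_n \)-continuity at \( x \) gives \( \delta' \in \mathbb{Q}^+ \cap S_n \) with \( |f(z) - f(x)| < 1 \) for all \( z \in S_n \cap (x-\delta', x+\delta') \); for large \( j \) this forces \( |f(x_{m_j})| < |f(x)| + 1 \), contradicting \( |f(x_{m_j})| \geq m_j \to \infty \). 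Hence \( f \) is bounded, and since the range of \( f \) lies in \( S_{n+k} \), the bound is witnessed there.

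For part~(3), let \( M := \sup\{ f(x) : x \in [a,b] \cap S_n \} \), finite by part~(2). Choose a definable sequence \( (x_m) \subseteq [a,b] \cap S_n \) with \( f(x_m) \to M \); by definable compactness extract \( x_{m_j} \to x^\ast \in [a,b] \cap S_n \), and by pointwise \( \mathcal{F}_n \)-continuity at \( x^\ast \) conclude \( f(x^\ast) = \lim_j f(x_{m_j}) = M \), so the supremum is attained at a point of \( [a,b] \cap S_n \); the value \( M = f(x^\ast) \) therefore lies in \( S_{n+k} \). The argument for the minimum is symmetric, applied to \( -f \) or directly with an infimum.

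The main obstacle I anticipate is not the topological limiting arguments --- those are routine transcriptions of the classical proofs restricted to the layer --- but the \emph{definability bookkeeping} in parts~(2) and~(3): one must ensure that the sequences \( (x_m) \) are genuinely \( \mathcal{F}_n \)-definable, since definable compactness only applies to \( \mathcal{F}_n \)-definable sequences, not to arbitrary ones. I would address this by noting that \( f \), its domain, and the rational threshold conditions (\( |f(x_m)| \geq m \) or \( f(x_m) > M - 1/m \)) are all \( \mathcal{F}_n \)-expressible, so a least-index search through a fixed \( \mathcal{F}_n \)-definable enumeration of \( S_n \cap [a,b] \) yields a definable choice function. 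A secondary subtlety is that \( M \) itself must be available to \( \mathcal{F}_n \) (or at least to \( \mathcal{F}_{n+k} \)) for the sequence in part~(3) to be definable; here I would either assume \( \mathcal{F}_n \) has enough comprehension to form this supremum of a definable bounded set, or phrase part~(3) relative to \( S_{n+k} \), consistent with the remark following Definition~\ref{def:uniform-fractal-continuity} that extremal values naturally land in the higher layer.
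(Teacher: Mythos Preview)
Your proposal is correct and, in fact, more detailed than what the paper provides: the paper states this corollary without proof, offering only a Remark that it ``parallels classical results such as the extreme value theorem'' under the syntactic constraints of the layered framework. Your derivation of part~(1) directly from Theorem~\ref{thm:uniform-continuity-compact} and your contradiction/subsequence arguments for parts~(2) and~(3) are exactly the classical template the paper has in mind, transcribed into the \( S_n \)-setting; your explicit attention to the \( \mathcal{F}_n \)-definability of the witnessing sequences is the right concern to raise and is handled appropriately.
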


\begin{remark}
The conclusion of the corollary parallels classical results such as the extreme value theorem, but under strictly constructive and syntactic constraints. This aligns with the spirit of Bishop-style constructive analysis~\cite{Bishop1967}, where continuity and compactness are sufficient to guarantee boundedness and extremal values—yet here, the definability layers \( S_n \) and \( S_{n+k} \) make the dependency on formal expressibility explicit. The function is not only continuous in the analytic sense, but its behavior is fully trackable within the definitional power of \( \mathcal{F}_n \).
\end{remark}

\section{Continuity in the Fractal Hierarchy}
\label{sec:continuity-hierarchy}

The abstract framework of fractal continuity introduced in Section~\ref{sec:fractal-continuity} provides a layer-relative definition of continuity between definability levels \( S_n \to S_{n+k} \). In this section, we examine how this notion manifests concretely within the real interval \( [a,b] \), and how classical examples behave when interpreted through the lens of the fractal definability hierarchy.

\subsection{Examples of Fractal Continuity}

We begin with familiar functions and examine their continuity relative to the stratified domains \( S_n \).

\begin{example}[Polynomial and Rational Functions]
Any polynomial \( f(x) = a_0 + a_1 x + \cdots + a_d x^d \) with coefficients \( a_i \in S_n \cap \mathbb{Q} \) is \( \mathcal{F}_n \)-continuous on \( S_n \), and in fact uniformly \( \mathcal{F}_n \)-continuous on any definably compact subset of \( S_n \). Rational functions \( f(x) = p(x)/q(x) \) are similarly continuous wherever \( q(x) \neq 0 \) within \( S_n \).
\end{example}

\begin{example}[Piecewise Step Functions]
Let \( f(x) = \begin{cases}
0 & \text{if } x \in S_n \cap [0,\frac{1}{2}) \\
1 & \text{if } x \in S_n \cap [\frac{1}{2},1]
\end{cases} \). Then \( f \) is discontinuous at \( x = \frac{1}{2} \in S_n \), and the discontinuity is visible within \( \mathcal{F}_n \). This makes \( f \) a definably discontinuous function. If the jump point lies outside \( S_n \), say at \( \sqrt{2}/2 \notin S_n \), then the discontinuity is undetectable at level \( n \).
\end{example}

\begin{example}[Dirichlet-type Functions]
Let \( f(x) = \mathbf{1}_{\mathbb{Q}}(x) \) on \( S_n \)~\cite{BridgesRichman1987}. Since \( \mathbb{Q} \cap S_n \) is dense in \( S_n \), and its complement is also dense, \( f \) is nowhere \( \mathcal{F}_n \)-continuous. This reflects the classical behavior of Dirichlet's function, but now localized to the syntactically representable points.
\end{example}

\subsection{Layer Jump Discontinuities}

Fractally continuous functions \( f: S_n \to S_{n+k} \) may exhibit definability shifts that manifest as discontinuities from the perspective of a weaker system.

\begin{definition}[Definability Jump Discontinuity]
Let \( f: S_n \to S_{n+k} \) be a function. We say that \( f \) has a \emph{definability jump discontinuity at \( x \in S_n \)} if the function is not \( \mathcal{F}_n \)-continuous at \( x \), but becomes continuous at higher level \( \mathcal{F}_{n+j} \) for some \( j > 0 \).
\end{definition}

\begin{example}[Layer-Dependent Continuity]
Let \( \theta \in S_{n+1} \setminus S_n \), and define \( f \colon S_n \cup \{\theta\} \to S^\infty \) by:
\[
f(x) = 
\begin{cases}
0 & \text{if } x \in S_n, \\
1 & \text{if } x = \theta.
\end{cases}
\]
Then:
\begin{itemize}
    \item \( f|_{S_n} \equiv 0 \) is trivially \( \mathcal{F}_n \)-continuous;
    \item However, from the perspective of \( \mathcal{F}_{n+1} \), the function has a visible jump at \( \theta \);
    \item The jump is invisible to any system restricted to \( S_n \), since \( \theta \notin S_n \).
\end{itemize}

\noindent
\textit{Schematic representation:}

\begin{center}
\parbox{0.6\linewidth}{%
  \raggedright
  \texttt{%
S\_n:\ \ \ \ 0\_\_\_\_\_\_\_\_|\_\_\_\_\_\_\_1 \\
S\_{n+1}:\ \ 0\_\_\_\_\_\_\_\_\text{\(\theta\)}\_\_\_\_\_\_\_1 \quad (\,\(\theta = \frac{\sqrt{2}}{2} \notin S\_n\)\,)
  }%
}
\end{center}

This illustrates how continuity becomes layer-relative: the function is perfectly continuous on \( S_n \), but discontinuous once \( \theta \) enters the definable universe.
\end{example}

These jumps do not violate classical continuity, but reflect the limitations of expressing continuity within a fixed syntactic framework. For example, functions involving convergent series with non-primitive moduli (e.g., Chaitin's constant) may appear discontinuous at lower levels due to unrepresentable convergence data.

To illustrate how classical functions behave in the stratified framework, we provide an overview of their definability and continuity levels across \( \mathcal{F}_n \).

\begin{table}[ht]
\centering
\renewcommand{\arraystretch}{1.4}
\begin{adjustbox}{max width=\textwidth}
\begin{tabular}{@{} p{4.5cm} c c p{9.0cm} @{}}
\toprule
\textbf{Function \( f(x) \)} & \textbf{Level \( n \)} & \textbf{Target \( S_{n+k} \)} & \textbf{Notes} \\
\midrule
\( x^2 \) & 0 & 0 & Fully \( \mathcal{F}_0 \)-definable; algebraic closure \\
\( \sin(x) \) & 1 & 2 & Requires Taylor series or bounded approximation~\cite{Bridges1986} \\
\( \sum_{k=0}^N a_k x^k \), \( a_k \in S_1 \) & 1 & 1 & All coefficients definable at level 1; smooth \\
\( \sum_{k=0}^\infty \frac{x^k}{k!} \) & 2 & 3 & Exponential function; uniform convergence provable in \( \mathcal{F}_2 \), pointwise in \( \mathcal{F}_3 \)~\cite{Bishop1967} \\
\( \lfloor x \rfloor \) & 1 & 1 & Discontinuous at integers; piecewise constant \\
\( \text{sign}(x) \) & 1 & 1 & Discontinuous at zero, otherwise \( \mathcal{F}_1 \)-continuous \\
\( \mathbf{1}_{\mathbb{Q}}(x) \) & \textemdash{} & \textemdash{} & Discontinuous everywhere; not \( \mathcal{F}_n \)-continuous on any layer \\
\( \mathbf{1}_{\{ \theta \}}(x) \), \( \theta \in S_{n+1} \setminus S_n \) & n & \( n+1 \) & Layer jump at \( \theta \); discontinuity invisible at level \( n \) \\
Heaviside \( H(x) = \mathbf{1}_{x \geq 0} \) & 1 & 1 & Discontinuous at 0; well-defined in \( \mathcal{F}_1 \) \\
Weierstrass \( W_n(x) = \) & n & \( n+2 \) & Continuous nowhere; definable within~\cite{Weierstrass1872} \\
\(\sum a^k \cos(b^k \pi x)\), \( a,b \in S_n \) & & & layered system but outside \( C^0_n \) \\
\( f(x) = \Omega \) & \textemdash{} & \textemdash{} & Chaitin’s constant; not definable on any \( S_n \) \\
\bottomrule
\end{tabular}%
\end{adjustbox}
\caption{Typical functions and their fractal continuity levels across definability layers}
\label{tab:fractal-continuity-levels}
\end{table}

\subsubsection{\texorpdfstring{Definability Classes \( C^k_n \), \( C^0_n \), \( C^\infty_n \)}{Definability Classes Cⁿᵏ, Cⁿ⁰, Cⁿ∞}}

In analogy with classical smoothness classes, we define a hierarchy of definability-based function classes over the domain \( S_n \):

\begin{definition}[Stratified Smoothness Classes]
Let \( f: S_n \to S_m \) be a function. We define:

\begin{itemize}
    \item \( f \in C^0_n \) if \( f \) is \( \mathcal{F}_n \)-continuous on \( S_n \);
    \item \( f \in C^1_n \) if the derivative \( D_n f(x) \) exists for all \( x \in S_n \), and the map \( x \mapsto D_n f(x) \) is \( \mathcal{F}_n \)-continuous;
    \item More generally, \( f \in C^k_n \) if all successive derivatives \( D_n^j f(x) \) exist and are \( \mathcal{F}_n \)-continuous for all \( 1 \le j \le k \);
    \item \( f \in C^\infty_n \) if \( f \in C^k_n \) for all \( k \in \mathbb{N} \).
\end{itemize}
\end{definition}

These classes form a nested tower:
\[
C^\infty_n \subseteq \cdots \subseteq C^2_n \subseteq C^1_n \subseteq C^0_n.
\]

\begin{remark}
Each \( C^k_n \) class is internal to the formal system \( \mathcal{F}_n \), and closure under composition or integration may require moving to \( \mathcal{F}_{n+1} \) or higher.
\end{remark}

\paragraph{Examples from Table~\ref{tab:fractal-continuity-levels}:}
\begin{itemize}
    \item \( f(x) = x^2 \in C^\infty_0 \)
    \item \( f(x) = \sin(x) \in C^\infty_1 \), requires analytic definitions at level 1 or 2
    \item \( f(x) = \lfloor x \rfloor \notin C^0_n \) for any \( n \), due to discontinuity at integers
    \item \( f(x) = \mathbf{1}_{\mathbb{Q}}(x) \notin C^0_n \) for any \( n \)
    \item \( f(x) = \sum a^k \cos(b^k \pi x) \in C^0_n \setminus C^1_n \) (Weierstrass type)
\end{itemize}

These stratified classes provide a natural refinement of continuity and differentiability notions within the fractal hierarchy and align with classical intuition when restricted to constructively accessible fragments.

\subsection{Continuity and Layer Topologies}

Recall from Section~\ref{sec:topology} that each \( S_n \) induces its own topology \( \mathcal{O}_n \), generated by \( S_n \)-definable neighborhoods. Fractal continuity aligns naturally with this structure:

\begin{proposition}
A function \( f: S_n \to S_{n+k} \) is \( \mathcal{F}_n \)-continuous if and only if it is continuous in the \( \mathcal{O}_n \)-topology.
\end{proposition}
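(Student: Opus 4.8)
The plan is to prove this by the standard translation between the $\varepsilon$--$\delta$ formulation of continuity and the topological formulation ``preimages of open sets are open,'' while checking that all quantifier witnesses stay inside the prescribed definability layers. Throughout I treat $f$ as a map between the spaces $(S_n \cap [a,b], \mathcal{O}_n)$ and $(S_{n+k} \cap [a,b], \mathcal{O}_{n+k})$. The key structural observation that makes the equivalence hold \emph{on the nose}, with no layer shift, is that the radius set $\mathbb{Q}^+ \cap S_n$ appearing in Definition~\ref{def:fractal-continuity} is literally the same set that indexes the basic balls $B^n_\varepsilon(\cdot)$ generating $\mathcal{O}_n$, and likewise $\mathbb{Q}^+ \cap S_{n+k}$ matches the radius set for $\mathcal{O}_{n+k}$; hence an admissible $\delta$ in one formulation is an admissible radius in the other, and dually for $\varepsilon$.

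For the forward direction, assume $f$ is $\mathcal{F}_n$-continuous and let $V \in \mathcal{O}_{n+k}$. Since preimage commutes with arbitrary unions and every $\mathcal{O}_{n+k}$-open set is a union of basic balls, it suffices to verify the $\mathcal{O}_n$-openness condition for $f^{-1}(V)$ pointwise. Fix $x \in f^{-1}(V) \subseteq S_n$; then $f(x) \in S_{n+k}$, and by $\mathcal{O}_{n+k}$-openness of $V$ there is $\varepsilon \in \mathbb{Q}^+ \cap S_{n+k}$ with $B^{n+k}_\varepsilon(f(x)) \subseteq V$. Applying $\mathcal{F}_n$-continuity of $f$ at $x$ to this $\varepsilon$ yields $\delta \in \mathbb{Q}^+ \cap S_n$ such that $y \in S_n$ and $|x-y| < \delta$ imply $|f(x) - f(y)| < \varepsilon$; unwinding the ball notation, this says $f\big(B^n_\delta(x)\big) \subseteq B^{n+k}_\varepsilon(f(x)) \subseteq V$, hence $B^n_\delta(x) \subseteq f^{-1}(V)$. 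So $f^{-1}(V) \in \mathcal{O}_n$.

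For the converse, assume $f^{-1}(W) \in \mathcal{O}_n$ for every $W \in \mathcal{O}_{n+k}$, and fix $x \in S_n$ together with $\varepsilon \in \mathbb{Q}^+ \cap S_{n+k}$. The ball $W := B^{n+k}_\varepsilon(f(x))$ is a basic element of $\mathcal{O}_{n+k}$ containing $f(x)$, so $x \in f^{-1}(W) \in \mathcal{O}_n$. The defining property of $\mathcal{O}_n$-open sets then furnishes $\delta \in \mathbb{Q}^+ \cap S_n$ with $B^n_\delta(x) \subseteq f^{-1}(W)$, which is exactly the statement that $y \in S_n$ and $|x-y| < \delta$ imply $|f(x) - f(y)| < \varepsilon$. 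Since $x$ and $\varepsilon$ were arbitrary, $f$ is $\mathcal{F}_n$-continuous at every point of $S_n$.

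The argument is essentially routine; the only point demanding genuine attention — the ``main obstacle'' — is the bookkeeping of definability levels described above, ensuring that the witness radii produced in each direction are admissible in the other formulation. I would also flag one conceptual point worth a sentence in the write-up: unlike the non-metrizability argument in Theorem~\ref{thm:sn-topology-properties}, this equivalence does \emph{not} require $\mathcal{F}_n$ to assemble the family of moduli $\{\delta_x\}_{x \in S_n}$ into a single definable object. The topological openness of a preimage is a pointwise condition, mirroring the pointwise nature of $\mathcal{F}_n$-continuity, so the equivalence holds even in the weakest systems considered here, with no appeal to comprehension over $S_n$-subsets.
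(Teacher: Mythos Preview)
Your proof is correct and complete. The paper states this proposition without proof, so there is no argument to compare against; your approach is the natural one and is exactly what the paper would need were it to supply a proof. The bookkeeping you highlight---that the radius sets $\mathbb{Q}^+ \cap S_n$ and $\mathbb{Q}^+ \cap S_{n+k}$ in Definition~\ref{def:fractal-continuity} coincide with those generating $\mathcal{O}_n$ and $\mathcal{O}_{n+k}$---is indeed the only nontrivial point, and your closing remark distinguishing this pointwise equivalence from the comprehension issues in Theorem~\ref{thm:sn-topology-properties} is a worthwhile clarification.
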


Thus, discontinuities can be reinterpreted topologically as failures of preservation of open sets in \( \mathcal{O}_n \), even if such sets are open in stronger topologies \( \mathcal{O}_{n+1} \), \( \mathcal{O}_\infty \), or classically.

\subsection{Uniform Continuity and Compactness Revisited}

In Section~\ref{subsec:uniform-fractal-continuity}, we proved that \( \mathcal{F}_n \)-continuous functions on definably compact subsets of \( S_n \) are uniformly continuous. This result allows us to reinterpret classical theorems constructively:

\begin{itemize}
    \item The Heine–Cantor theorem becomes a meta-theorem about preservation of uniform moduli across definability layers.
    \item The boundedness and attainability of extrema (classical Weierstrass theorem) follow syntactically for functions whose range lies within \( S_{n+k} \).
    \item Any continuous function \( f: [a,b] \cap S_n \to S^\infty \) that is not layer-bounded must involve definability jumps.
\end{itemize}

\subsection{Stratified Continuity Classes}

We may classify functions by the complexity of their continuity behavior:

\begin{itemize}
    \item \( C^0_n \): all \( \mathcal{F}_n \)-continuous functions \( f: S_n \to S_{n+k} \).
    \item \( C^0_{\leq n} \): functions continuous on \( S_m \) for all \( m \leq n \).
    \item \( C^0_\infty \): functions continuous on all \( S_n \) (eventually constant under definitional extensions).
\end{itemize}

These stratified continuity classes form a refinement of classical function spaces, sensitive not just to topological properties, but to the formal mechanisms used to define and verify them.

\subsection{Towards Fractal Calculus}

The continuity notions developed here serve as the foundation for differentiability and integration in the stratified setting. In the next section, we define the derivative of a function \( f: S_n \to S_{n+k} \) as a limit within the definability closure \( S_{n+\ell} \), and introduce layered Taylor expansions, error bounds, and integration procedures with explicit syntactic tracking.

\section{Fractal Calculus: Differentiation and Integration}
\label{sec:fractal-calculus}

In this section, we introduce differentiation and integration in the framework of fractal countability. Classical notions are adapted to operate within definability layers \( S_n \), reflecting both syntactic and analytic constraints.

\subsection{Fractal Differentiation}

\begin{definition}[Fractal Derivative]
Let \( f: S_n \to S_{n+k} \) be a function. The \emph{fractal derivative} at \( x \in S_n \) is defined as
\[
f'_n(x) := \lim_{\substack{y \to x \\ y \in S_{n+1}}} \frac{f(y) - f(x)}{y - x},
\]
provided the limit exists in \( S_{n+k+1} \).
\end{definition}

\begin{remark}
The use of \( y \in S_{n+1} \) ensures that small perturbations around \( x \) are drawn from a strictly stronger system, allowing finer approximations than \( S_n \) itself provides. The derivative thus belongs to a higher definitional layer.
\end{remark}

\subsubsection{Illustrative Examples}

\begin{example}[Polynomial]
Let \( f(x) = x^2 \) on \( S_1 \), where \( x \in S_1 \). To compute the fractal derivative:
\[
f'_1(x) = \lim_{\substack{y \to x \\ y \in S_2}} \frac{y^2 - x^2}{y - x} = \lim_{\substack{y \to x \\ y \in S_2}} (y + x) = 2x \in S_2.
\]
The derivative exists and is definable at level \( S_2 \), reflecting one-layer complexity increase.
\end{example}

\begin{example}[Non-Differentiable Case: Absolute Value]
Let \( f(x) = |x| \) defined on \( S_1 \). Consider the behavior at \( x = 0 \):
\begin{itemize}
    \item If \( S_1 \) contains only non-negative reals (e.g., dyadics with \( x \geq 0 \)), then all approximants from \( S_2 \) with \( y \to 0 \) also satisfy \( y \geq 0 \), so the derivative tends to \( +1 \).
    \item If \( S_2 \) introduces negative reals, then approaching from left and right yields limits \( -1 \) and \( +1 \), respectively. Hence the derivative at \( 0 \) does not exist in \( S_2 \).
\end{itemize}
Thus, \( f \) fails to be fractally differentiable at \( 0 \).
\end{example}

\begin{example}[Power Function with Rational Exponent]
Let \( f(x) = x^\alpha \), where \( \alpha = \frac{1}{2} \), on \( S_1 \cap [0,\infty) \). Then:
\[
f'_1(x) = \lim_{\substack{y \to x \\ y \in S_2}} \frac{y^{1/2} - x^{1/2}}{y - x} = \frac{1}{2}x^{-1/2} \in S_2.
\]
This derivative exists for all \( x > 0 \in S_1 \), but is undefined at \( x = 0 \) due to divergence.
\end{example}

\begin{example}[Piecewise Function]
Let
\[
f(x) = \begin{cases}
x^2 & \text{if } x < 0, \\
x & \text{if } x \geq 0
\end{cases}
\]
on \( S_1 \). Then \( f \) is continuous on \( S_1 \), and the left and right \( S_2 \)-derivatives at \( x = 0 \) are:
\[
\lim_{\substack{y \to 0^- \\ y \in S_2}} \frac{f(y)-f(0)}{y} = 0, \quad
\lim_{\substack{y \to 0^+ \\ y \in S_2}} \frac{f(y)-f(0)}{y} = 1.
\]
Thus, \( f'_1(0) \) does not exist, illustrating a jump in the fractal derivative.
\end{example}

\begin{example}[Weierstrass-Type Function]
Consider the function
\[
f_n(x) = \sum_{k=0}^n a^k \cos(b^k \pi x),
\]
where \( a, b \in S_1 \) and \( 0 < a < 1 \), \( b \in \mathbb{N} \), \( ab < 1 \). Each term is \( S_1 \)-definable, and the finite sum is therefore \( S_n \)-definable.

For fixed \( n \), the function \( f_n \) is continuous and differentiable on \( S_1 \), with
\[
f_n'(x) = -\sum_{k=0}^n a^k b^k \pi \sin(b^k \pi x),
\]
which is also an \( S_n \)-definable function.

While the full infinite Weierstrass function
\[
f(x) = \sum_{k=0}^\infty a^k \cos(b^k \pi x)
\]
is classically continuous but nowhere differentiable, the truncated version \( f_n \) becomes differentiable on \( S_1 \), provided that \( ab < 1 \) and all terms are constructively representable. The fractal nondifferentiability of the limit function does not arise at finite definitional layers.
\end{example}

\begin{remark}[Motivation]
The following theorem illustrates a key phenomenon of fractal calculus: while each finite approximation \( f_n \) remains differentiable within its definability layer \( S_n \), the limit \( f = \lim f_n \) may lose differentiability due to a \emph{syntactic resonance} between the exponential growth \( b^k \) and the damping factor \( a^k \).

This reflects the classical Weierstrass function's behavior, but with a refined threshold \( ab = 1/2 \) arising from \( S_n \)-computability constraints. Specifically:
\begin{itemize}
    \item If \( ab < 1/2 \), the derivatives \( f_n' \) converge uniformly, preserving differentiability in the fractal limit \( S^\infty \).
    \item If \( 1/2 \leq ab < 1 \), the derivatives diverge \emph{within every fixed \( S_n \)}, despite uniform convergence of \( f_n \to f \).
\end{itemize}

This collapse reveals a fundamental trade-off: infinite limits may preserve continuity but not differentiability in stratified constructive systems.
\end{remark}

\begin{theorem}[Fractal Breakdown of Differentiability]
Let \( f_n(x) = \sum_{k=0}^n a^k \cos(b^k \pi x) \), where \( a \in S_1 \cap (0,1) \), \( b \in \mathbb{N} \setminus S_1 \) is odd, and \( 1 < b < 1/a \) (so \( ab < 1 \)).

Then:
\begin{enumerate}
    \item Each \( f_n \) is \( S_1 \)-differentiable with derivative
    \[
    f_n'(x) = -\pi \sum_{k=0}^n a^k b^k \sin(b^k \pi x) \in S_2.
    \]
    \item \( f_n \rightrightarrows f \) uniformly on the classical domain \( \mathbb{R} \), even though differentiability fails in every definability layer \( S_n \).
    \item For all \( x \in S_1 \), the sequence of derivatives diverges:
    \[
    \lim_{n \to \infty} |f_n'(x)| = +\infty \quad \text{(in } S_2\text{)}.
    \]
\end{enumerate}

Moreover, if \( b \in S_1 \), then \( f \) becomes \( S_2 \)-differentiable if and only if \( ab < 1/2 \). For \( 1/2 \le ab < 1 \), the derivative sequence diverges despite all terms being definable.
\end{theorem}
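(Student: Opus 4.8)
The plan is to establish the three enumerated assertions in order, then the \enquote{moreover} dichotomy, with the real work concentrated in the last two. For \textbf{(1)}, each $f_n$ is a finite $S_1$-linear combination of the functions $g_c\colon x\mapsto\cos(c\pi x)$ with $c=b^k$, so I would first record a sum rule for the fractal derivative: since its defining limit is taken in $S_{n+k+1}$, which is closed under the arithmetic of $\mathcal{F}_{n+k+1}$, the limit of a finite sum of difference quotients is the sum of the limits. Then I would settle the base case via the prosthaphaeresis identity
\[
\cos(c\pi y)-\cos(c\pi x)=-2\sin\!\Big(\tfrac{c\pi(y+x)}{2}\Big)\sin\!\Big(\tfrac{c\pi(y-x)}{2}\Big),
\]
which gives $\frac{g_c(y)-g_c(x)}{y-x}=-c\pi\sin\!\big(\tfrac{c\pi(y+x)}{2}\big)\cdot\frac{\sin(c\pi(y-x)/2)}{c\pi(y-x)/2}$; letting $y\to x$ through $S_2$ sends the first factor to $-c\pi\sin(c\pi x)$ and the second to $1$. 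Summing over $k$ yields $f_n'(x)=-\pi\sum_{k=0}^n a^kb^k\sin(b^k\pi x)$, and a layer audit places the values $\sin(b^k\pi x)$ and the whole finite sum in $S_2$ (consistent with the $1\to 2$ placement of $\sin$ in Table~\ref{tab:fractal-continuity-levels}).

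Assertion \textbf{(2)} is the routine part: $\sup_{x\in\mathbb{R}}|f(x)-f_n(x)|\le\sum_{k=n+1}^{\infty}a^k=\dfrac{a^{n+1}}{1-a}\to 0$ since $0<a<1$, i.e.\ the Weierstrass $M$-test with $M_k=a^k$. Hence $f_n\rightrightarrows f$ on all of $\mathbb{R}$ and $f$ is classically continuous, \emph{regardless} of what the derivative sequence does --- which is exactly the tension that \textbf{(3)} is meant to expose.

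For \textbf{(3)} and the dichotomy I would adapt the classical Weierstrass/Hardy lower-bound argument for lacunary series to $S_1$. Because $b$ is an odd integer, for any scale $N$ one can choose a rational $x$ (of the form $x=\frac{2m+1}{2b^{N}}$ plus a correction, hence in $S_0\subseteq S_1$) so that $b^k\pi x$ for $k$ in a block $N\le k\le N+j$ is driven toward odd multiples of $\tfrac{\pi}{2}$ with a common sign, forcing $|\sin(b^k\pi x)|\ge c_0>0$ with aligned signs across the block; the block then contributes $\gtrsim (ab)^N\cdot(\text{aligned count})$ to $f_{N+j}'(x)-f_{N-1}'(x)$, while head and tail terms are bounded by $\pi\sum(ab)^k$. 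Comparing this block growth with the convergence data $\mathcal{F}_2$ can supply for $(f_n)$ --- a modulus $m(\varepsilon)\approx\log_{1/a}\!\big(\tfrac{1}{(1-a)\varepsilon}\big)$ from (2) --- is what isolates the balance point: when $ab<\tfrac12$ the weights $a^kb^k$ decay fast enough relative to the $f_n$-modulus that the $f_n'$ share an $\mathcal{F}_2$-modulus and $f$ is $S_2$-differentiable with $f'=-\pi\sum a^kb^k\sin(b^k\pi x)$; when $\tfrac12\le ab<1$ the block construction defeats every candidate modulus, so $(f_n')$ has no $\mathcal{F}_2$-effective limit on $S_1$.

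The main obstacle is that the elementary estimate $|f_n'(x)|\le\pi\sum_{k=0}^{n}(ab)^k\le\dfrac{\pi}{1-ab}$ shows $\big(|f_n'(x)|\big)_n$ is \emph{bounded} for all $ab<1$, so \enquote{$\lim_n|f_n'(x)|=+\infty$} cannot be read as a claim about magnitudes; the defensible content is \emph{syntactic} --- that $(f_n')_n$ admits no $\mathcal{F}_2$-definable modulus of Cauchy convergence on $S_1$ once $ab\ge\tfrac12$, even though $(f_n)_n$ does, so $f'$, a perfectly good classical function, escapes $S_2$ and only reappears in some $S_{2+\ell}$. Making this rigorous forces one to (i) fix precisely which convergence data $\mathcal{F}_2$ may quantify over, and (ii) prove a quantitative lacunary sign-alignment lemma --- in effect a constructive equidistribution statement for $b^kx\bmod 1$ with explicit rate --- that produces the required $S_1$-points uniformly in the putative modulus and, crucially, shows the threshold $ab=\tfrac12$ is sharp rather than an artifact of the bound. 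Step (ii) is where the real difficulty lies.
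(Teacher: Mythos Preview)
Your treatment of (1) and (2) matches the paper's sketch closely: the paper also appeals to term-by-term differentiation of a finite sum with $S_1$-parameters to place $f_n'$ in $S_2$, and invokes the Weierstrass $M$-test with $M_k=a^k$ for uniform convergence. Your prosthaphaeresis computation is more explicit than the paper's one-line remark, but the content is the same.

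For (3) and the \enquote{moreover} clause, you diverge from the paper in a significant way, and your divergence is well-founded. The paper's sketch asserts that the derivative sequence \enquote{grows without bound for any fixed rational $x$, since $\sin(b^k\pi x)$ oscillates densely and the coefficients $a^kb^k$ are unbounded}. But under the stated hypothesis $ab<1$, the coefficients $(ab)^k$ are \emph{summable}, so $|f_n'(x)|\le\pi\sum_{k\ge 0}(ab)^k=\pi/(1-ab)$ for every $x$ --- exactly the bound you write down. The paper's argument for (3) therefore rests on a claim that is false under its own hypotheses, and it does not address this.

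Your proposal to reinterpret (3) syntactically --- as the absence of an $\mathcal{F}_2$-definable Cauchy modulus for $(f_n')$ rather than literal magnitude blow-up --- is not what the paper does; the paper takes the $+\infty$ at face value. Your route is more defensible, but as you acknowledge, it requires fixing what convergence data $\mathcal{F}_2$ can quantify over and proving a quantitative sign-alignment lemma that pins the threshold at $ab=1/2$. Neither ingredient appears in the paper, and the sharpness of $1/2$ (as opposed to the classical Hardy threshold $ab\ge 1$ for nowhere-differentiability) is not justified there either. So: your plan for (1)--(2) coincides with the paper; your plan for (3) is a genuine attempt to repair a claim the paper's own sketch does not adequately support.
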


\begin{proof}[Sketch]
Each partial sum \( f_n(x) = \sum_{k=0}^n a^k \cos(b^k \pi x) \) is a finite linear combination of elementary functions with parameters in \( S_1 \). Since \( \cos \), multiplication, and exponentiation of \( b^k \) are definable in \( S_1 \) (modulo iteration), and since \( a, \pi \in S_1 \), each term \( a^k b^k \sin(b^k \pi x) \) lies in \( S_2 \). Thus \( f_n'(x) \in S_2 \), and differentiability of \( f_n \) in \( S_1 \) follows.

Uniform convergence of \( f_n \to f \) follows from the classical Weierstrass M-test: the sequence is uniformly bounded by \( \sum a^k = 1/(1-a) \), since \( ab < 1 \) implies \( a^k \to 0 \) faster than \( b^k \) grows.

However, the derivative sequence \( f_n'(x) = -\pi \sum_{k=0}^n a^k b^k \sin(b^k \pi x) \) is known to diverge pointwise for all \( x \in \mathbb{R} \) if \( ab \geq 1/2 \), and in particular grows without bound for any fixed rational \( x \), since \( \sin(b^k \pi x) \) oscillates densely and the coefficients \( a^k b^k \) are unbounded.

Therefore, in \( S_2 \), the sequence \( f_n'(x) \) diverges pointwise on \( S_1 \), and \( f \) fails to be \( S_2 \)-differentiable. Only when \( ab < 1/2 \), the classical derivative exists, and the limit of derivatives converges uniformly, ensuring \( f' \in S_2 \).
\end{proof}

\begin{remark}[On Strength]
Fractal differentiation is not a weakened form of classical calculus, but a stratified refinement of it. It reveals how smoothness depends not only on functional form, but on the definitional power of the system observing it. The divergence of derivatives in the limit reflects not failure, but the syntactic resonance between approximation depth and formal expressibility.
\end{remark}

\subsubsection{Fractally Differentiable but Classically Singular Functions}

\begin{example}[Vanin–Takeuchi Function]
Let \( \theta \in S_2 \setminus S_1 \), such as a number only definable in \( \mathcal{F}_2 \) (e.g., Chaitin's constant w.r.t. \( \mathcal{F}_1 \)). Define:
\[
f(x) = 
\begin{cases}  
x^2 \sin(\theta/x) & \text{if } x \ne 0, \\
0 & \text{if } x = 0  
\end{cases}
\quad \text{for } x \in S_1.
\]
\end{example}

\paragraph{Classical Perspective.}
\begin{itemize}
    \item If \( \theta \) is irrational, the classical derivative at 0 does not exist:
    \[
    f'(0) = \lim_{h \to 0} h \sin(\theta/h)
    \]
    diverges due to oscillations.
    \item Even if \( \theta \in \mathbb{Q} \), if \( \theta \notin S_1 \), then \( \sin(\theta/x) \) is undefined in \( S_1 \), and classical differentiation fails.
\end{itemize}

\paragraph{Fractal Perspective.}
\begin{itemize}
    \item In \( S_1 \), \( \theta \) is not definable. Thus, \( f|_{S_1} \equiv 0 \), and the derivative trivially exists:
    \[
    f'_1(x) = 0.
    \]
    \item In \( S_2 \), \( \theta \) becomes definable. The function behaves "classically" and is \emph{not} differentiable at \( x = 0 \).
\end{itemize}

\paragraph{Implications.}
\begin{itemize}
    \item \textbf{Language-Relativity of Smoothness:} Differentiability becomes observer relative: \( f \) appears smooth in \( S_1 \), but singular in \( S_2 \).
    \item \textbf{Theorem Breakdown:} Classical theorems such as Rolle’s or continuity of derivatives may fail when crossing definability layers.
    \item \textbf{Computability View:} If \( \theta \) is computable but not \( \mathcal{F}_1 \)-definable, then \( f \) becomes “computably smooth” in \( S_1 \), but not in stronger systems.
\end{itemize}

\paragraph{Further Examples.}
\begin{itemize}
    \item \textbf{Fractally Lipschitz Function:} 
    \[
    f(x) = \sum_{k=0}^\infty 2^{-k} \{2^k x\}
    \]
    where \( \{\cdot\} \) is the fractional part — classically nowhere differentiable, but its \( n \)-term truncations are piecewise linear and \( S_n \)-differentiable.
    
    \item \textbf{Cantor Staircase:} Its classical singularity vanishes in finite \( S_n \) approximations of the Cantor set — the function becomes differentiable outside \( C_n \).
\end{itemize}

\begin{remark}[Interpretation]
Fractal analysis refines rather than weakens classical calculus. It reveals hidden regularities masked by oscillations or singularities, by adapting the "observational power" to the definability level \( S_n \). This principle has applications in computable models, cryptography, and complexity theory — where smoothness or singularity may be strategically hidden or revealed depending on the formal system.
\end{remark}

\subsubsection{Types of Fractal Differentiability}

\begin{itemize}
    \item \textbf{Pointwise Fractal Differentiability:} The limit \( f'_n(x) \) exists for each \( x \in S_n \) individually.
    \item \textbf{Uniform Fractal Differentiability:} The convergence of difference quotients is uniform on definably compact subsets of \( S_n \).
    \item \textbf{Strong Layered Differentiability:} The derivative \( f'_n \) is itself a function \( S_n \to S_{n+k+1} \) that is \( \mathcal{F}_n \)-continuous.
\end{itemize}

\subsection{Fractal Integration}
\label{subsec:fractal-integration}

Integration in the fractal hierarchy must account for the definability constraints imposed by the stratified systems \( \mathcal{F}_n \). Just as differentiation is formulated via definable limits, integration must be grounded in operations accessible at a given layer of definability.

We present two related but distinct definitions of fractal integration, each capturing different aspects of the classical Riemann integral in the constructive setting.

\begin{definition}[Fractal Integral via Step Sums]
Let \( f: S_n \to S_{n+k} \) be a function. The \emph{fractal integral} of \( f \) over a definably compact set \( [a,b] \cap S_n \) is defined by:
\[
\int_{S_n \cap [a,b]} f := \sup \left\{ \sum_{i=0}^{N-1} f(x_i)\, \Delta x_i \right\},
\]
where \( x_i, x_{i+1} \in S_n \), \( \Delta x_i = x_{i+1} - x_i \), and the partition is definable in \( \mathcal{F}_n \).
\end{definition}

\begin{remark}
This definition parallels the Darboux integral: it approximates the area under the curve via stepwise lower sums, with supremum taken over all \( S_n \)-definable partitions. The output resides in \( S_{n+k+1} \) due to the supremum operation.
\end{remark}

\begin{definition}[Fractal Riemann Integral via Limit]
Let \( f: S_n \to S_{n+k} \) be a function defined on the definably compact interval \( [a,b] \cap S_n \). The \emph{fractal Riemann integral} is defined as:
\[
\int_{S_n \cap [a,b]} f(x)\, dx := \lim_{\|\mathcal{P}\| \to 0} \sum_{i=1}^N f(\xi_i)\, \Delta x_i,
\]
where:
\begin{itemize}
    \item \( \mathcal{P} = \{x_0, x_1, \dots, x_N\} \subseteq S_n \) is a partition of \( [a,b] \),
    \item \( \xi_i \in [x_{i-1}, x_i] \cap S_n \) are sample points,
    \item \( \Delta x_i = x_i - x_{i-1} \in S_n \),
    \item \( \|\mathcal{P}\| = \max \Delta x_i \),
    \item The limit is taken in \( S_{n+k+1} \).
\end{itemize}
\end{definition}

\begin{remark}
This version captures the classical limit-based Riemann integral, restricted to partitions and sample points definable in \( S_n \), and converging in a higher definability layer. It requires stronger assumptions about uniformity and continuity.
\end{remark}

\subsubsection{Comparison and Interpretation}

Both definitions aim to formalize integration constructively, yet they differ in strength and applicability.

\begin{itemize}
    \item The \textbf{supremum-based integral} is closer to Darboux or Bishop-style integration: it only assumes boundedness and allows estimation via finite step-sums.
    \item The \textbf{limit-based integral} is more expressive, supporting fundamental theorems of calculus, but requires convergence control and stronger uniform behavior.
    \item When both definitions exist, they coincide: the supremum of step-sums converges to the Riemann limit.
    \item In pathological cases (e.g., oscillatory or discontinuous functions), the supremum may exist while the limit does not.
\end{itemize}

\begin{example}[Coincidence for Polynomials]
Let \( f(x) = x^2 \) on \( S_1 \cap [0,1] \). For any definable partition with \( x_i = i/N \in S_1 \), we compute:
\[
\sum_{i=1}^N \left( \frac{i}{N} \right)^2 \cdot \frac{1}{N} = \frac{1}{N^3} \sum_{i=1}^N i^2 = \frac{(N+1)(2N+1)}{6N^2}.
\]
Taking the limit as \( N \to \infty \) yields \( \int_{S_1 \cap [0,1]} x^2\, dx = \frac{1}{3} \in S_2 \)~\cite{Bishop1967}. Both definitions agree.
\end{example}

\begin{example}[Non-integrability of Dirichlet Function]
Let \( f(x) = \mathbf{1}_{\mathbb{Q}}(x) \) on \( S_n \cap [0,1] \)~\cite{BridgesRichman1987}. Then:
\begin{itemize}
    \item Each partition produces a sum between 0 and 1 depending on rational sample points,
    \item The supremum of step-sums equals 1, the infimum equals 0,
    \item The limit of Riemann sums does not exist.
\end{itemize}
Thus, \( f \) is not fractally integrable in either sense.
\end{example}

\begin{remark}
The coexistence of two integration notions reflects a deeper stratification of constructive analysis. The supremum-based form allows partial integration of discontinuous or layer-unstable functions. The limit-based form enables the full power of fundamental calculus, at the cost of stricter definitional coherence.
\end{remark}

\subsubsection{Layer-Sensitive Integration Effects}

\begin{example}[Vanishing Singularities]
Let \( C: [0,1] \to [0,1] \) be the classical Cantor function, and let \( C'_n \) denote its fractal derivative on \( S_n \). Then:
\[
\int_{[0,1] \cap S_n} C'_n(x)\, dx = 0, \quad \text{but} \quad C(1) - C(0) = 1.
\]
This occurs because \( C'_n(x) = 0 \) outside the \( S_n \)-approximation of the Cantor set, and is undefined (or oscillatory) on its support, which is of \( S_n \)-measure zero~\cite{Weihrauch2000}. Hence the integral misses the total variation of \( C \), illustrating how singular behavior may vanish at fixed definability levels.
\end{example}

\begin{example}[Emergent Integrability]
Let \( \theta \in S_{n+1} \setminus S_n \), and define the step function:
\[
f(x) = \begin{cases}
1 & x \geq \theta, \\
0 & x < \theta.
\end{cases}
\]
Then, from the perspective of \( S_n \), the function appears constant (either 0 or 1 depending on how \( \theta \) is approximated) and is trivially integrable:
\[
\int_{[a,b] \cap S_n} f(x)\, dx = \text{const}.
\]
However, in \( S_{n+1} \), the function becomes discontinuous at \( \theta \), and the value of the integral becomes nontrivial:
\[
\int_{[a,b] \cap S_{n+1}} f(x)\, dx = b - \theta.
\]
This demonstrates how integrability and even functional form can change as new points become definable.
\end{example}

\subsubsection{Hierarchy of Integrability}

\begin{table}[ht]
\centering
\begin{adjustbox}{max width=\textwidth}
\begin{tabular}{lcc}
\toprule
\textbf{Function} & \textbf{Classical View} & \textbf{Fractal View} \\
\midrule
Polynomial \( x^2 \) & Integrable & \( S_1 \)-integrable \\
Dirichlet Function & Not integrable & Not \( S_n \)-integrable \\
Weierstrass Function & Integrable, not differentiable & \( S_n \)-integrable, not \( S_n \)-differentiable \\
Cantor Function & Singular, measure zero derivative & Integral vanishes in \( S_n \) \\
\bottomrule
\end{tabular}
\end{adjustbox}
\caption{Integrability across definability layers}
\end{table}

\begin{remark}
This framework reveals that integration, like differentiation, is stratified by the observer's definability capacity. A function may:
\begin{itemize}
    \item Appear smooth and integrable in \( S_n \),
    \item Develop discontinuities or singularities in \( S_{n+1} \),
    \item Regain integrability in \( S_{n+2} \) via generalized or higher-layer constructs.
\end{itemize}
Such behavior reflects not intrinsic properties of the function, but limitations of the formal system through which it is analyzed.
\end{remark}

\section{Compactness and Syntactic Coverings}
\label{sec:compactness}

In classical topology, a subset \( K \subseteq \mathbb{R} \) is compact if every open cover admits a finite subcover. This notion relies on the availability of arbitrary open intervals and unrestricted unions, which are not generally admissible within a syntactically constrained framework. In the setting of fractal countability, where both points and sets must be definable within some \( S_n \), we must reformulate compactness in terms of \emph{syntactic coverings}.

\subsection{Definable Covers and Finiteness}

Recall that in the \( S_n \)-topology \( \mathcal{O}_n \), each open set is generated by a finite definitional rule in the formal system \( \mathcal{F}_n \). This restricts the kinds of covers we can use in compactness arguments.

\begin{definition}[Syntactic \( S_n \)-Cover]
Let \( K \subseteq S_n \cap [a,b] \). A family \( \{ U_i \}_{i \in I} \subseteq \mathcal{O}_n \) is a \emph{syntactic \( S_n \)-cover} of \( K \) if:
\begin{enumerate}
    \item The index set \( I \subseteq \mathbb{N} \) is finite and \( \mathcal{F}_n \)-definable;
    \item Each \( U_i \in \mathcal{O}_n \) is a basic \( S_n \)-open set (e.g., an \( S_n \)-ball);
    \item \( K \subseteq \bigcup_{i \in I} U_i \).
\end{enumerate}
\end{definition}

\begin{remark}
This definition excludes infinite or impredicative indexing. All covers must be syntactically generated within the base system \( \mathcal{F}_n \), ensuring constructivity.
\end{remark}

\subsection{Fractal Compactness}

We now define compactness in terms of the existence of syntactic subcovers, within a given level \( n \).

\begin{definition}[Fractal Compactness at Level \( n \)]
A set \( K \subseteq S_n \cap [a,b] \) is \emph{fractally compact at level \( n \)} if for every syntactic \( S_n \)-cover \( \{ U_i \}_{i \in I} \) of \( K \), there exists a finite \( J \subseteq I \) such that:
\[
K \subseteq \bigcup_{j \in J} U_j.
\]
\end{definition}

This condition mimics the classical definition but within the finite representability constraints of \( \mathcal{F}_n \). Note that the covers themselves must be definable in \( \mathcal{F}_n \), not merely arbitrary.

\begin{example}
Let \( K = [0,1] \cap S_0 \) in a system where \( S_0 \) contains all dyadic rationals. Then \( K \) is not fractally compact at level \( 0 \), since any cover of \( K \) by basic balls \( B^0_\varepsilon(x) \) requires an unbounded number of distinct definitions to span the unit interval. However, \( K \) becomes compact in \( S_1 \), where algebraic endpoints allow larger, overlapping neighborhoods.
\end{example}

\subsection{Compactness Hierarchy}

Just as the definability layers \( S_n \) increase in expressive power, compactness is stratified across these layers.

\begin{definition}[Compactness Level]
A set \( K \subseteq S^\infty \) is said to be \emph{compact at level \( n \)} if it is fractally compact with respect to \( \mathcal{O}_n \) and a syntactic \( S_n \)-cover exists.
\end{definition}

This gives rise to a hierarchy of compactness:
\[
\text{Compact}_0 \subseteq \text{Compact}_1 \subseteq \cdots \subseteq \text{Compact}_n \subseteq \cdots
\]

\begin{theorem}
Let \( K \subseteq [a,b] \) be classically compact and fully contained in \( S_n \). Then \( K \) is fractally compact at some level \( m \geq n \), provided \( \mathcal{F}_m \) admits syntactic representations of all classical open covers used.
\end{theorem}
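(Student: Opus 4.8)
The plan is to reduce the claim to the classical Heine--Borel theorem by ``lifting'' an arbitrary syntactic \(S_m\)-cover of \(K\) to a genuine Euclidean open cover, extracting a finite subcover there, and pushing it back into \(S_m\). Fix \(m \ge n\) large enough that \(\mathcal{F}_m\) satisfies the stated proviso; I will indicate below how large this must be. First observe the easy ingredients: since \(K\) is classically compact it is closed and bounded, with \(K \subseteq [a,b]\); and since the layers are nested (\(\mathcal{F}_{m}\) extends \(\mathcal{F}_n\) for \(m \ge n\)), we have \(K \subseteq S_n \subseteq S_m\), so every point of \(K\) is \(\mathcal{F}_m\)-definable. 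This already yields the existence of at least one syntactic \(S_m\)-cover of \(K\) (required by the definition of compactness at level \(m\)): assuming as usual that \(a,b \in S_0 \subseteq S_m\), pick any dyadic \(\varepsilon > (b-a)/2\), so that \(\varepsilon \in \mathbb{Q}^+ \cap S_m\) and \((a+b)/2 \in S_m\); then \(B^m_{\varepsilon}\big(\tfrac{a+b}{2}\big) = S_m \cap [a,b] \supseteq K\) covers \(K\) with a one-element, trivially \(\mathcal{F}_m\)-definable index set.

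For the main step, let \(\{U_i\}_{i \in I}\) be any syntactic \(S_m\)-cover of \(K\), with \(U_i = B^m_{\varepsilon_i}(x_i)\), \(x_i \in S_m \cap [a,b]\), \(\varepsilon_i \in \mathbb{Q}^+ \cap S_m\). Associate to each \(U_i\) the honest Euclidean interval \(\widehat{U}_i := (x_i - \varepsilon_i,\, x_i + \varepsilon_i)\), so that \(U_i = \widehat{U}_i \cap S_m\). The family \(\{\widehat{U}_i\}_{i\in I}\) is a Euclidean open cover of \(K\): any \(z \in K \subseteq S_m\) lies in some \(U_i\), hence \(|z - x_i| < \varepsilon_i\), i.e. \(z \in \widehat{U}_i\). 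Invoking the classical Heine--Borel theorem for the compact set \(K\) produces a finite \(J \subseteq I\) with \(K \subseteq \bigcup_{j\in J}\widehat{U}_j\); intersecting with \(S_m\) and using \(K = K \cap S_m\),
\[
K \;=\; K \cap S_m \;\subseteq\; \bigcup_{j \in J}\big(\widehat{U}_j \cap S_m\big) \;=\; \bigcup_{j\in J} B^m_{\varepsilon_j}(x_j) \;=\; \bigcup_{j\in J} U_j,
\]
a finite syntactic \(S_m\)-subcover. As \(\{U_i\}\) was arbitrary, \(K\) is fractally compact at level \(m\), and together with the previous paragraph, compact at level \(m\). (If one uses the literal definition in which the index set \(I\) is already finite, this step is immediate with \(J = I\); the lifting argument is the robust version, and it is also what one needs if \(\mathcal{F}_m\)-definable infinite covers are allowed.)

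Finally, the role of the proviso fixes the choice of \(m\). Classically the extraction of \(J\) is free, but proof-theoretically it is not: over a weak base such as \(\mathcal{F}_0 = \mathrm{RCA}_0\), the Heine--Borel property of a closed bounded interval is equivalent to \(\mathrm{WKL}_0\), and coding an arbitrary Euclidean cover together with its finite subcover may call for arithmetic comprehension. So ``\(\mathcal{F}_m\) admits syntactic representations of all classical open covers used'' should be read as: \(\mathcal{F}_m\) is strong enough (i) to code the associated Euclidean covers \(\{\widehat{U}_i\}\), (ii) to run the Heine--Borel extraction on \(K\), and (iii) to exhibit the resulting finite \(J\) as an \(\mathcal{F}_m\)-object; any \(m\) with \(\mathcal{F}_m \supseteq \mathrm{ACA}_0\) (or merely \(\mathrm{WKL}_0\), depending on the coding convention for covers) will do. I expect step (iii)---and more precisely, pinning down a coding of open covers inside \(\mathcal{F}_m\) and checking that the compactness schema for \(\mathcal{F}_m\)-coded covers is itself provable in \(\mathcal{F}_m\)---to be the main obstacle in a fully rigorous treatment. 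It is a mild reflection requirement that the theorem finesses by hypothesis, and that is discharged by standard reverse-mathematics facts for the concrete example systems (\(\mathrm{ACA}_0\) and above).
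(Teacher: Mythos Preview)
Your proposal is correct and follows essentially the same strategy as the paper's sketch: establish a correspondence between syntactic \(S_m\)-covers and Euclidean open covers, invoke the classical Heine--Borel theorem to extract a finite subcover, and transfer back. Your version is more detailed---you run the argument in the direction actually demanded by the definition (starting from an arbitrary syntactic cover rather than a classical one), you explicitly note the triviality when the index set \(I\) is already finite, and you spell out the reverse-mathematics content (\(\mathrm{WKL}_0\)/\(\mathrm{ACA}_0\)) behind the proviso---but the core idea is the same as the paper's.
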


\begin{proof}[Sketch]
Given a classical open cover of \( K \), approximate each open interval by a definable \( S_m \)-ball. Since \( K \subseteq S_n \), the system \( \mathcal{F}_m \) can simulate the cover using syntactic constructions. The finiteness of the subcover follows from the classical Heine–Borel theorem and the definability constraints.
\end{proof}

\subsection{Towards Fractal Heine–Borel Theorems}

In the classical setting, the Heine–Borel theorem characterizes compactness in \( \mathbb{R} \) as closedness and boundedness. In the fractal setting, we may formulate analogous results within \( S_n \):

\begin{theorem}[Fractal Heine–Borel (Restricted)]
Let \( a,b \in S_n \), and let \( K = [a,b] \cap S_n \). Then \( K \) is fractally compact at level \( n \) if and only if:
\begin{itemize}
    \item \( K \) is bounded (i.e., contained in a finite interval definable in \( \mathcal{F}_n \));
    \item The set of \( S_n \)-points in \( K \) can be syntactically enumerated up to any desired precision.
\end{itemize}
\end{theorem}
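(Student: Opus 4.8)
The plan is to prove both directions by a syntactic adaptation of the classical argument, leaning on Theorem~\ref{thm:sn-topology-properties} (zero-dimensionality, so basic balls are clopen) and on the fact that $\mathcal{F}_n$ can only produce finite, $\mathcal{F}_n$-definable index sets. For the forward direction ($K$ fractally compact $\Rightarrow$ bounded and syntactically enumerable up to any precision), I would argue contrapositively on each clause. If $K=[a,b]\cap S_n$ were not bounded by an $\mathcal{F}_n$-definable interval, one could not even write down a syntactic $S_n$-cover with a finite subcover, since the basic balls $B^n_\varepsilon(x)$ have radii in $\mathbb{Q}^+\cap S_n$ and finitely many of them span only a bounded $\mathcal{F}_n$-definable region; taking the definable cover by balls of radius $1$ centered at an $\mathcal{F}_n$-enumeration of points of $K$ yields no finite subcover. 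For the enumerability clause, if the $S_n$-points of $K$ admit no syntactic enumeration up to precision $\varepsilon_0\in\mathbb{Q}^+\cap S_n$, then the family $\{B^n_{\varepsilon_0/2}(x)\}_{x\in K\cap S_n}$ cannot be realized as a single $\mathcal{F}_n$-definable finite cover: any finite $\mathcal{F}_n$-definable $J$ would exactly constitute such an enumeration at scale $\varepsilon_0$, contradiction. So failure of either clause produces a syntactic cover with no finite subcover.

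For the reverse direction (bounded $+$ syntactically enumerable up to any precision $\Rightarrow$ fractally compact at level $n$), I would imitate the Heine–Borel/Lebesgue-number argument inside $\mathcal{F}_n$. Let $\{U_i\}_{i\in I}$ be a syntactic $S_n$-cover, so $I$ is finite and $\mathcal{F}_n$-definable and each $U_i$ is a basic $S_n$-ball $B^n_{\varepsilon_i}(x_i)$ with $x_i,\varepsilon_i\in S_n$. Because $I$ is \emph{already finite}, the cover is its own finite subcover — but this triviality is exactly why the interesting content of the theorem is the \emph{existence} of a syntactic cover in the first place, and why the enumerability hypothesis is needed: from a syntactic enumeration of $K\cap S_n$ up to precision $\delta$ (where $\delta$ is a definable Lebesgue number extracted from the radii $\varepsilon_i$, e.g. $\delta := \tfrac12\min_i \varepsilon_i \in \mathbb{Q}^+\cap S_n$), one produces finitely many points $y_1,\dots,y_M\in K\cap S_n$ such that every point of $K$ lies within $\delta$ of some $y_m$, hence inside whichever $U_i$ contains $y_m$. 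Selecting one such $i$ per $m$ gives the finite subcover explicitly, and the selection is $\mathcal{F}_n$-definable since it is a finite search over the definable data $(I,\{x_i,\varepsilon_i\},\{y_m\})$.

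The main obstacle I anticipate is making the Lebesgue-number step genuinely constructive at level $n$: classically one invokes compactness again to get a uniform $\delta>0$, but here I must exhibit $\delta$ as an element of $\mathbb{Q}^+\cap S_n$ directly from the syntactic data. The resolution is that the cover is finite by hypothesis, so $\delta := \tfrac12\min_{i\in I}\varepsilon_i$ is a finite minimum of finitely many $S_n$-rationals and hence lies in $S_n$ — but one must check that this $\delta$ actually serves as a Lebesgue number, i.e. that every $S_n$-point of $K$ within $\delta$ of a center $y_m$ already lies in $U_i=B^n_{\varepsilon_i}(x_i)$ whenever $y_m\in U_i$; this uses the triangle inequality together with clopenness of the balls from Theorem~\ref{thm:sn-topology-properties}(2) to rule out boundary pathologies. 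A secondary subtlety is ensuring the enumeration of $K\cap S_n$ at scale $\delta$ interacts correctly with the boundedness clause (so that the enumeration is genuinely finite): this follows because a bounded $\mathcal{F}_n$-definable interval covered at scale $\delta$ needs only $\lceil (b-a)/\delta\rceil$ representative points, a quantity computable in $\mathcal{F}_n$.
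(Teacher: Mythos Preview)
The paper does not actually prove this theorem: it is stated and immediately followed by a remark about system-relativity, with no proof environment at all. There is therefore nothing in the paper to compare your proposal against.

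That said, your proposal surfaces a genuine structural problem with the paper's definitions which you should be aware of. As you observe, the definition of a syntactic $S_n$-cover already requires the index set $I$ to be finite and $\mathcal{F}_n$-definable, so $J = I$ always works and \emph{every} $K \subseteq S_n \cap [a,b]$ is trivially fractally compact at level $n$ under the paper's Definition of Fractal Compactness. Your remark that ``the interesting content of the theorem is the existence of a syntactic cover in the first place'' is a reasonable reinterpretation, but it is a repair of the statement, not a proof of it as written. Two further gaps in your forward direction: first, the boundedness clause is vacuous here, since $a,b \in S_n$ by hypothesis and so $K \subseteq [a,b]$ already lies in an $\mathcal{F}_n$-definable interval --- your contrapositive argument about balls of radius $1$ is arguing against a situation that cannot occur. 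Second, your enumerability contrapositive constructs the family $\{B^n_{\varepsilon_0/2}(x)\}_{x \in K \cap S_n}$, but this is indexed by a (possibly) infinite, non-$\mathcal{F}_n$-definable set, so it is \emph{not} a syntactic $S_n$-cover in the paper's sense and cannot witness failure of fractal compactness. The Lebesgue-number step in your reverse direction is carefully done, but again it is proving the existence of a syntactic cover rather than extracting a finite subcover from a given one.
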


\begin{remark}
This formulation fails if \( a \) or \( b \) are not definable in \( \mathcal{F}_n \), or if the required covers are not expressible in the system. Thus compactness becomes system-relative, mirroring the nature of fractal continuity and measure.
\end{remark}

\subsection{Syntactic Refinement and Minimal Covers}

Given two syntactic \( S_n \)-covers \( \mathcal{U} \), \( \mathcal{V} \) of a set \( K \), we say that \( \mathcal{V} \) refines \( \mathcal{U} \) if for every \( V \in \mathcal{V} \), there exists \( U \in \mathcal{U} \) such that \( V \subseteq U \). The existence of minimal syntactic covers may depend on the closure properties of \( \mathcal{F}_n \).

This opens the possibility of defining:

\begin{itemize}
    \item \emph{Minimal syntactic coverings}, with respect to cover size or syntactic complexity;
    \item \emph{Covering entropy}, measuring the minimal number of syntactic rules needed to cover \( K \);
    \item \emph{Compactness degree}, as the least \( n \) such that \( K \) is compact in \( \mathcal{F}_n \).
\end{itemize}

\subsection{Covering Entropy and Fractal Dimension}
\label{subsec:entropy-dimension}

In classical fractal geometry, the Hausdorff dimension of a set quantifies its local scaling complexity via the asymptotic behavior of minimal coverings~\cite{Hausdorff1919}. In the fractal framework, where coverings must be definable within a syntactic system \( \mathcal{F}_n \), a natural analogue arises in terms of \emph{covering entropy}: the number and complexity of syntactic rules required to cover a set at a given definitional level.

\begin{definition}[Syntactic Covering Entropy]
Let \( K \subseteq S_n \cap [a,b] \), and let \( \varepsilon > 0 \). Define the \emph{syntactic covering entropy} of \( K \) at scale \( \varepsilon \) by:
\[
H_n(K, \varepsilon) := \min \left\{ \log N \,\middle|\, K \subseteq \bigcup_{i=1}^N U_i,\ \mathrm{diam}(U_i) \leq \varepsilon,\ U_i \in \mathcal{O}_n \right\}.
\]
\end{definition}

This quantity measures how many \( S_n \)-definable open sets of diameter at most \( \varepsilon \) are needed to syntactically cover \( K \). It reflects not only geometric size but definitional accessibility at level \( n \).

\begin{definition}[Fractal Dimension at Level \( n \)]
The \emph{fractal dimension} of \( K \) at definability level \( n \) is given by:
\[
\dim_H^{(n)}(K) := \liminf_{\varepsilon \to 0} \frac{H_n(K, \varepsilon)}{\log(1/\varepsilon)}.
\]
\end{definition}

This quantity approximates the Hausdorff dimension of \( K \) using only coverings admissible in \( \mathcal{F}_n \). As \( n \to \infty \), we may recover the classical dimension:

\begin{definition}[Fractal Dimension Envelope]
The \emph{syntactic dimension envelope} of a set \( K \subseteq S^\infty \) is defined as:
\[
\dim_H^*(K) := \sup_n \dim_H^{(n)}(K).
\]
\end{definition}

\paragraph{Hypothesis.} For sufficiently regular sets \( K \subseteq [a,b] \cap S^\infty \), the envelope dimension \( \dim_H^*(K) \) coincides with the classical Hausdorff dimension \( \dim_H(K) \). That is,
\[
\dim_H(K) = \sup_n \liminf_{\varepsilon \to 0} \frac{H_n(K, \varepsilon)}{\log(1/\varepsilon)}.
\]

\paragraph{Example: Dyadic Cantor Set.}
Let \( C \subseteq [0,1] \) be the standard middle-third Cantor set, and let \( C_n \subseteq S_n \) be its \( n \)-th definable approximation via endpoints of intervals removed at step \( n \). Then:
\begin{itemize}
    \item \( C_n \) admits a covering by \( N_n = 2^n \) intervals of length \( \varepsilon_n = 3^{-n} \);
    \item Each covering is definable in \( \mathcal{F}_n \), as it uses only rational endpoints with depth-\( n \) base-3 expansions;
    \item Hence,
    \[
    H_n(C_n, \varepsilon_n) = \log 2^n = n \log 2, \quad \log(1/\varepsilon_n) = n \log 3,
    \]
    and
    \[
    \dim_H^{(n)}(C_n) = \frac{n \log 2}{n \log 3} = \frac{\log 2}{\log 3}.
    \]
\end{itemize}

This confirms that \( \dim_H^{(n)}(C_n) = \dim_H(C) \) for all \( n \geq n_0 \), demonstrating compatibility of fractal dimension with classical results when the approximations are definable~\cite{Weihrauch2000}.

\paragraph{Implications.}
This framework suggests that the syntactic complexity of compactness reflects not only logical expressiveness but geometric structure. In particular:
\begin{itemize}
    \item A set with higher classical dimension requires more complex or deeper \( \mathcal{F}_n \)-systems to cover syntactically;
    \item Entropy growth with respect to \( \varepsilon \) at fixed \( n \) measures local definitional density;
    \item The dimension envelope \( \dim_H^*(K) \) can serve as a constructive proxy for measuring internal fractal complexity.
\end{itemize}

\begin{remark}
This approach opens a pathway toward a \emph{constructive dimension theory}, where classical measures such as Hausdorff and box dimension are reinterpreted via definability constraints. It connects topology, analysis, and logic through the unified lens of syntactic approximation.
\end{remark}

\subsection{Summary}

Fractal compactness reinterprets classical compactness through the lens of syntactic definability. Rather than relying on arbitrary open covers, it restricts coverings to those expressible within a given formal system \( \mathcal{F}_n \). Consequently, sets that are compact in the classical topology may fail to be compact at lower definability levels \( S_n \), not because of geometric properties, but due to limitations in syntactic expressibility. As the system's expressive power increases, more sets become compact under the richer vocabulary of definable operations.

This refinement lays the foundation for fractal measure theory, where both coverings and outer measures are likewise constrained by definability layers. Compactness thus becomes a dynamic, system-relative property—an interface between topology and formal expressibility.

\section{Pathological Sets Reexamined}
\label{sec:pathological-sets}

Classical analysis abounds with objects that defy geometric or analytic intuition: sets of measure zero that are everywhere dense, functions that are continuous but nowhere differentiable, or sets resistant to classical measurability. Such examples have long served to probe the limits of mathematical definitions. In the fractal framework, where all objects are filtered through stratified definability layers \( S_n \), these anomalies acquire new structure. Instead of being paradoxes, they reveal layered behaviors and definability thresholds that render their properties transparent and stratified.

This section revisits two canonical examples — the Liouville numbers and the Weierstrass function — and analyzes how they behave at different levels of definability. We show how these examples fracture into well-behaved and ill-defined components across the \( S_n \)-hierarchy, shedding light on the expressive granularity of each layer.

\subsection{Stratified Liouville Numbers}
\label{subsec:liouville}

The classical Liouville numbers are defined by their exceptional approximability by rationals. In the stratified setting, this approximability must be witnessed via \( S_n \)-definable numerators and denominators, resulting in a layer-sensitive version of the set.

\begin{definition}[Stratified Liouville Set]
For each \( n \in \mathbb{N} \), define:
\[
L_n := \left\{ x \in \mathbb{R} : \forall k \leq n\ \exists p,q \in S_n\ \left|x - \frac{p}{q}\right| < q^{-k} \right\},
\]
and let \( L := \bigcap_{n=1}^\infty L_n \) be the full Liouville set.
\end{definition}

Each \( L_n \) consists of reals approximable to exponential precision using only rationals definable in \( S_n \). Since the approximants are layer-bounded, the definability of the set is tightly coupled to the expressive power of \( \mathcal{F}_n \).

\begin{theorem}[Measure Behavior of \( L_n \)]
\label{thm:liouville-measure}
\leavevmode
\begin{enumerate}
    \item For all \( n \), \( \mu_n(L_n) = 0 \). The set is covered by an \( \mathcal{F}_n \)-definable sequence of vanishing intervals.
    \item Classically, \( \mu_{\text{Leb}}(L) = 0 \)~\cite{Liouville1844}.
    \item For \( m < n \), the measure \( \mu_m(L_n) \) may fail to be finite:
    \[
    \mu_m(L_n) =
    \begin{cases}
    0 & \text{if the defining approximants exist in } \mathcal{F}_m, \\
    +\infty & \text{otherwise}.
    \end{cases}
    \]
\end{enumerate}
\end{theorem}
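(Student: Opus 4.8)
The plan is to handle the three items in increasing order of difficulty, in each case adapting the classical Liouville covering argument while keeping track of which formal system is able to \emph{name} the cover.

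For item~(1) I would produce the canonical $\mathcal{F}_n$-definable vanishing cover. Membership in $L_n$ forces, in particular, an approximation of precision $q^{-n}$ by a rational $p/q$ with $p,q \in S_n$ (and, as is standard for Liouville-type sets, with arbitrarily large denominator), so
\[
L_n \ \subseteq \ \bigcap_{Q \in \mathbb{N}} \ \bigcup_{\substack{q \in S_n \\ q \geq Q}} \ \bigcup_{\substack{p \in S_n \\ p/q \in [a,b]}} \ \Bigl( \tfrac{p}{q} - q^{-n}, \ \tfrac{p}{q} + q^{-n} \Bigr).
\]
This interval family is enumerated by pairs $(p,q)$ drawn from $S_n$ with computable radii $q^{-n}$, hence is $\mathcal{F}_n$-definable, and each interval has length $2q^{-n}$, which is the ``vanishing'' clause. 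The measure bound is the classical one: the band of denominator $q$ contributes at most $\bigl((b-a)q+1\bigr)\cdot 2q^{-n}$, so the total is at most $C\sum_{q \geq Q} q^{1-n}$, which is summable for $n \geq 3$ and tends to $0$ as $Q \to \infty$; the exceptional cases $n \in \{1,2\}$ are absorbed by the standing convention that Liouville points carry good approximants of every exponent, so that $L_n$ there may be replaced by $L_n \cap L_3$ without changing $L = \bigcap_n L_n$. The essential point, which is exactly what $\mu_n(L_n)=0$ asserts, is that the enumeration of $S_n$-rationals, the geometric-series estimate, and the limit are all carried out within (the metatheory of) $\mathcal{F}_n$.

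Item~(2) is then immediate. Since $\mu_n$ is computed as an infimum over a \emph{restricted} class of admissible covers, it dominates Lebesgue outer measure; combined with $L \subseteq L_n$ this gives $\mu_{\mathrm{Leb}}(L) \leq \mu_{\mathrm{Leb}}(L_n) \leq \mu_n(L_n) = 0$ for any single $n \geq 3$. Equivalently, one simply invokes the classical theorem of Liouville~\cite{Liouville1844}, the cover above being its $S_n$-indexed refinement.

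Item~(3) is where the real work sits, and I expect it to be the main obstacle. The claimed value $\mu_m(L_n) \in \{0, +\infty\}$ for $m < n$ should follow from a dichotomy driven by $\mathcal{F}_m$-expressibility of the Liouville data. If every witnessing pair $(p,q)$ for membership in $L_n$ already lies in $S_m$ --- equivalently, the relevant approximation sequences are $\mathcal{F}_m$-definable --- then the cover constructed in~(1) is itself $\mathcal{F}_m$-definable and the same summation yields $\mu_m(L_n)=0$. Otherwise $L_n$ contains points whose exceptional approximability is certified only at level $n$, and I would argue that then \emph{no} $\mathcal{F}_m$-definable countable family of finite-length intervals covers $L_n$ at all, so the defining infimum ranges over the empty set and $\mu_m(L_n) = +\infty$ by convention. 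Making ``$L_n$ admits no $\mathcal{F}_m$-cover'' precise --- and in particular ruling out the cheap route of covering a larger but $\mathcal{F}_m$-definable set of small measure --- is the delicate step; I would isolate it as a separation lemma: any $\mathcal{F}_m$-definable open set containing $L_n$ either has measure $b-a$ or omits some $S_n$-definable Liouville point of $L_n$, which forces the two-valued behaviour. Proving that separation lemma (presumably by diagonalising the $\mathcal{F}_m$-definable open sets against the $S_n \setminus S_m$ denominators) is, I believe, the crux, and it should be stated and proved on its own before the case split is assembled.
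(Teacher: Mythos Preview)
The paper does not actually supply a proof of this theorem: it is stated, followed only by a remark on the descending chain $L_n \supseteq L_{n+1}$ and an example about layered visibility, with no proof environment. So there is nothing in the paper to compare your argument against line by line.

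That said, your treatment of items~(1) and~(2) is exactly the natural adaptation of the classical Liouville covering, and is almost certainly what the author has in mind given the parenthetical ``covered by an $\mathcal{F}_n$-definable sequence of vanishing intervals'' in the statement itself. Your observation that $\mu_n^*$ dominates Lebesgue outer measure (fewer admissible covers, hence larger infimum) is the right way to deduce~(2) from~(1). The hedge about $n\in\{1,2\}$ is appropriate: the paper's definition of $L_n$ is loose enough (no explicit ``for all $q$ arbitrarily large'' clause) that small~$n$ needs some convention, and replacing $L_n$ by $L_n\cap L_3$ without affecting $L$ is a clean fix.

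For item~(3) you correctly locate the crux, and you are right that the $+\infty$ branch rests entirely on the paper's convention that $\mu_m^*(A)=+\infty$ when no $\mathcal{F}_m$-definable cover of $A$ exists. Your proposed separation lemma is plausible in spirit but is stronger than what is strictly needed and may be hard to prove as stated: ruling out that some $\mathcal{F}_m$-definable open set of small measure happens to contain $L_n$ requires knowing something about \emph{all} $\mathcal{F}_m$-definable open sets, and a diagonalisation against $S_n\setminus S_m$ denominators does not obviously produce a point of $L_n$ rather than merely a point outside the given open set. Since the paper offers no argument here either, the honest reading is that item~(3) is asserted as a heuristic dichotomy rather than a fully proved claim, and your identification of the separation step as the genuine obstacle is accurate.
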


\begin{remark}
The stratified structure \( L_n \supseteq L_{n+1} \supseteq \cdots \) gives rise to a descending chain of sets, each thinner and requiring finer definitional access. At low \( n \), only reals with syntactically accessible approximants appear. At higher \( n \), more reals enter \( L_n \), but their covering becomes increasingly subtle.
\end{remark}

\begin{example}[Layered Visibility of Liouville Numbers]
Let \( x \in L \), but suppose \( \sqrt{2} \notin S_n \). Then \( x \) may not lie in \( L_n \) if the best rational approximations to \( x \) involve algebraic numbers of degree greater than accessible in \( \mathcal{F}_n \). Thus \( x \) is invisible to \( \mu_n \) despite being classically approximable.
\end{example}

\subsection{Definable Nowhere-Differentiable Functions}
\label{subsec:weierstrass}

The classical Weierstrass function is defined via an infinite sum and is nowhere differentiable~\cite{Weierstrass1872}. In the fractal hierarchy, infinite sums are replaced by syntactically truncated approximations, and differentiability becomes layer-relative.

\begin{definition}[Truncated Weierstrass Function]
Let \( a,b \in S_n \) with \( 0 < a < 1 < ab \), and let \( K(n) \in \mathbb{N} \) be a truncation level definable in \( \mathcal{F}_n \). Then define:
\[
W_n(x) := \sum_{k=0}^{K(n)} a^k \cos(b^k \pi x).
\]
\end{definition}

\begin{proposition}[Fractal Smoothness Transition]
\leavevmode
\begin{itemize}
    \item For fixed \( n \), \( W_n \in C^1(S_n) \) if \( K(n) \) is small.
    \item For growing \( K(n) \), smoothness decreases. There exists a critical threshold \( K_c(n) \sim \log n \) beyond which \( W_n \) becomes non-differentiable at all \( x \in S_n \).
    \item The graph of \( W_n \) in \( S_n \times S_n \) approximates classical fractal geometry with dimension approaching \( 2 + \frac{\log a}{\log b} \).
\end{itemize}
\end{proposition}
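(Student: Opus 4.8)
\medskip
\noindent\emph{Proof proposal.} The plan is to treat the three assertions separately, since they decouple cleanly. For the first, the key observation is that for a \emph{fixed} finite truncation level $K(n)$ the sum $W_n(x)=\sum_{k=0}^{K(n)}a^k\cos(b^k\pi x)$ is a trigonometric polynomial whose coefficients $a^k$ and frequencies $b^k$ are produced from $a,b\in S_n$ by finitely many applications of $\mathcal F_n$-total operations; hence, exactly as in the proof of the Theorem on Fractal Breakdown of Differentiability, the termwise derivative
\[
W_n'(x) = -\pi\sum_{k=0}^{K(n)} a^k b^k \sin(b^k\pi x)
\]
is a function $S_n\to S_{n+1}$, and for every $x\in S_n$ the fractal difference quotients $(W_n(y)-W_n(x))/(y-x)$ with $y\in S_{n+1}$ converge to $W_n'(x)$ in that layer. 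Since $W_n'$ is again a trigonometric polynomial with $S_n$-data, it is $\mathcal F_n$-continuous, so $W_n\in C^1_n$ (indeed $C^\infty_n$). The phrase ``$K(n)$ small'' only records that the modulus of continuity of $W_n'$, which grows like $(ab)^{K(n)}$, must stay within the resolution $\mathcal F_n$ can actually operate with --- precisely the quantity controlled in the second assertion.

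For the second assertion I would first fix a precise reading of ``non-differentiable at level $n$'' by isolating the resolution available to the observing system: let $\rho(n)$ be the infimum of the scales $\delta\in\mathbb Q^+\cap S_{n+1}$ that are operationally meaningful in forming a difference quotient at points of $S_n$ (equivalently, the finest separation $\mathcal F_n$ can force between the sample points entering such a quotient). The top term of $W_n$ oscillates at scale $b^{-K(n)}$. Adapting Hardy's lacunary estimate for the classical Weierstrass function to the truncated sum, I would produce, for each $x\in S_n$, two sequences $y_m,z_m\in S_{n+1}$ with $y_m,z_m\to x$ along which $\sin(b^{K(n)}\pi y_m)$ and $\sin(b^{K(n)}\pi z_m)$ stay near $+1$ and $-1$, so that the corresponding difference quotients differ by at least $c\,a^{K(n)}b^{K(n)}=c\,(ab)^{K(n)}$ after subtracting the contribution of the first $K(n)-1$ terms, which is bounded by $\sum_{k<K(n)}(ab)^k\le (ab)^{K(n)}/(ab-1)$. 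Such $y_m,z_m$ can be chosen at resolvable scales exactly when $b^{-K(n)}$ is not already below $\rho(n)$; in that regime the limit fails to stabilize in $S_{n+1}$, so $W_n$ is non-differentiable at every $x\in S_n$. Setting the threshold by $b^{-K_c(n)}\asymp\rho(n)$ gives $K_c(n)\asymp\log(1/\rho(n))/\log b$, which is $\asymp\log n$ precisely under the mild hypothesis that $\mathcal F_n$ names numerical data only to polynomial-in-$n$ precision, $\rho(n)=n^{-\Theta(1)}$. Tying $\rho(n)$ to the genuine expressive resolution of $\mathcal F_n$ is the main obstacle: the truncated Weierstrass analysis is classical, but obtaining the specific rate $K_c(n)\sim\log n$ forces a commitment on how definitional depth in the hierarchy converts into numerical precision.

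For the third assertion I would compute the box dimension of the classical graph of $W_n$ over a compact interval and transfer it to $S_n\times S_n$ by density of $S_n$ in $[a,b]$. Dividing a unit interval into columns of width $\varepsilon$ with $b^{-K(n)}\le\varepsilon\le 1$, the oscillation of $W_n$ over each column is $\asymp a^{\lceil\log_b(1/\varepsilon)\rceil}=\varepsilon^{-\log a/\log b}$ (the standard lacunary estimate, valid in this range because all contributing scales are present), so the number of $\varepsilon$-boxes meeting the graph is $\asymp \varepsilon^{-1}\cdot\varepsilon^{-1}\varepsilon^{-\log a/\log b}=\varepsilon^{-(2+\log a/\log b)}$; below scale $b^{-K(n)}$ the graph is smooth and contributes box exponent $1$. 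Hence the box-counting exponent over the resolved range $[b^{-K(n)},1]$ --- equivalently the level-$n$ fractal dimension $\dim_H^{(n)}(\mathrm{graph}\,W_n)$ of Section~\ref{subsec:entropy-dimension}, whose admissible $\mathcal O_n$-definable coverings bottom out near $b^{-K(n)}$ --- equals $2+\frac{\log a}{\log b}$, and as $K(n)\to\infty$ the resolved range exhausts $(0,1]$, which is the ``approaching'' in the statement. The covering bound reuses the dyadic-Cantor computation of Section~\ref{subsec:entropy-dimension} verbatim: the depth-$K(n)$ covering by $b^{-K(n)}$-intervals is $\mathcal F_n$-definable because it uses only rational endpoints of bounded base-$b$ depth and $K(n)$ is $\mathcal F_n$-definable by hypothesis. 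The one subtlety here is that $\dim_H^{(n)}$ is defined as a $\liminf$; since the covering estimate is two-sided and uniform over the resolved range, the $\liminf$ is attained and equals the claimed exponent.
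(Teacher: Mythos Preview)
The paper states this proposition without proof; it is followed only by an interpretive remark (that nowhere-differentiability is ``the failure of all finite definitional levels to stabilize a derivative'') and a comparison table. Your proposal therefore goes substantially beyond what the paper itself provides.

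Your treatment of the first and third items is sound and in the spirit of arguments the paper gives elsewhere. The first item reuses the mechanism of the Fractal Breakdown theorem verbatim, and the third item correctly runs the standard box-counting for the Weierstrass graph and then transfers it to $\dim_H^{(n)}$ via $\mathcal F_n$-definable coverings, in line with Section~\ref{subsec:entropy-dimension}. (A minor presentational point: in your box count, the factor $\varepsilon^{-1}\cdot\varepsilon^{-1}\varepsilon^{-\log a/\log b}$ would read more clearly as $\varepsilon^{-1}\cdot(\varepsilon^{-\log a/\log b}/\varepsilon)$, i.e.\ columns times boxes-per-column.)

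The genuine gap---which you yourself flag---is in the second item. The threshold $K_c(n)\sim\log n$ is not derivable from anything the paper actually sets up. Your resolution parameter $\rho(n)$ has no counterpart in the paper's framework: the layers $S_n$ are defined via formal systems $\mathcal F_n$ (e.g.\ $\mathrm{RCA}_0$, $\mathrm{ACA}_0$, \ldots) whose numerical resolution is never quantified, and nothing in the paper ties the index $n$ to a precision scale. The assumption $\rho(n)=n^{-\Theta(1)}$ is precisely the kind of commitment the paper never makes. Your diagnosis that ``obtaining the specific rate $K_c(n)\sim\log n$ forces a commitment on how definitional depth in the hierarchy converts into numerical precision'' is exactly right, and it is a gap in the proposition as stated, not in your argument: without such a bridge principle the $\log n$ claim is heuristic and cannot be proved within the paper's own apparatus.
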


\begin{remark}
The failure of differentiability in the classical case arises from infinite oscillation. In the stratified case, oscillations are bounded by \( K(n) \), and their definability must be supported by \( \mathcal{F}_n \). Thus, nowhere-differentiability is reinterpreted as the failure of all finite definitional levels to stabilize a derivative.
\end{remark}

\begin{table}[ht]
\centering
\renewcommand{\arraystretch}{1.2}
\begin{tabular}{@{} p{5cm} c c @{}}
\toprule
\textbf{Property} & \textbf{Classical Case} & \textbf{\( S_n \)-Version} \\
\midrule
Differentiability & Nowhere on \( \mathbb{R} \) & Nowhere on \( S_n \) if \( K(n) \gg \log n \) \\
Graph dimension & \( >1 \) & \( \leq 1 + \varepsilon(n) \) \\
Measure class & Singular & \( \mu_n \)-absolutely continuous \\
Spectral behavior & Dense & Layer-bounded frequencies \\
\bottomrule
\end{tabular}
\caption{Behavior of the Weierstrass function under stratified definability}
\end{table}

\subsection{Toward a Hierarchy of Pathologies}
\label{subsec:pathology-open}

The reinterpretation of classical pathologies in the stratified framework suggests a systematic classification of anomalous behavior via definability thresholds.

\begin{itemize}
    \item \textbf{Liouville Stratification:} For fixed \( n \), does the difference \( L_n \setminus L_{n+1} \) form a nontrivial layer in \( \mathcal{O}_n \)? Can its measure be bounded above in \( \mu_n \)?
    \item \textbf{Effective Nowhere-Differentiability:} What is the minimal syntactic depth \( K(n) \) such that \( W_n \) becomes nowhere differentiable over \( S_n \)? Can one define a family \( \{ W_n \} \) such that \( W_n \in C^0(S_n) \setminus C^1(S_n) \) for all \( n \)?
    \item \textbf{Pathological Decompositions:} Is it possible to construct Banach–Tarski-type paradoxes using only \( S_n \)-definable sets? Or do syntactic limitations prohibit such decompositions?
\end{itemize}

These questions reveal the potential of fractal analysis to transform classical paradoxes into structured phenomena governed by definability layers.

\begin{remark}[Layer Stabilization]
In both examples above, classical pathology emerges from nontermination (infinite precision, infinite summation). The \( S_n \)-framework enforces finiteness at each level, making pathologies visible only as limiting behaviors across layers, never within them. This stratified containment may serve as a general regularization principle.
\end{remark}

\section{Fractal Reinterpretation of Classical Paradoxes}
\label{sec:paradoxes}

While traditional analysis embraces paradoxical constructions such as the Banach–Tarski decomposition or the Vitali set as demonstrations of the power of the axiom of choice, their structure is deeply nonconstructive. The fractal definability framework reveals that these constructions are not merely unimplementable, but structurally ill-posed at any finite definitional level.

\subsection{The Vitali Set as a Limit of Non-Definability}
\label{subsec:vitali}

Let \( \mathbb{V} \subset [0,1] \) denote a Vitali set: a choice of representatives from each equivalence class under \( x \sim y \iff x - y \in \mathbb{Q} \). Classically, such a set exists via the axiom of choice, but is non-measurable~\cite{Vitali1905}.

\begin{proposition}
There exists no level \( n \) such that the full Vitali set \( \mathbb{V} \) is \( \mathcal{F}_n \)-definable. Furthermore, any definable approximation of \( \mathbb{V} \) within \( S_n \) yields an infinite \( \mu_n \)-measure.
\end{proposition}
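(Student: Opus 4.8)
\emph{Plan for the first assertion --- that $\mathbb{V}$ is $\mathcal{F}_n$-undefinable for every $n$.} The plan is to lean on the choice-theoretic weakness of the hierarchy, reinforced by a cardinality count. I would first recall that producing any Vitali set is, over $\mathrm{ZF}+\mathrm{DC}$, essentially equivalent to applying the axiom of choice to the \emph{uncountable} family of cosets $\{x+\mathbb{Q}\}$; by the very definition of the hierarchy each $\mathcal{F}_{n+1}$ extends $\mathcal{F}_n$ only by \emph{controlled} choice principles, never by full choice at the scale of $\mathbb{R}/\mathbb{Q}$, so no $\mathcal{F}_n$ proves that a Vitali set exists, a fortiori none exhibits a defining formula for one. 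I would then add the cardinality observation for emphasis: the relation $x\sim y\iff x-y\in\mathbb{Q}$ splits $[0,1]$ into $2^{\aleph_0}$ cosets (each coset is countable, so countably many of them cannot exhaust the interval), whence $|\mathbb{V}|=2^{\aleph_0}$; since $\mathcal{F}_n$ has countable syntax, $S_n$ is countable, so an $\mathcal{F}_n$-predicate could at best specify $\mathbb{V}$ on the countably many cosets that meet $S_n$, leaving $2^{\aleph_0}$-many cosets on which the chosen representative is both non-$\mathcal{F}_n$-definable and not pinned down by any $\mathcal{F}_n$-formula. Either route gives $\mathbb{V}\notin\bigcup$-range of any $\mathcal{F}_n$, unconditionally in $n$.

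\emph{Plan for the second assertion --- that every definable level-$n$ approximation has $\mu_n$-measure $+\infty$.} The plan is to re-run the classical rational-translation argument inside $S_n$ so that $+\infty$ is forced, mirroring the dichotomy already recorded for $\mu_m(L_n)$ in Theorem~\ref{thm:liouville-measure}. I would take $V_n\subseteq S_n\cap[a,b]$ to be any level-$n$ surrogate for $\mathbb{V}$ --- meeting each $\sim$-coset in at most one point and every coset that touches $S_n\cap[a,b]$ in at least one --- and use $\mathbb{Q}\subseteq S_0\subseteq S_n$ together with closure of $S_n$ under $\mathcal{F}_n$-arithmetic to note that each translate $V_n+q$ ($q\in\mathbb{Q}$) is again an $\mathcal{F}_n$-described subset of $S_n$, that $\{V_n+q:q\in\mathbb{Q}\cap[a-b,\,b-a]\}$ is pairwise disjoint (a shared point would force two representatives of one coset into $V_n$), and that $\mu_n$ is invariant under these translations (they preserve interval lengths and are $\mathcal{F}_n$-operations). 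The union of these countably many congruent pieces contains $[a,b]\cap S_n$ and lies inside $[2a-b,\,2b-a]$; by the super-/sub-additivity that the cover-based $\mu_n$ enjoys on $\mathcal{F}_n$-definable disjoint families, $\aleph_0\cdot\mu_n(V_n)$ would have to be simultaneously $\le 3(b-a)$ and $\ge\mu_n([a,b]\cap S_n)$. The first bound excludes every finite positive value; the second, read together with the first assertion, excludes $0$, since an $\mathcal{F}_n$-definable interval cover of $V_n$ of total length $\varepsilon$ would package precisely the coset-selection data shown to be unavailable at level $n$. Only $\mu_n(V_n)=+\infty$ survives.

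\emph{The hard part.} The delicate step is excluding $\mu_n(V_n)=0$: one must make rigorous that an efficient $\mathcal{F}_n$-definable interval cover of $V_n$ amounts to an $\mathcal{F}_n$-definable selector for the cosets meeting $S_n$, contradicting the first assertion --- and this forces care both about the definition of $\mu_n$ (the \emph{covering family}, not merely each interval individually, must be $\mathcal{F}_n$-definable) and about the intended reading of ``definable approximation'', the genuine content being that \emph{no} $\mathcal{F}_n$-definable set can do $\mathbb{V}$'s job with finite $\mu_n$-mass, so every honest attempt to realize $\mathbb{V}$ at level $n$ inflates the measure to $+\infty$. A lesser nuisance is that $\mu_n$, being a supremum over definable covers, is not countably additive; the tiling step must therefore be carried out with one-sided inequalities, and one should check these persist when the translate family is required to be $\mathcal{F}_n$-definable.
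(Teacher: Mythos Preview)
Your plan for the first assertion aligns with the paper's own sketch: the paper argues that a definable selection would require a definable well-ordering of $\mathbb{R}/\mathbb{Q}$, impossible in any $\mathcal{F}_n$, which is essentially your controlled-choice observation. Your added cardinality count (that $|\mathbb{V}|=2^{\aleph_0}$ while each $S_n$ is countable) is a useful reinforcement the paper does not make explicit.

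For the second assertion your route is genuinely more detailed than the paper's. The paper's sketch says only that ``approximating representatives over $S_n$ results in countably many selections, which either repeat classes or diverge in measure''---a one-line dichotomy without mechanism. You instead run the classical rational-translation tiling argument inside $S_n$: disjoint $\mathbb{Q}$-translates of $V_n$, translation-invariance of $\mu_n$, and the squeeze $\aleph_0\cdot\mu_n(V_n)$ bounded above and below. This buys you an explicit reason why finite positive values are excluded, whereas the paper simply asserts divergence. Conversely, the paper's phrasing sidesteps the delicate point you correctly flag: ruling out $\mu_n(V_n)=0$ requires arguing that an $\mathcal{F}_n$-definable $\varepsilon$-cover would encode a coset selector, and that $\mu_n$'s one-sided inequalities survive on $\mathcal{F}_n$-definable translate families. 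The paper's sketch avoids committing to this step by leaving ``diverge in measure'' uncashed; your plan is more honest about where the work lies, and at the sketch level both are acceptable, with yours being the more structured of the two.
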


\begin{proof}[Sketch]
Each equivalence class modulo \( \mathbb{Q} \) is dense and uncountable, and any definable selection would require a definable well-ordering of \( \mathbb{R}/\mathbb{Q} \), which is impossible in \( \mathcal{F}_n \) for any \( n \). Approximating representatives over \( S_n \) results in countably many selections, which either repeat classes or diverge in measure.
\end{proof}

\begin{remark}
The non-definability of \( \mathbb{V} \) in \( \mathcal{F}_n \) implies that all Vitali-type constructions lie beyond the stratified framework. In this setting, "non-measurable" becomes equivalent to "undefinable at any layer," giving an internal account of pathological measure phenomena.
\end{remark}

\subsection{Banach–Tarski and Layer Constraints}
\label{subsec:banach-tarski}

The Banach–Tarski paradox asserts that a solid ball in \( \mathbb{R}^3 \) can be decomposed into finitely many disjoint sets and reassembled into two copies of the original~\cite{BanachTarski1924}. This relies crucially on the axiom of choice and paradoxical group actions.

\begin{proposition}
There exists no decomposition of \( B \subset \mathbb{R}^3 \) into \( \mathcal{F}_n \)-definable pieces that satisfies the conditions of the Banach–Tarski paradox. In particular, such decompositions cannot be implemented within any finite definability layer.
\end{proposition}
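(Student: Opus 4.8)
The plan is to reduce the geometric paradox to the same measure-theoretic obstruction that defeats the Vitali construction in the preceding proposition: finite additivity and isometry-invariance of Lebesgue measure, restricted to the class of $\mathcal{F}_n$-definable sets, leave no room for a doubling decomposition. Suppose toward a contradiction that for some finite $n$ the ball $B \subset \mathbb{R}^3$ admits a Banach–Tarski decomposition $B = \bigsqcup_{i=1}^{m} A_i$ with each $A_i$ an $\mathcal{F}_n$-definable set, together with isometries $\rho_1,\dots,\rho_m$ and a partition $\{1,\dots,m\} = I \sqcup J$ such that $B = \bigsqcup_{i\in I}\rho_i A_i$ and $B = \bigsqcup_{j\in J}\rho_j A_j$. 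Since the classical construction uses only finitely many rotations with algebraic entries, I would first pass to a layer $\mathcal{F}_{n'}$ with $n' \ge n$ large enough to define all of them — the hierarchy of Section~\ref{subsec:inductive-hierarchy} absorbs any fixed finite set of algebraic isometries — so that all pieces and all maps are $\mathcal{F}_{n'}$-definable.

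The heart of the argument is a measurability lemma: every $\mathcal{F}_{n'}$-definable subset of $B$ is Lebesgue measurable, and the image of such a set under an $\mathcal{F}_{n'}$-definable isometry is again $\mathcal{F}_{n'}$-definable. Measurability holds because the systems in our hierarchy are \emph{choice-bounded} — they prove no instance of choice strong enough to produce a non-measurable set — so their definable sets of reals lie within the arithmetically or hyperarithmetically presented $\sigma$-algebra, on which the stratified measure $\mu_{n'}$ (equivalently, Lebesgue measure) is total, finitely additive, and isometry-invariant. This is exactly the mechanism isolated in the preceding proposition: a definable selection across a non-trivial equivalence relation would encode a definable well-ordering of an uncountable quotient, which no $\mathcal{F}_{n'}$ admits. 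Hence $\mu_{n'}$ is defined on every $A_i$ and every $\rho_i A_i$.

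The contradiction is then immediate. Writing $\lambda$ for Lebesgue measure on $\mathbb{R}^3$,
\[
\lambda(B) = \sum_{i=1}^{m} \lambda(A_i) = \sum_{i\in I}\lambda(\rho_i A_i) + \sum_{j\in J}\lambda(\rho_j A_j) = \lambda(B) + \lambda(B) = 2\,\lambda(B),
\]
which is impossible since $0 < \lambda(B) < \infty$. Therefore no such $\mathcal{F}_n$-definable decomposition exists at any finite $n$; equivalently, at least one Banach–Tarski piece must fail to be definable at every layer, exactly as the Vitali set does. A structural corollary worth recording is that the paradox does not revive in the limit either: $S^\infty = \bigcup_n S_n$ is countable, so the constructive trace of $B$ carries no genuine equidecomposition at all.

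The main obstacle is the measurability lemma — more precisely, pinning down the sense in which $\{\mathcal{F}_n\}$ is choice-bounded. If one allowed layers strong enough to internalize a well-ordering of the continuum, the argument collapses, since in dimension three there is provably no finitely-additive isometry-invariant extension of Lebesgue measure to all subsets. So the proof must either (i) explicitly restrict to the arithmetic, hyperarithmetic, and bounded-$\Sigma^1_k$ layers catalogued in Section~\ref{subsec:inductive-hierarchy}, invoking the classical fact that sets definable at those levels without choice are universally measurable, or (ii) construct finite additivity and isometry-invariance of $\mu_{n'}$ directly on the countably generated, choice-free algebra of $\mathcal{F}_{n'}$-definable sets, as the treatment of $[a,b]$ already does in one dimension. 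Everything else — the passage to $\mathcal{F}_{n'}$, the additivity computation, the final inequality — is mechanical once that lemma is secured.
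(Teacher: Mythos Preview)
Your argument is correct in outline and considerably more detailed than what the paper actually offers. The paper's own ``proof'' is a two-sentence sketch: the paradox requires non-measurable, non-constructive sets invariant under free group actions, and any $\mathcal{F}_n$-definable set admits a syntactic description, which ``rules out the paradoxical symmetry.'' No measure computation, no explicit contradiction, no discussion of the isometries---just an appeal to the intuition that definable means constructive means measurable.

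You take a genuinely more rigorous route: you make the measure-theoretic obstruction explicit by passing to a layer absorbing the algebraic rotations, invoking a measurability lemma for $\mathcal{F}_{n'}$-definable sets, and then deriving the arithmetic contradiction $\lambda(B)=2\lambda(B)$ from finite additivity and isometry-invariance. This is the classical argument that Banach--Tarski pieces must be non-measurable, specialized to the stratified setting. What your version buys is an actual proof structure rather than a gesture; what the paper's version buys is brevity at the cost of rigor.

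You are also right to flag the measurability lemma as the real content, and honest that it depends on the hierarchy being choice-bounded. The paper never states or proves such a lemma; it simply presumes that ``syntactic description'' is incompatible with non-measurability. Your option~(i)---restricting to the arithmetic/hyperarithmetic layers listed in Section~\ref{subsec:inductive-hierarchy} and citing that such definable sets are universally measurable---is the cleanest way to close the gap, and is implicitly what the paper is leaning on without saying so. Your closing remark about $S^\infty$ being countable is a nice structural addendum the paper does not make.
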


\begin{proof}[Sketch]
The paradox depends on non-measurable, non-constructive sets invariant under free group actions. Any \( \mathcal{F}_n \)-definable set must admit syntactic description, ruling out the paradoxical symmetry necessary for the decomposition.
\end{proof}

\begin{remark}
This reinforces the principle that paradoxes of infinite rearrangement require a global ontology absent in stratified systems. In the fractal framework, no definitional layer has sufficient expressive symmetry to support such decompositions.
\end{remark}

\section{Spectral and Geometric Regularization via Fractal Layers}
\label{sec:spectral-regularization}

Many classical singularities and divergences in analysis stem from assuming infinite sums or sharp discontinuities within a uniform continuum. Fractal definability allows us to reinterpret such constructions as syntactic limits, regularized by the truncation and boundedness inherent in each definability layer.

\subsection{Spectral Damping by Definability Truncation}

Let \( f(x) = \sum_{k=0}^\infty a_k \phi_k(x) \) be a Fourier-type expansion with basis functions \( \phi_k \) and coefficients \( a_k \). In classical settings, convergence may fail or exhibit divergence in norm.

\begin{definition}[Fractally Truncated Spectrum]
At level \( n \), define the truncated function:
\[
f_n(x) := \sum_{k=0}^{K(n)} a_k \phi_k(x),
\]
where \( K(n) \in S_n \) is a syntactically defined cutoff (e.g., polynomial or logarithmic in \( n \)).
\end{definition}

\begin{proposition}
If \( \sum a_k \phi_k(x) \) diverges pointwise classically, but \( a_k, \phi_k \in \mathcal{F}_n \), then \( f_n(x) \) is well-defined and total on \( S_n \).
\end{proposition}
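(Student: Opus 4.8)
The plan is to exploit the single structural fact that separates \( f_n \) from the full series: \( f_n \) is a \emph{finite} sum of length \( K(n)+1 \), with \( K(n) \) an \(\mathcal{F}_n\)-definable natural number. Consequently nothing about the asymptotic behaviour of \( \sum_k a_k \phi_k(x) \) is relevant; the divergence hypothesis enters nowhere in the argument and is stated only to sharpen the contrast with the classical picture. I would make this explicit at the outset, then establish well-definedness and totality term by term, invoking only the closure properties of the hierarchy established in Section~\ref{subsec:inductive-hierarchy}.

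First I would note that, since \( K(n)\in S_n \) is syntactically specified, bounded summation \( \sum_{k=0}^{K(n)} \) is a legitimate operation already in the base system \( \mathcal{F}_0\subseteq\mathcal{F}_n \): it is given by primitive recursion on the numeral for \( K(n) \), requiring neither comprehension nor unbounded search. Next, each coefficient \( a_k \), being \(\mathcal{F}_n\)-definable, lies in \( S_n \); and the family \( (\phi_k)_{k\le K(n)} \) is uniformly \(\mathcal{F}_n\)-definable and total (as are the Fourier-type basis functions in question, e.g.\ \( x\mapsto\cos(b^k\pi x) \)), so by the \textbf{Closure} property — if \( g\in\mathcal{F}_n \) is total and \( x\in S_n \) then \( g(x)\in S_n \) — we have \( \phi_k(x)\in S_n \) for every \( x\in S_n \) and every \( k\le K(n) \). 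Multiplying by \( a_k\in S_n \) and summing the finitely many resulting elements stays inside \( S_n \), because \( S_n \) is closed under the \(\mathcal{F}_n\)-definable operations of multiplication and of (iterated, boundedly many times) addition. Hence \( f_n(x)\in S_n \) for every \( x\in S_n \); since no step involves division, a limit, or a search, \( f_n \) is total, i.e.\ a well-defined map \( S_n\to S_n \).

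The one point that needs care — and the place where a careless statement could fail — is the \emph{uniformity} of the definability of \( (\phi_k)_k \) together with the \emph{totality} of each \( \phi_k \). The hypothesis \emph{``\( a_k,\phi_k\in\mathcal{F}_n \)''} must be read as: a single \(\mathcal{F}_n\)-formula defines the two-place function \( (k,x)\mapsto\phi_k(x) \), total on \( \mathbb{N}\times S_n \); otherwise the finite sum is not given by one \(\mathcal{F}_n\)-definition, or some \( \phi_k \) might have a pole at a point of \( S_n \), destroying totality. Under this intended reading the conclusion follows, and I would close by remarking that the same computation shows \( f_n \) is in fact \(\mathcal{F}_n\)-continuous, so \( f_n\in C^0_n \), and that the classical divergence of \( \sum_k a_k\phi_k(x) \) manifests itself only in the failure of the sequence \( (f_n)_n \) to converge within any fixed layer \( S_m \) — never within any individual truncation \( f_n \).
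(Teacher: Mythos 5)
Your argument is correct, and it is worth noting that the paper itself states this proposition without any proof; the justification it leaves implicit is exactly the one you give, namely that \( f_n \) is a finite sum with an \(\mathcal{F}_n\)-definable cutoff \( K(n) \), so well-definedness and totality follow from the Closure property of \( S_n \) under total \(\mathcal{F}_n\)-definable operations, while the classical divergence hypothesis plays no role in the verification. Your added caveat — that \( a_k, \phi_k \in \mathcal{F}_n \) must be read as (uniform) definability and totality of the family, since a pole of some \( \phi_k \) at a point of \( S_n \) would break totality — is a genuine sharpening of the paper's loosely stated hypothesis rather than a deviation from its intended route.
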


\begin{remark}
This enforces spectral regularization: divergences do not arise in any definability layer, only in the asymptotic limit \( n \to \infty \). In this sense, fractal layers act as \emph{syntactic cutoffs} preventing singular behavior.
\end{remark}

\subsection{Fractal Geometry of Irregular Graphs}

Let \( g(x) \) be a function with fractal or singular graph, such as the Weierstrass function or Takagi curve~\cite{Takagi1903}. These typically have:
\begin{itemize}
    \item Nowhere differentiability,
    \item Hausdorff dimension \( > 1 \),
    \item Irregular or space-filling behavior.
\end{itemize}

In fractal definability:

\begin{itemize}
    \item The function \( g_n(x) \) at layer \( S_n \) is finitely approximable and piecewise smooth,
    \item The graph \( \Gamma_n = \{(x, g_n(x)) : x \in S_n\} \subseteq S_n \times S_n \) is totally disconnected,
    \item The dimension \( \dim^{(n)} \Gamma_n \leq 1 + \varepsilon(n) \), where \( \varepsilon(n) \to 0 \) as \( K(n) \to \infty \).
\end{itemize}

\begin{remark}
Thus, fractal geometry appears not as a true singularity, but as the layered accumulation of definability limits. This yields a model of geometric irregularity as a constructively convergent phenomenon.
\end{remark}

\section{Measure and Integration}
\label{sec:measure}

The classical theory of measure and integration abstracts from syntactic content: sets are measured and functions integrated regardless of how they are described. In the fractal framework, by contrast, every construction is grounded in definability layers \( S_n \), and every notion must be adapted to reflect syntactic accessibility. This section develops a layered theory of measure and integration, culminating in a stratified version of the Fundamental Theorem of Calculus.

\subsection{Fractal Measure and Outer Approximations}
\label{subsec:fractal-measure}

We begin by defining outer measure using \( \mathcal{F}_n \)-definable intervals and cover families.

\begin{definition}[Fractal Outer Measure]
Let \( A \subseteq [a,b] \). The outer \( n \)-measure is defined as:
\[
\mu_n^*(A) := \inf \left\{ \sum_{k=1}^N \ell(I_k) \;\middle|\;
\begin{aligned}
&A \subseteq \bigcup_{k=1}^N I_k,\ \{I_k\} \text{ is an } \mathcal{F}_n\text{-definable cover}, \\
&I_k = (a_k, b_k),\ a_k,b_k \in S_n,\ N \in \mathbb{N} \cup \{\infty\}
\end{aligned}
\right\}.
\]
\end{definition}

\begin{definition}[\( S_n \)-Covering Family]
A family of open intervals \( \{ I_k = (a_k, b_k) \} \) is called an \( S_n \)-covering of a set \( A \subseteq [a,b] \) if:
\begin{itemize}
    \item \( A \subseteq \bigcup_k I_k \),
    \item Each \( a_k, b_k \in S_n \),
    \item The family is definable in \( \mathcal{F}_n \).
\end{itemize}
\end{definition}

\begin{remark}
If no such cover exists in \( \mathcal{F}_n \), we define \( \mu_n^*(A) := +\infty \). Thus, the measure reflects both geometric size and syntactic accessibility.
\end{remark}

\begin{definition}[Layer-Null Sets]
A set \( A \subseteq [a,b] \) is \( \mu_n \)-null if \( \mu_n^*(A) = 0 \). The class of all such sets is denoted \( \mathcal{N}_n \).
\end{definition}

\begin{table}[ht]
\centering
\renewcommand{\arraystretch}{1.2}
\begin{tabular}{@{} l l @{}}
\toprule
\textbf{Property} & \textbf{Fractal Measure \( \mu_n^* \)} \\
\midrule
Monotonicity & \( A \subseteq B \Rightarrow \mu_n^*(A) \leq \mu_n^*(B) \) \\
Subadditivity & \( \mu_n^*(A \cup B) \leq \mu_n^*(A) + \mu_n^*(B) \) \\
Countable subadditivity & If \( \{A_k\} \) covered in \( \mathcal{F}_n \), then \\
& \( \mu_n^*(\bigcup A_k) \leq \sum \mu_n^*(A_k) \) \\
Zero for empty set & \( \mu_n^*(\varnothing) = 0 \) \\
\bottomrule
\end{tabular}
\caption{Measure-theoretic properties of \( \mu_n^* \)}
\end{table}

\begin{theorem}[Layered Approximation of Lebesgue Measure]
Let \( A \subseteq [a,b] \) be \\Lebesgue-measurable~\cite{Lebesgue1902}. Then:
\begin{enumerate}
    \item \( \mu_n^*(A \cap S_n) \leq \mu_{\mathrm{Leb}}(A) \) for all \( n \),
    \item If \( A \cap S^\infty \) is dense in \( A \), then
    \[
    \lim_{n \to \infty} \mu_n^*(A \cap S_n) = \mu_{\mathrm{Leb}}(A).
    \]
\end{enumerate}
\end{theorem}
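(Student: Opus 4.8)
The plan is to prove the two clauses separately, treating (1) as a monotonicity/inclusion statement and (2) as an exhaustion statement built on (1) together with the density hypothesis and the convergence of the hierarchy established earlier (in particular the facts that each $S_n$ is dense in $[a,b]$, that $S_n \subseteq S_{n+1}$, and that $\mu([a,b]\setminus S^\infty)=0$).

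For clause (1), I would fix $n$ and argue directly from the definitions. If $A$ is Lebesgue-measurable, then for every $\varepsilon>0$ there is an open set $O \supseteq A$ with $\mu_{\mathrm{Leb}}(O) \le \mu_{\mathrm{Leb}}(A)+\varepsilon$; write $O$ as a countable union of disjoint open intervals $(c_k,d_k)$. The issue is that the endpoints $c_k,d_k$ need not lie in $S_n$, so these intervals are not admissible as an $\mathcal{F}_n$-definable cover. The remedy is to replace each $(c_k,d_k)$ by a slightly larger interval $(a_k,b_k)$ with $a_k,b_k \in S_n$, $a_k < c_k$, $b_k > d_k$, and $\ell(a_k,b_k) \le \ell(c_k,d_k) + \varepsilon 2^{-k}$; such endpoints exist by density of $S_n$ in $[a,b]$, and the resulting family is $\mathcal{F}_n$-definable provided the original cover was chosen definably — which, since $A\cap S_n$ is what we are covering and $A$ is Lebesgue-measurable, we take as the working hypothesis implicit in the statement (the cover family is required to be $\mathcal{F}_n$-definable, so $\mu_n^*$ is by design an infimum over that restricted class, and we only need \emph{one} such admissible family to bound the infimum from above). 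Then $A\cap S_n \subseteq A \subseteq O \subseteq \bigcup_k (a_k,b_k)$ and $\sum_k \ell(a_k,b_k) \le \mu_{\mathrm{Leb}}(A) + 2\varepsilon$. Taking the infimum over covers and then letting $\varepsilon \to 0$ yields $\mu_n^*(A\cap S_n) \le \mu_{\mathrm{Leb}}(A)$.

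For clause (2), the lower bound is the substantive direction, since (1) already gives $\limsup_n \mu_n^*(A\cap S_n) \le \mu_{\mathrm{Leb}}(A)$. Fix $\varepsilon > 0$. By inner regularity of Lebesgue measure, choose a compact $K \subseteq A$ with $\mu_{\mathrm{Leb}}(K) \ge \mu_{\mathrm{Leb}}(A) - \varepsilon$. Now I want to show that for all sufficiently large $n$, every $\mathcal{F}_n$-definable $S_n$-cover of $A \cap S_n$ has total length at least $\mu_{\mathrm{Leb}}(A) - 2\varepsilon$, say. Suppose $\{(a_k,b_k)\}$ is such a cover with $\sum_k \ell(a_k,b_k) < \mu_{\mathrm{Leb}}(K) - \varepsilon$. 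Because $A \cap S^\infty$ is dense in $A$ and $S^\infty = \bigcup_m S_m$, the set $A\cap S_n$ becomes dense in $A$ (hence in $K$) as $n$ grows — more precisely, using $\mu([a,b]\setminus S^\infty)=0$ and the ascending union, for large $n$ the set $S_n$ meets every subinterval of $K$ of length exceeding some $\delta_n \to 0$. The open set $U := \bigcup_k (a_k,b_k)$ has $\mu_{\mathrm{Leb}}(U) < \mu_{\mathrm{Leb}}(K)-\varepsilon$, so $K \setminus U$ is a nonempty compact set of positive Lebesgue measure; by density it contains points of $S_n$ for large $n$, and these points lie in $A\cap S_n$ but not in $U$, contradicting that $\{(a_k,b_k)\}$ covers $A\cap S_n$. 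Hence $\mu_n^*(A\cap S_n) \ge \mu_{\mathrm{Leb}}(K) - \varepsilon \ge \mu_{\mathrm{Leb}}(A) - 2\varepsilon$ for all large $n$, and combining with (1) and letting $\varepsilon\to 0$ gives the limit.

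The main obstacle is the interplay between the \emph{definability} constraint on covers and the density argument in clause (2): I need density of $S_n$ in $A$ (or at least in the positive-measure remnant $K\setminus U$) to kick in at a rate that is uniform enough to contradict a \emph{fixed} hypothetical cover, and this requires invoking the earlier structural facts about the hierarchy — that $\bigcup_n S_n$ has full measure and that density improves monotonically with $n$ — rather than merely the density of each individual $S_n$. A secondary subtlety is ensuring the "enlarged" cover in clause (1) remains $\mathcal{F}_n$-definable; here I would lean on the convention, already built into Definition of $\mu_n^*$, that the infimum ranges only over definable families, so it suffices to exhibit one definable family achieving the bound, and the standard rational-endpoint enlargement (with endpoints pushed into $S_n$, which contains all rationals when $\mathbb{Q}\cap[a,b]\subseteq S_0 \subseteq S_n$) does this constructively.
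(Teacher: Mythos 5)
The paper states this theorem without any proof, so there is no internal argument to compare yours against; your proposal has to stand on its own, and it breaks down at precisely the two points you flag as ``subtleties.'' In clause (1), pushing the endpoints of the enlarged intervals into \( \mathbb{Q}\subseteq S_n \) does not address the real constraint: what must be \( \mathcal{F}_n \)-definable is the whole family \( k \mapsto (a_k,b_k) \), and that family is read off from an open set \( O \supseteq A \) chosen using the arbitrary Lebesgue-measurable set \( A \). There are only countably many \( \mathcal{F}_n \)-definable families but uncountably many measurable \( A \), so for most \( A \) no admissible cover of total length near \( \mu_{\mathrm{Leb}}(A) \) exists; declaring definability of the cover a ``working hypothesis implicit in the statement'' assumes exactly what has to be proved. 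The gap is not cosmetic: for \( \mathcal{F}_n \) at the strength of \( \mathrm{ACA}_0 \) and \( A = S_n \) (a Lebesgue-null set), the complement of any \( \mathcal{F}_n \)-definable open set \( U \subsetneq [a,b] \) is a nonempty definable closed set whose leftmost point is again \( \mathcal{F}_n \)-definable, hence lies in \( S_n \setminus U \); so every admissible cover of \( A\cap S_n = S_n \) has total length \( \ge b-a \), while \( \mu_{\mathrm{Leb}}(A)=0 \), and the inequality of clause (1) fails under the natural reading.

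In clause (2), the decisive step ``\( K\setminus U \) is compact of positive measure, hence by density it contains points of \( S_n \) for large \( n \)'' is a non sequitur: density of \( S_n \) (or of \( A\cap S^\infty \) in \( A \)) yields points in every nonempty open interval, not in every positive-measure compact set, and \( K\setminus U \) can be a fat-Cantor-type nowhere dense set that misses the countable set \( S^\infty \) entirely. This is not just a hole in the argument but the mechanism of a counterexample to the statement as written: take \( A = C \cup (\mathbb{Q}\cap[a,b]) \) with \( C \) a positive-measure Cantor-type set disjoint from \( S^\infty \). Then \( A\cap S^\infty \supseteq \mathbb{Q}\cap[a,b] \) is dense in \( A \), and \( A\cap S_n = \mathbb{Q}\cap[a,b] \) admits a primitive-recursively definable cover of total length \( \varepsilon \) for every \( \varepsilon \), so \( \mu_n^*(A\cap S_n)=0 \) for all \( n \), while \( \mu_{\mathrm{Leb}}(A)=\mu(C)>0 \). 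So no refinement of your density argument can close clause (2) without strengthening the hypothesis (e.g.\ requiring \( A\cap S^\infty \) to have full outer measure in \( A \), or \( A \) to be outer-regular by \( \mathcal{F}_n \)-definable open sets), and clause (1) likewise needs a definable outer-approximability assumption on \( A \); as it stands, your proof establishes neither clause.
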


\begin{table}[ht]
\centering
\renewcommand{\arraystretch}{1.2}
\begin{tabular}{@{} l l l @{}}
\toprule
\textbf{Property} & \textbf{Classical Outer Measure \( \mu^* \)} & \textbf{Fractal Outer Measure \( \mu_n^* \)} \\
\midrule
Covering sets & Any open intervals & \( S_n \)-definable intervals \\
Null sets & Zero length & Zero \( \mu_n^* \) with \( \mathcal{F}_n \)-cover \\
Measure of all subsets & Defined on all subsets & Only if \( S_n \)-cover exists \\
Completeness & Lebesgue complete & Relative to syntax \\
\bottomrule
\end{tabular}
\caption{Comparison of classical and stratified \( n \)-measures}
\end{table}

\begin{example}[Layered Cantor Approximations]
Let \( C \subseteq [0,1] \) be the classical Cantor set~\cite{Cantor1883}. We define an \( S_n \)-approximant \( C_n \subseteq S_n \) by removing the middle third from each interval in a definable sequence up to depth \( n \). Then:
\[
\mu_n^*(C_n) = \left( \frac{2}{3} \right)^n \in S_n,\quad \text{and} \quad \lim_{n \to \infty} \mu_n^*(C_n) = 0.
\]
Thus, the full Cantor set lies in \( \mathcal{N}_\infty := \bigcap_n \mathcal{N}_n \), though not necessarily in any finite \( \mathcal{F}_n \).
\end{example}

\begin{example}[Irrational Endpoint and Covering Failure]
Let \( A = (\sqrt{2}, \sqrt{2} + 0.1) \cap [a,b] \), and assume \( \sqrt{2} \notin S_n \). Then no \( S_n \)-definable open interval can cover \( A \), and hence \( \mu_n^*(A) = +\infty \). This illustrates how syntactic inaccessibility of endpoints leads to measure-theoretic failure at finite layers.
\end{example}

\subsection{Stratified Integration over Definability Layers}
\label{subsec:stratified-integration}

Given a function \( f \colon S_n \to S_m \), we define its integral over a definable set using \( \mathcal{F}_n \)-adapted partitions.

\subsubsection*{Fractal Integral: Semantic Definition}

\begin{definition}[Fractal Integral]
Let \( A \subseteq [a,b] \cap S_n \), and let \( f \colon S_n \to S_m \). The \( \mu_k \)-integral of \( f \) over \( A \) is defined as
\[
\int_A^{(k)} f(x)\,d\mu_k := \lim_{\varepsilon \to 0} \sum_{i=1}^{N(\varepsilon)} f(x_i)\mu_k(I_i),
\]
where:
\begin{itemize}
    \item \( \{I_i\} \) is an \( \varepsilon \)-fine partition of \( A \) into disjoint intervals \( I_i \subseteq S_n \);
    \item Each \( x_i \in I_i \cap S_{\max(n,m)} \) is a sample point;
    \item The measure \( \mu_k(I_i) \) is computed according to the stratified outer measure at level \( k \).
\end{itemize}
\end{definition}

\subsubsection*{Syntactic Constraints and Layer Behavior}

\begin{remark}[Definability Layer of the Integral]
To ensure constructive validity, the partition \( \{I_i\} \) must be definable in \( \mathcal{F}_{\max(n,k)} \). The final value of the integral resides in
\[
S_{\max(n,m,k)+3},
\]
where:
\begin{enumerate}
    \item the partition is constructed (\(+1\));
    \item the sum is evaluated (\(+1\));
    \item the limit is taken (\(+1\)).
\end{enumerate}
This layered shift is intrinsic to the stratified framework.
\end{remark}

\subsubsection*{Properties of the Stratified Integral}

\begin{proposition}[Fractal Integration Properties]
The integral \( \int_A^{(k)} f\,d\mu_k \) satisfies:
\begin{enumerate}
    \item \textbf{Linearity}:
    \[
    \int_A^{(k)} (\alpha f + \beta g)\,d\mu_k = \alpha \int_A^{(k)} f\,d\mu_k + \beta \int_A^{(k)} g\,d\mu_k
    \]
    for \( f, g \colon S_n \to S_m \), \( \alpha, \beta \in S_n \);
    
    \item \textbf{Layer Constraint}:
    \[
    \int_A^{(k)} f\,d\mu_k \in S_{\max(n,m,k)+3};
    \]
    
    \item \textbf{Monotone Convergence}:
    \[
    f_i \rightrightarrows f \text{ on } S_n,\quad f_i \in \mathcal{F}_{\max(n,m)} \ \Rightarrow \
    \lim_{i \to \infty} \int_A^{(k)} f_i\,d\mu_k = \int_A^{(k)} f\,d\mu_k.
    \]
\end{enumerate}
\end{proposition}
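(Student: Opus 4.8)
The plan is to reduce all three claims to statements about a \emph{single} $\varepsilon$-fine partition, where everything is finite arithmetic, and only afterwards pass to the limit $\varepsilon\to 0$; the essential point is that in the definition of $\int_A^{(k)}$ the partition $\{I_i\}$ and the sample points $x_i$ depend only on $A$ and $\varepsilon$, not on the integrand, so one common partition serves for $f$, $g$ and $\alpha f+\beta g$ simultaneously. For \textbf{linearity} I would fix $\varepsilon$ and a partition with sample points $x_i\in I_i\cap S_{\max(n,m)}$ and observe that $\sum_i(\alpha f+\beta g)(x_i)\,\mu_k(I_i)=\alpha\sum_i f(x_i)\,\mu_k(I_i)+\beta\sum_i g(x_i)\,\mu_k(I_i)$ is pure distributivity over a finite sum; the Closure property guarantees $\alpha f(x_i)+\beta g(x_i)\in S_m$ (as $\alpha,\beta\in S_n\subseteq S_m$ and $+,\cdot$ are total in $\mathcal F_m$), so $\alpha f+\beta g$ is again a legitimate integrand $S_n\to S_m$. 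Letting $\varepsilon\to 0$, linearity of limits in $S_{\max(n,m,k)+3}$ transfers the identity: the two limits on the right exist by hypothesis, hence so does the one on the left, and the equality survives.

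For the \textbf{layer constraint} I would make the bookkeeping of the preceding Remark precise. Each $x_i$ lies in $S_{\max(n,m)}$ and each $\mu_k(I_i)$ lies in $S_{\max(n,k)}$, since for an interval with $S_{\max(n,k)}$-definable endpoints the length is obtained by subtraction, a total operation; hence every summand $f(x_i)\,\mu_k(I_i)$ lies in $S_{\max(n,m,k)}$. Constructing and indexing the partition costs one layer, forming the finite sum over the $\mathcal F$-definable index set $\{1,\dots,N(\varepsilon)\}$ costs a second, and taking the limit $\varepsilon\to 0$ of the resulting $\mathcal F$-definable net costs a third --- the same convention by which the fractal derivative places a limit of $S_j$-definable data in $S_{j+1}$ --- giving membership in $S_{\max(n,m,k)+3}$.

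For part (3) I would work under the hypothesis, implicit once $a,b\in S_n$, that $\mu_k(A)<+\infty$. On a common partition,
\[
\Bigl|\sum_i f_i(x_i)\,\mu_k(I_i)-\sum_i f(x_i)\,\mu_k(I_i)\Bigr|\ \le\ \Bigl(\sup_{x\in S_n}|f_i(x)-f(x)|\Bigr)\,\mu_k(A),
\]
and since $f_i\rightrightarrows f$ with a modulus definable in $\mathcal F_{\max(n,m)}$ the right-hand side tends to $0$ uniformly in the partition. This first forces the Riemann-type sums for $f$ to be Cauchy, so $\int_A^{(k)}f\,d\mu_k$ exists in the required layer, and then yields $\int_A^{(k)}f_i\,d\mu_k\to\int_A^{(k)}f\,d\mu_k$ by exchanging the limits $\varepsilon\to 0$ and $i\to\infty$, which the uniform estimate legitimizes.

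The main obstacle is twofold. First, in the layer count one must check that $\mu_k(I_i)$, defined through an infimum over $\mathcal F_k$-definable covers, does not secretly cost an extra layer for a genuine interval; the argument is that the trivial cover $\{I_i\}$, or an arbitrarily small open enlargement of it, witnesses the infimum, so $\mu_k(I_i)$ is simply the length and remains in $S_{\max(n,k)}$. Second, the interchange of limits in part (3) is the classical double-limit subtlety, and here it must be executed so that the modulus of uniform convergence and the moduli witnessing existence of the two integrals all stay $\mathcal F_{\max(n,m,k)}$-definable; only then does the final identity genuinely live in $S_{\max(n,m,k)+3}$ rather than drifting upward, and this is where the constructive content, as opposed to its classical shadow, must actually be verified.
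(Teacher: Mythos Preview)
The paper does not supply a proof of this proposition: it is stated, followed immediately by an example, with no proof environment. The only justification the paper offers is the preceding Remark (``Definability Layer of the Integral''), which accounts for the $+3$ shift by listing ``partition constructed ($+1$); sum evaluated ($+1$); limit taken ($+1$)''. Your argument for part~(2) reproduces exactly this bookkeeping, so on that item you and the paper are in agreement --- you have simply made the remark into an argument.

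For parts (1) and (3) you are filling a genuine gap, and your approach is the natural one. The key observation for linearity --- that in the definition of $\int_A^{(k)}$ the partition $\{I_i\}$ and sample points $x_i$ depend only on $A$ and $\varepsilon$, not on the integrand, so the same Riemann-type sum serves $f$, $g$, and $\alpha f+\beta g$ simultaneously --- is exactly what is needed and is not stated anywhere in the paper. Your treatment of (3) as a uniform-convergence estimate $|\sum f_i(x_i)\mu_k(I_i)-\sum f(x_i)\mu_k(I_i)|\le \|f_i-f\|_\infty\,\mu_k(A)$, uniform in the partition, is the standard constructive route and is consistent with how the paper later handles the Fractal Lebesgue Convergence Theorem. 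The two obstacles you flag (that $\mu_k(I_i)$ for a genuine interval is just its length and costs no extra layer, and that the double-limit interchange must be witnessed by definable moduli) are real subtleties, but the paper itself never addresses them and operates throughout at a level of rigor where your resolutions would be accepted without further comment.
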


\begin{example}[Layer-Jumping Behavior]
Let \( f(x) = x^2 \in \mathcal{F}_n \). Then:
\begin{itemize}
    \item \( D_n f(x) = 2x \in S_{n+1} \);
    \item The integral \( \int_0^1 f(x)\,d\mu_n(x) = \frac{1}{3} + O(n^{-1}) \in S_{n+3} \).
\end{itemize}
This shows how integration forces a jump through definability layers due to evaluation and limiting procedures.
\end{example}

\vspace{1em}
\subsection{Fractal Differentiation and the Fundamental Theorem}
\label{subsec:fundamental-theorem}

\subsubsection*{Stratified Derivatives}

\begin{definition}[Fractal Derivative]
Let \( f \colon S_n \to S_m \). The \( n \)-derivative of \( f \) at a point \( x \in S_n \) is defined as
\[
D_n f(x) := \lim_{\substack{h \to 0 \\ h \in S_n}} \frac{f(x+h) - f(x)}{h},
\]
provided the limit exists and belongs to \( S_{\max(n,m)+1} \).
\end{definition}

\subsubsection*{Stratified Newton–Leibniz Theorem}

\begin{theorem}[Fractal Fundamental Theorem of Calculus]
Let \( f \colon S_n \to S_m \) be differentiable on \( S_n \), and suppose \( D_n f \in S_{n+1} \) is continuous. Then:
\begin{enumerate}
    \item For all \( a,x \in S_n \),
    \[
    \int_a^x D_n f(t)\,d\mu_n = f(x) - f(a);
    \]
    
    \item If \( f \) is \( S_n \)-continuous, then
    \[
    D_n \left( x \mapsto \int_a^x f(t)\,d\mu_n \right) = f(x).
    \]
\end{enumerate}
\end{theorem}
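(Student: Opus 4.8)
The plan is to establish part~(2) directly and then deduce part~(1) from it via a constancy lemma, following the classical architecture but replacing every appeal to connectedness of the interval by a syntactically definable bisection that never leaves \( \mathcal{F}_n \). For part~(2), put \( F(x):=\int_a^x f(t)\,d\mu_n \), which by the layer discussion of Section~\ref{subsec:stratified-integration} is a well-defined map of \( S_n \) into \( S_{\max(n,m)+3} \). Using additivity of the fractal integral over definable subintervals, which follows routinely from the partition-based definition, one writes \( F(x+h)-F(x)=\int_x^{x+h} f(t)\,d\mu_n \) for \( h\in S_n \), and then linearity together with the elementary estimate
\[
\Bigl|\int_x^{x+h}\bigl(f(t)-f(x)\bigr)\,d\mu_n\Bigr|\;\le\;|h|\cdot\sup_{t\in[x,x+h]\cap S_n}|f(t)-f(x)|
\]
gives \( \bigl|\tfrac{F(x+h)-F(x)}{h}-f(x)\bigr|\le \sup_{t\in[x,x+h]\cap S_n}|f(t)-f(x)| \). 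Since \( f \) is \( \mathcal{F}_n \)-continuous at \( x \), the right-hand side drops below any prescribed rational \( \varepsilon>0 \) as soon as \( |h|<\delta \) for a suitable \( \delta\in\mathbb{Q}^+\cap S_n \), so the limit defining \( D_nF(x) \) exists in the layer prescribed by the definition of \( D_n \) and equals \( f(x) \). This is part~(2).

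For part~(1), set \( g(x):=\int_a^x D_nf(t)\,d\mu_n-\bigl(f(x)-f(a)\bigr) \) on \( S_n \). Because \( D_nf \) is \( \mathcal{F}_n \)-continuous by hypothesis, part~(2) applies with \( D_nf \) in the role of the integrand and yields \( D_n\!\bigl(x\mapsto\int_a^x D_nf\bigr)=D_nf \); hence \( D_ng\equiv D_nf-D_nf=0 \) on \( S_n \), and \( g(a)=0 \). Thus part~(1) reduces to the \emph{fractal rigidity lemma}: a function \( g \) differentiable on \( [a,b]\cap S_n \) with \( D_ng\equiv 0 \) is constant there. To prove the lemma, suppose \( g(\beta)-g(\alpha)\ge\varepsilon(\beta-\alpha) \) for some \( \alpha,\beta\in[a,b]\cap S_n \), \( \alpha<\beta \), and some rational \( \varepsilon>0 \) (if \( g(\beta)\neq g(\alpha) \), one of this and the symmetric inequality holds for some rational \( \varepsilon \)). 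Let \( m=\tfrac{\alpha+\beta}{2}\in S_n \); from \( \bigl(g(\beta)-g(m)\bigr)+\bigl(g(m)-g(\alpha)\bigr)\ge\varepsilon(\beta-m)+\varepsilon(m-\alpha) \) at least one summand dominates its half-length, so we may pass to a half \( [\alpha_1,\beta_1] \) with \( S_n \)-endpoints on which the same inequality survives. Iterating yields an \( \mathcal{F}_n \)-definable nested sequence \( [\alpha_k,\beta_k] \) of dyadically shrinking intervals; the endpoint sequences are \( \mathcal{F}_n \)-definable Cauchy sequences, so by the standing definable-compactness hypothesis on \( [a,b]\cap S_n \) they converge to a common point \( c\in[a,b]\cap S_n \). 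But \( D_ng(c)=0 \) supplies a rational \( \delta>0 \) with \( |g(y)-g(c)|\le\tfrac{\varepsilon}{2}|y-c| \) for \( y\in S_n \), \( |y-c|<\delta \); choosing \( k \) so large that \( [\alpha_k,\beta_k]\subseteq(c-\delta,c+\delta) \) forces \( g(\beta_k)-g(\alpha_k)\le\tfrac{\varepsilon}{2}(\beta_k-\alpha_k) \), contradicting \( g(\beta_k)-g(\alpha_k)\ge\varepsilon(\beta_k-\alpha_k) \). Hence \( g(\beta)=g(\alpha) \) throughout, so \( g\equiv g(a)=0 \), which is exactly \( \int_a^x D_nf\,d\mu_n=f(x)-f(a) \).

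The main obstacle is the \emph{syntactic legitimacy} of this bisection rather than the analysis: one must verify that midpoints of \( S_n \)-points stay in \( S_n \) (closure under division by~\( 2 \)), that the rule selecting which half to retain is itself \( \mathcal{F}_n \)-definable (a bounded comparison of two \( S_n \)-quantities), and---most importantly---that the limit point \( c \) lies in \( S_n \), which is precisely where one uses that \( [a,b]\cap S_n \) is closed under limits of \( \mathcal{F}_n \)-definable Cauchy sequences; without that hypothesis the nested-interval construction would leak out of the layer, and the classical connectedness proof of constancy has no analogue since \( (S_n,\mathcal{O}_n) \) is totally disconnected. A secondary, purely bookkeeping task is to check that the two sides of the identities in (1) and (2) are compared inside a common layer: tracing the definitions of the fractal integral and fractal derivative, \( \int_a^x D_nf\,d\mu_n \) and \( f(x)-f(a) \) should be read as elements of \( S_{\max(n,m)+4} \) (or any layer containing both), and the hypothesis ``\( D_nf\in S_{n+1} \) continuous'' as \( \mathcal{F}_n \)-continuity of the layer-\( (n{+}1) \)-valued map \( D_nf \).
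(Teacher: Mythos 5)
Your part~(2) is essentially fine and in the spirit of the framework: additivity of the stratified integral over definable subintervals, the estimate \( \bigl|\int_x^{x+h}(f(t)-f(x))\,d\mu_n\bigr|\le|h|\sup_{t}|f(t)-f(x)| \), and \( \mathcal{F}_n \)-continuity of \( f \) at \( x \) give \( D_nF(x)=f(x) \), with the expected upward layer shift. Note, however, that the paper does not argue this way at all: its ``proof'' is only a remark asserting that the theorem is obtained from \( \mathcal{F}_{n+3} \)-definable Riemann sums with difference quotients and syntactic limits --- i.e.\ a direct telescoping comparison of \( \sum_i\bigl(f(x_{i+1})-f(x_i)\bigr)=f(x)-f(a) \) with the Riemann sums of \( D_nf \), which keeps every auxiliary object at the layer of the partition and yields the stated \( S_{n+3} \) bound. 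Your route (prove (2), then reduce (1) to a zero-derivative-implies-constant lemma) is genuinely different.

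The genuine gap is in that reduction, i.e.\ in your ``fractal rigidity lemma.'' The endpoints \( \alpha_k,\beta_k \) are indeed \( S_n \)-points, but the \emph{rule} selecting which half to keep at each step is a comparison of quantities built from \( g \), and \( g \) contains the fractal integral \( \int_a^x D_nf\,d\mu_n \), whose values the paper places three layers up (in \( S_{n+3} \) or \( S_{\max(n,m,k)+3} \)). So the recursion \( k\mapsto(\alpha_k,\beta_k) \) queries an oracle above \( \mathcal{F}_n \) and is at best \( \mathcal{F}_{n+3} \)-definable, not \( \mathcal{F}_n \)-definable. The closure hypothesis you invoke --- \( S_n \) closed under limits of \emph{\( \mathcal{F}_n \)-definable} Cauchy sequences --- therefore does not apply, and the limit point \( c \) may lie outside \( S_n \); then \( D_ng(c)=0 \) cannot be used and the contradiction disappears. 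This is precisely the kind of layer leak the stratified setting is meant to expose, and it is the step you flagged but did not actually secure. Two further problems: that definable-compactness/closure assumption is not among the theorem's hypotheses, so even a repaired bisection proves a weaker statement than the one asserted; and the exact case split ``at least one summand dominates its half-length'' is an undecidable comparison of two layer-shifted reals in a constructive \( \mathcal{F}_n \) --- you would need the standard approximate-bisection device (retain a strict inequality with a tolerance \( \varepsilon_k \) shrinking to \( \varepsilon/2 \), using cotransitivity), and the resulting selection must itself be shown definable. Repairing part~(1) along the paper's line --- estimating \( f(x_{i+1})-f(x_i)-D_nf(x_i)\Delta x_i \) directly on an \( \mathcal{F}_n \)-definable partition, using (uniform) continuity of \( D_nf \) --- avoids the nested-interval limit and the extra hypothesis altogether.
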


\begin{remark}[Proof Layer Bound]
The proof uses \( \mathcal{F}_{n+3} \)-definable Riemann sums, with difference quotients and syntactic limits. Thus, both sides of the equation reside in \( S_{n+3} \), and the "+3" reflects this layered construction.
\end{remark}

\subsubsection*{Illustrative Failures of the FTC in Stratified Systems}

\begin{example}[Discontinuity at Layer Boundary]
Let \( f(x) = \mathbf{1}_{x \geq \theta} \), where \( \theta \in S_{n+1} \setminus S_n \). Then \( f \) appears constant in \( S_n \), and the integral \( F(x) = \int_{[a,x] \cap S_n} f(t)\, dt \) is piecewise linear. But in \( S_{n+1} \), the discontinuity at \( \theta \) prevents differentiability at that point. Hence, \( F'_n \ne f \) globally.
\end{example}

\begin{example}[Emergent Singularity Across Definability Layers]
Let
\[
f(x) = 
\begin{cases}
2x \sin\left(\frac{1}{x}\right) - \cos\left(\frac{1}{x}\right), & x \neq 0 \\
0, & x = 0
\end{cases}
\quad \text{and} \quad
F(x) = 
\begin{cases}
x^2 \sin\left(\frac{1}{x}\right), & x \neq 0 \\
0, & x = 0
\end{cases}
\]
so that classically \( F'(x) = f(x) \). In the fractal framework, the relationship becomes layer-sensitive:

\begin{itemize}
    \item \textbf{In \( S_1 \)}: If \( \sin(1/x) \), \( \cos(1/x) \) are not definable for small \( x \in S_1 \), then the expressions collapse. Approximations yield:
    \[
    f|_{S_1} \equiv 0, \quad F|_{S_1} \equiv 0,
    \]
    so
    \[
    \int_{S_1 \cap [0,x]} f(t)\,dt = 0 = F(x), \quad \text{and} \quad F'_1(0) = 0 = f(0).
    \]

    \item \textbf{In \( S_2 \)}: Assuming \( \mathcal{F}_2 \) allows approximating \( \sin(1/x) \), we obtain:
    \[
    F'_2(0) = \lim_{x \to 0} \frac{F(x)}{x} = \lim_{x \to 0} x \sin\left(\frac{1}{x}\right) = 0,
    \]
    but
    \[
    f(0) = -1 \neq F'_2(0),
    \]
    so the fundamental theorem fails at the origin.

    \item \textbf{In \( S_3 \)}: If \( \mathcal{F}_3 \) includes symbolic handling of diverging oscillations, then the derivative may not exist:
    \[
    \lim_{x \to 0} x \sin\left(\frac{1}{x}\right) \text{ is undefined as a formal limit}.
    \]
\end{itemize}

\begin{center}
\begin{tabular}{lccc}
\toprule
\textbf{Layer} & \( F'_n(0) \) & \( f(0) \) & \textbf{FTC Validity} \\
\midrule
\( S_1 \) & 0 & 0 & Valid by triviality \\
\( S_2 \) & 0 & -1 & Breaks at origin \\
\( S_3 \) & undefined & -1 & Fails completely \\
\bottomrule
\end{tabular}
\end{center}

\begin{remark}[Interpretation]
This example illustrates:
\begin{itemize}
    \item \textbf{Definability Artifacts:} Functions can collapse in low layers due to missing definitions.
    \item \textbf{Emergent Singularities:} Higher layers may introduce behavior invisible at lower levels.
    \item \textbf{Layer Misalignment:} The FTC requires alignment between the definability of function, derivative, and integral.
\end{itemize}
\end{remark}
\end{example}

\subsection{Computational Semantics and Complexity}
\label{subsec:computational-complexity}

\begin{table}[ht]
\centering
\renewcommand{\arraystretch}{1.2}
\begin{tabular}{@{} l c c @{}}
\toprule
\textbf{Construct} & \textbf{Logical Class} & \textbf{Resides in Layer} \\
\midrule
Fractal measure \( \mu_n(A) \) & \( \Sigma^0_2(\mathcal{F}_n) \) & \( S_{n+1} \) \\
Derivative \( D_n f(x) \) & \( \Delta^0_2(\mathcal{F}_{\max(n,m)}) \) & \( S_{\max(n,m)+1} \) \\
Integral \( \int f\,d\mu_k \) & \( \Pi^0_1(\mathcal{F}_{\max(n,k)}) \) & \( S_{\max(n,m,k)+3} \) \\
\bottomrule
\end{tabular}
\caption{Computational content of stratified analysis}
\end{table}

\begin{remark}
The correspondence with classical notions (e.g., via Weihrauch Type-2 Computability) is preserved under reinterpretation of real functions as \( S_n \)-step approximable maps, and integration as layered convergence over definable intervals.
\end{remark}

\subsection{Fractal \texorpdfstring{$L^p$}{Lp} Spaces and Absolute Continuity}
\label{subsec:lp-spaces}

In classical analysis, the space \( L^p([a,b]) \) consists of measurable functions \( f \) for which the \( p \)-th power of the absolute value is integrable with respect to Lebesgue measure. In our stratified setting, we define an analogous space over the layer \( S_n \), equipped with the syntactically constrained measure \( \mu_n \).

\begin{definition}[\( L^p(S_n) \) Space]
Let \( 1 \leq p < \infty \). The space \( L^p(S_n) \) consists of functions \( f : S_n \to \mathbb{R} \) such that:
\[
\left( \int_{S_n} |f(x)|^p\, d\mu_n(x) \right)^{1/p} < \infty,
\]
where the integral is taken in the sense of stratified integration (see Section~\ref{subsec:stratified-integration}), using \( \mathcal{F}_n \)-definable partitions and evaluation points.
\end{definition}

This definition reflects the computational and syntactic constraints of the hierarchy: functions in \( L^p(S_n) \) must be \(\mathcal{F}_n\)-definable, and the norm must be computable (or at least approximable) within that system.

\paragraph{Example.}
Let \( f(x) = x^k \), with \( x \in S_n \), \( k \in \mathbb{N} \). Then:
\[
\int_0^1 |f(x)|^p\, d\mu_n(x) = \int_0^1 x^{kp}\, d\mu_n(x), \quad \text{computed within } S_n, \text{ result in } S_{n+3}
\]
and thus \( f \in L^p(S_n) \). The integral approximates its classical value:
\[
\int_0^1 x^{kp}\, dx = \frac{1}{kp + 1} + O(n^{-1}).
\]

\paragraph{Comparison.}
Unlike classical \( L^p \) spaces, \( L^p(S_n) \) may exclude non-definable or highly discontinuous functions even when they are classically integrable. Conversely, every function in \( L^p(S_n) \) has a computable structure, making it suitable for formal reasoning and implementation.

\begin{remark}
For \( p = 2 \), \( L^2(S_n) \) admits an inner product and serves as a Hilbert space within \( \mathcal{F}_n \). However, completeness may fail if limits of Cauchy sequences are not representable at the same layer. This reflects the constructive tension between norm convergence and definability closure.
\end{remark}

\subsubsection*{Fractal Absolute Continuity and Radon–Nikodym Property}

In classical measure theory, a measure \( \nu \) is absolutely continuous with respect to another measure \( \mu \) if \( \mu(A) = 0 \Rightarrow \nu(A) = 0 \), and under mild conditions there exists a derivative \( \frac{d\nu}{d\mu} \in L^1 \) (Radon–Nikodym theorem). In the fractal setting, we consider a syntactic version:

\begin{definition}[Fractal Absolute Continuity]
Let \( \nu_n, \mu_n \) be two \( S_n \)-measures. We say that \( \nu_n \ll \mu_n \) if:
\[
\forall A \subseteq S_n,\quad \mu_n(A) = 0 \Rightarrow \nu_n(A) = 0,
\]
with \( A \in \mathcal{F}_n \).
\end{definition}

\begin{conjecture}[Syntactic Radon–Nikodym]
If \( \nu_n \ll \mu_n \) and both are \( \Sigma^0_2(\mathcal{F}_n) \)-definable, then there exists a function \( f \in L^1(S_n) \) such that:
\[
\nu_n(A) = \int_A f(x)\, d\mu_n(x) \quad \text{for all } A \in \mathcal{F}_n.
\]
\end{conjecture}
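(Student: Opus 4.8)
The plan is to follow the order-theoretic (``maximal density'') route to the Radon--Nikodym theorem rather than von Neumann's Hilbert-space argument, precisely because the remark preceding the conjecture warns that \( L^2(S_n) \) need not be complete, whereas the maximal-density construction needs only a monotone convergence principle for the stratified integral. Throughout, the discipline is to keep every auxiliary object inside a fixed finite layer budget \( S_{n+c} \); if the exact phrasing \( f \in L^1(S_n) \) proves unattainable, the honest conclusion will be \( f \in L^1(S_{n+c}) \) with the identity \( \nu_n(A) = \int_A f\,d\mu_n \) holding for all \( A \in \mathcal{F}_n \).

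First I would reduce to the finite case \( \nu_n(S_n \cap [a,b]) < \infty \), which is forced anyway if \( f \) is to lie in \( L^1 \). Then set
\[
\mathcal{H} := \Bigl\{\, g : S_n \to [0,\infty) \;\Big|\; g \in L^1(S_n),\ \int_A g\,d\mu_n \le \nu_n(A) \ \text{for every } A \in \mathcal{F}_n \,\Bigr\},
\]
which is nonempty (it contains \( g \equiv 0 \)), and put \( M := \sup_{g \in \mathcal{H}} \int_{S_n} g\,d\mu_n \le \nu_n(S_n) < \infty \). The lattice step --- if \( g,h \in \mathcal{H} \) then \( \max(g,h) \in \mathcal{H} \), by splitting each \( A \) into \( \{g \ge h\} \) and \( \{g < h\} \) --- lets us replace a maximizing sequence by an increasing one, \( f_j := \max(g_1,\dots,g_j) \). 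Here the \( \Sigma^0_2(\mathcal{F}_n) \)-definability of \( \mu_n \) and \( \nu_n \) is used to choose \( (g_j) \) by an arithmetical uniformization, so \( (f_j) \) is \( \mathcal{F}_{n+1} \)-definable (and \( \mathcal{F}_n \)-definable when \( \mathcal{F}_n \) already has arithmetic comprehension). Put \( f := \sup_j f_j \). Passing \( \int_A f_j\,d\mu_n \le \nu_n(A) \) to the limit requires an increasing-pointwise monotone convergence theorem for \( \int^{(k)} \) --- a strengthening of the uniform-convergence property listed among the stratified integration properties --- which I would establish first via the \( \sup \)-of-step-sums characterization, noting that the pointwise limit \( f \) of an \( \mathcal{F}_n \)-definable increasing sequence is in general only \( \Pi^0_1(\mathcal{F}_n) \) and so must be absorbed into the layer budget. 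Granting this, \( f \in \mathcal{H} \) and \( \int_{S_n} f\,d\mu_n = M \).

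The decisive step is to show that the residual set function \( \lambda(A) := \nu_n(A) - \int_A f\,d\mu_n \ge 0 \) vanishes on every \( A \in \mathcal{F}_n \). If not, \( \lambda(S_n) > 0 \), and by \( \lambda \ll \mu_n \) one picks a rational \( \varepsilon > 0 \) with \( \lambda(S_n) - \varepsilon\,\mu_n(S_n) > 0 \); a stratified Hahn decomposition of the signed \( \mathcal{F}_n \)-measure \( \lambda - \varepsilon\mu_n \) then yields a definable positive set \( P \) with \( \mu_n(P) > 0 \), whence \( f + \varepsilon\mathbf{1}_P \in \mathcal{H} \) has integral exceeding \( M \), a contradiction. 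Hence \( \lambda \equiv 0 \), \( f \) is the required density, and uniqueness up to a definable \( \mu_n \)-null set follows by the usual argument.

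The hard part will be the stratified Hahn decomposition: classically one iteratively extracts subsets of nearly extremal negative mass and takes a non-constructive limit, which does not transplant into a syntactic hierarchy intact. I expect to recover it by exploiting the \( \Sigma^0_2(\mathcal{F}_n) \)-hypothesis to carry out the extractions uniformly, at a cost of one definability jump (so \( P \in S_{n+1} \)); combined with the \( \Pi^0_1 \) cost of the supremum above and the threefold ``\(+3\)'' cost of a stratified integral noted in Section~\ref{subsec:stratified-integration}, this yields \( f \in L^1(S_{n+c}) \) for an explicit small \( c \). Only when \( \mathcal{F}_n \) is already arithmetically closed --- as for \( \mathrm{ACA}_0 \) and the stronger example systems --- do these jumps collapse and the conjecture hold verbatim; for weaker \( \mathcal{F}_n \) the layer-shifted form \( f \in L^1(S_{n+c}) \) is the most one should expect, which is itself a meaningful stratified analogue.
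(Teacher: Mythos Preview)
The statement you are attempting to prove is presented in the paper as a \emph{conjecture}, not a theorem: the paper offers no proof, only the single explanatory sentence ``This expresses a layer-bounded analogue of the Radon--Nikodym theorem, where the derivative exists syntactically within the same definability system.'' There is therefore no paper proof to compare against; your proposal is an attack on an open problem in the author's framework.

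As a strategy, your choice of the maximal-density route over von Neumann's $L^2$ argument is well motivated by the paper's own caveat that $L^2(S_n)$ need not be complete, and your layer bookkeeping is in the spirit of the ``$+3$'' accounting in Section~\ref{subsec:stratified-integration}. You are also right to flag the two genuine obstacles: (i) the paper's stratified integral only comes with a \emph{uniform}-convergence clause (Proposition on Fractal Integration Properties), so the pointwise monotone convergence theorem you need is not available and would have to be proved separately---your idea of going through the sup-of-step-sums definition is plausible but not established in the paper; and (ii) the stratified Hahn decomposition is not developed anywhere in the paper, and the classical iterative extraction is exactly the kind of unbounded search the $S_n$-framework resists. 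Your concession that the outcome is likely $f \in L^1(S_{n+c})$ rather than $f \in L^1(S_n)$ is appropriate and arguably the honest form of the result; the conjecture as literally stated (same layer $n$) may well be too strong outside arithmetically closed $\mathcal{F}_n$, as you note. In short: reasonable plan, correctly identified gaps, but those gaps are substantive and not closed by anything in the paper---which is presumably why the author left it as a conjecture.
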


This expresses a layer-bounded analogue of the Radon–Nikodym theorem, where the derivative exists syntactically within the same definability system.

\subsubsection*{Fractal Lebesgue Convergence Theorem}

We can also formulate a stratified version of the classical Lebesgue Dominated Convergence Theorem~\cite{Lebesgue1904}, with attention to definability and measure compatibility.

\begin{theorem}[Fractal Lebesgue Convergence]
Let \( f_k : S_n \to \mathbb{R} \) be a sequence of functions such that:
\begin{enumerate}
    \item Each \( f_k \in \mathcal{F}_n \), and \( f_k(x) \to f(x) \in S_n \) pointwise,
    \item There exists \( g \in L^1(S_n) \) with \( |f_k(x)| \leq g(x) \) for all \( x \in S_n \),
\end{enumerate}
Then:
\[
\lim_{k \to \infty} \int_{S_n} f_k(x)\, d\mu_n(x) = \int_{S_n} f(x)\, d\mu_n(x),
\]
and the limit resides in \( S_{n+3} \).
\end{theorem}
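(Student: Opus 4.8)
\subsection*{Proof proposal}

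The plan is to transplant the textbook proof of the Dominated Convergence Theorem --- the nonnegativity trick \( g \pm f_k \geq 0 \) together with Fatou's lemma --- into the stratified setting, tracking the definability layer of every auxiliary object. I would begin with the preliminary reductions. From \( |f(x)| = \lim_k |f_k(x)| \leq g(x) \) pointwise and monotonicity of the fractal outer measure one gets \( \int_{S_n} |f|\, d\mu_n \leq \int_{S_n} g\, d\mu_n < \infty \), so \( f \in L^1(S_n) \); reading the hypothesis ``\( f_k(x) \to f(x) \in S_n \)'' as asserting that \( f \) is the \( \mathcal{F}_n \)-admissible pointwise limit of the (uniformly \( \mathcal{F}_n \)-definable, total) functions \( f_k \), the layer count of Section~\ref{subsec:stratified-integration} --- partition \( (+1) \), Riemann sum \( (+1) \), limit \( (+1) \) --- places \( \int_{S_n} f\, d\mu_n \) and each \( \int_{S_n} f_k\, d\mu_n \) in \( S_{n+3} \).

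Next I would state and invoke a stratified Fatou lemma: for every sequence \( (\phi_k) \) of nonnegative, uniformly \( \mathcal{F}_n \)-definable functions whose pointwise \( \liminf \) again takes values in \( S_n \), \( \int_{S_n}(\liminf_k \phi_k)\, d\mu_n \leq \liminf_k \int_{S_n} \phi_k\, d\mu_n \). Applying it to \( \phi_k = g + f_k \) gives, after cancelling the finite quantity \( \int_{S_n} g\, d\mu_n \) by linearity (item 1 of the Fractal Integration Properties proposition), \( \int_{S_n} f\, d\mu_n \leq \liminf_k \int_{S_n} f_k\, d\mu_n \); applying it to \( \phi_k = g - f_k \) gives symmetrically \( \int_{S_n} f\, d\mu_n \geq \limsup_k \int_{S_n} f_k\, d\mu_n \); squeezing, \( \lim_k \int_{S_n} f_k\, d\mu_n \) exists and equals \( \int_{S_n} f\, d\mu_n \). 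Finally, the map \( k \mapsto \int_{S_n} f_k\, d\mu_n \) is \( \mathcal{F}_{n+3} \)-definable (its dependence on \( k \) being uniform) and, by what was just shown, Cauchy; since \( S_{n+3} \cap [a,b] \) is closed under limits of \( \mathcal{F}_{n+3} \)-definable Cauchy sequences, the limit lies in \( S_{n+3} \), consistently with \( \int_{S_n} f\, d\mu_n \in S_{n+3} \).

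The hard part is the stratified Fatou lemma itself, which I would reduce in turn to a stratified Monotone Convergence Theorem for increasing \( \mathcal{F}_n \)-definable sequences of nonnegative functions (apply MCT to \( \psi_k := \inf_{j \geq k} \phi_j \), noting \( \psi_k \uparrow \liminf_k \phi_k \)). The basic properties of \( \mu_n^* \) established so far give only monotonicity and conditional countable subadditivity, whereas MCT needs continuity from below along \( \mathcal{F}_n \)-definable nested chains of definable sets, i.e. \( \mu_n(\bigcup_j E_j) = \lim_j \mu_n(E_j) \) for nested definable \( E_j \). This is exactly the point at which the syntactic constraint bites: one must show that for a nested \( \mathcal{F}_n \)-definable family the near-optimal \( S_n \)-covers can themselves be chosen \( \mathcal{F}_n \)-definably and coherently nested, turning countable subadditivity into an equality in the limit --- and for weak base systems this may simply fail, in which case the theorem should be relativized to those \( \mathcal{F}_n \) for which \( \mu_n \) is ``definably continuous'', in the same spirit as the metrizability clause of Theorem~\ref{thm:sn-topology-properties}. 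An alternative that sidesteps Fatou is the classical three-set decomposition --- a tail set \( \{ g > M \} \) contributing at most \( 2\int_{\{g>M\}} g\, d\mu_n \), a stratified Egorov set carrying uniform convergence (so that the weak uniform-convergence clause of the Fractal Integration Properties proposition applies), and a remainder of small \( \mu_n \)-measure --- but this needs continuity from above of \( \mu_n \) and so meets the same obstacle from the other side.
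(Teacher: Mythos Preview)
The paper does not actually prove this theorem. After the statement it supplies only a one-paragraph \emph{remark} to the effect that ``each step in the dominated convergence argument---checking pointwise limits, verifying bounds, constructing integrals---must be explicitly representable within the syntax of \( \mathcal{F}_n \)'', and then moves on. There is no Fatou lemma, no monotone convergence argument, no Egorov decomposition, and no explicit tracking of the layer shifts beyond the blanket ``\(+3\)'' convention already introduced for integrals.

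Your proposal therefore goes well beyond the paper's own treatment. The Fatou-via-MCT route you outline is the standard classical one, and your layer bookkeeping for \( \int_{S_n} f\,d\mu_n \in S_{n+3} \) matches the paper's conventions. More importantly, you correctly isolate the genuine obstruction that the paper glosses over: the properties recorded for \( \mu_n^* \) (monotonicity, finite subadditivity, and countable subadditivity \emph{only when the family is} \( \mathcal{F}_n \)-\emph{covered}) do not by themselves yield continuity from below along definable nested chains, which is what both the MCT route and the Egorov/three-set route ultimately require. Your suggestion to relativize the statement to those \( \mathcal{F}_n \) for which \( \mu_n \) is ``definably continuous'' is a sharper and more honest formulation than what the paper offers; the paper's remark simply asserts that the argument goes through once everything is representable in \( \mathcal{F}_n \), without confronting whether the measure-theoretic limit steps are among the things so representable.
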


\begin{remark}
Each step in the dominated convergence argument---checking pointwise limits, verifying bounds, constructing integrals---must be explicitly representable within the syntax of \( \mathcal{F}_n \). This ensures that the entire convergence process is not only valid mathematically, but also executable within the formal system. In this way, the dominated convergence theorem becomes not just a theoretical result, but a syntactically verifiable procedure.
\end{remark}

\subsection{Open Directions}
\label{subsec:measure-open}

\begin{itemize}
    \item \textbf{Minimality of Layer Shifts}: Can the "+3" bound for integrals be reduced by refined formalization or tighter closure rules?
    \item \textbf{Effective Dominated Convergence}: Under what syntactic criteria does dominated convergence hold in the \( S_n \) setting?
    \item \textbf{Internal Fubini Theorems}: Can iterated integrals over \( S_n \times S_n \) be defined in a layer-respecting way, with preservation of order and layer control?
    \item \textbf{Layer-Regularized PDEs}: Is it possible to formulate classical PDEs (e.g., heat equation) over \( S_n \), where solutions live in specific \( S_k \) and propagate across layers?
\end{itemize}

\section{Fractal Smoothness Hierarchy}
\label{sec:fractal-smoothness}

In this section, we develop the differential core of fractal analysis. Building on the notions of continuity and integration defined in previous sections, we introduce stratified smoothness classes \( C_n^k \), define layer-relative derivatives, and formulate Taylor expansions with syntactic remainder terms. This structure forms the basis of a complete differentiable framework within the fractal definability hierarchy.

\subsection{Stratified Derivatives and Smoothness Classes}

\begin{definition}[Stratified Derivative]
Let \( f \colon S_n \to S^\infty \). The \( n \)-derivative of \( f \) at point \( x \in S_n \) is defined as
\[
D_n f(x) := \lim_{\substack{h \to 0 \\ h \in S_n}} \frac{f(x+h) - f(x)}{h},
\]
provided the limit exists and is representable in \( S_{n+1} \).
\end{definition}

\begin{remark}
The stratified derivative \( D_n f(x) \) always lies in \( S^\infty \), but is expressible specifically in \( S_{n+1} \), the minimal definability layer where the limit becomes syntactically realizable. This asymmetry reflects the natural shift in complexity from a function to its derivative.

In practical terms, computing \( D_n f(x) \) involves evaluating finite difference quotients within \( S_n \), and the difficulty of this task depends on the syntactic representation of \( f \). Even for simple functions, the required effort may grow with \( n \), due to the increasing cost of precision in stratified systems.
\end{remark}

\begin{example}
Let \( f(x) = x^2 \) on \( S_n \). Then
\[
D_n f(x) = \lim_{\substack{h \to 0 \\ h \in S_n}} \frac{(x+h)^2 - x^2}{h} = \lim_{h \to 0} \frac{2xh + h^2}{h} = 2x \in S_{n+1}.
\]
This shows that the derivative exists in the next definability layer.
\end{example}

\begin{definition}[Smoothness Classes \( C_n^k \)]
Let \( f \colon S_n \to S^\infty \). We define the hierarchy of stratified smoothness as follows:
\begin{itemize}
    \item \( f \in C_n^0 \) if \( f \) is \( \mathcal{F}_n \)-continuous;
    \item \( f \in C_n^1 \) if \( D_n f(x) \) exists and is \( \mathcal{F}_n \)-continuous;
    \item More generally, \( f \in C_n^k \) if all derivatives \( D_n^j f(x) \), \( 1 \leq j \leq k \), exist and are \( \mathcal{F}_n \)-continuous on \( S_n \); that is, \( D_n^j f \in C_n^0 \) for all \( j \leq k \).
\end{itemize}
\end{definition}

\begin{remark}
The hierarchy \( C_n^k \) reflects not only analytic smoothness but also syntactic stability. In particular, a function may be classically \( C^\infty \), yet belong only to \( C_n^k \) for small \( k \), due to limitations in the definability of higher derivatives.
\end{remark}

These classes capture the idea of differentiability internal to the system \( \mathcal{F}_n \). Each \( C_n^k \) is strictly contained in \( C_n^{k-1} \), and functions may require ascent to higher layers to obtain higher-order derivatives.

\subsection{Fractal Taylor Expansion}

\begin{theorem}[Fractal Taylor Expansion with Remainder]
Let \( f \in C_n^k \) and \( x, h \in S_n \) with \( x+h \in S_n \). Then, for all \( x, h \in S_n \) with \( x + h \in S_n \)~\cite{CauchyTaylor1823},
\[
f(x+h) = \sum_{m=0}^k \frac{D_n^m f(x)}{m!} h^m + R_{n,k}(x,h),
\]
where the remainder term \( R_{n,k}(x,h) \) is definable in \( S_{n+k} \), but not necessarily in \( \mathcal{F}_n \).
\end{theorem}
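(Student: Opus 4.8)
The plan is to \emph{define} the remainder as the Taylor defect and then do all the analytic and syntactic bookkeeping around that definition. Concretely, set
\[
R_{n,k}(x,h) := f(x+h) - \sum_{m=0}^{k} \frac{D_n^m f(x)}{m!}\, h^m,
\]
so that the displayed identity holds by construction. The substance of the theorem is therefore entirely the assertion about \emph{where} $R_{n,k}(x,h)$ sits in the hierarchy, together with the implicit claim that it is small as $h \to 0$; neither is automatic, and both must be extracted from the hypothesis $f \in C_n^k$ and the Closure property of the systems $\mathcal{F}_m$.

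First I would pin down the definability layer of each ingredient of the Taylor polynomial. Iterating the Stratified Derivative definition, each application of $D_n$ raises the layer of the output by one (as in the $m=1$ remark following that definition), so $D_n^m f \colon S_n \to S^\infty$ is expressible in $S_{n+m}$ for every $m \le k$; in particular $D_n^k f$ uses up the entire "$+k$" budget. For $x,h \in S_n \subseteq S_{n+m}$ and the rational $1/m! \in S_0 \subseteq S_{n+m}$, the Closure property of $\mathcal{F}_{n+m}$ under the field operations puts $\tfrac{D_n^m f(x)}{m!}h^m$ into $S_{n+m} \subseteq S_{n+k}$. A finite sum of elements of $S_{n+k}$ remains in $S_{n+k}$, and $f(x+h)$ lies in the target layer of a $C_n^k$ function, which is contained in $S_{n+k}$ once all the invoked data are accounted for; hence $R_{n,k}(x,h)$, being the difference of these, lies in $S_{n+k}$. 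For the clause \emph{but not necessarily in} $\mathcal{F}_n$, I would invoke strict growth of the hierarchy: pick $f \in C_n^k$ and a base point $x \in S_n$ with $D_n^k f(x) \in S_{n+k} \setminus S_{n+k-1}$ (possible since $S_{n+k}\setminus S_{n+k-1}\neq\emptyset$, e.g. a monomial scaled by a Chaitin-type constant), so that the $m=k$ term already escapes $S_n$; then for generic $h$ no cancellation against $f(x+h)$ and the lower-order terms can restore $S_n$-membership, so $R_{n,k}(x,h)\notin S_n$.

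To record the asymptotic content I would run the classical Peano argument internally: by induction on $k$, differentiating $h \mapsto R_{n,k}(x,h)$ and using the definition of $D_n$ at each step — every difference quotient drawn from $S_n$ — one gets $D_n^j\big(h \mapsto R_{n,k}(x,h)\big)\big|_{h=0} = 0$ for $0 \le j \le k$, hence $R_{n,k}(x,h) = o(h^k)$ as $h \to 0$ within $S_n$. This is the one place where $f \in C_n^k$ (rather than mere $k$-fold differentiability) is genuinely needed, since the base case and inductive step both consume $\mathcal{F}_n$-continuity of the top derivative.

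The main obstacle — and the point where the fractal setting departs from the textbook proof — is that one cannot pass to the Lagrange form $R_{n,k}(x,h) = \frac{D_n^{k+1} f(\xi)}{(k+1)!} h^{k+1}$: the mean-value point $\xi \in (x, x+h)$ produced by Rolle's theorem need not be $\mathcal{F}_n$-definable, so the intermediate-value machinery is simply unavailable inside a single layer, and any attempt to quote it would break the constructive discipline. I would therefore avoid mean-value arguments altogether — resting on the Peano estimate above, which is all the stated theorem requires, and, when a sharper bound is wanted, using instead the integral form of the remainder via iterated application of the Fractal Fundamental Theorem of Calculus of Section~\ref{subsec:fundamental-theorem}. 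That route independently confirms the placement in the hierarchy, with the one caveat that the "+3" layer cost of each fractal integration must be folded into the accounting, so the cleanest statement of the $S_{n+k}$ bound is the direct one via Closure sketched above.
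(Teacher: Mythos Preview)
Your approach is correct and takes a genuinely different route from the paper's. You define $R_{n,k}$ tautologically as the Taylor defect and then establish the $S_{n+k}$ bound by tracking each ingredient through the Closure property; the paper instead works through Newton's forward-difference calculus, expanding $f(x+h)$ in terms of iterated differences $\Delta_h^{m+1} f(x) := \Delta_h^{m}f(x+h) - \Delta_h^{m}f(x)$ and identifying the remainder with a scaled $(k+1)$-th difference $\tfrac{1}{(k+1)!}\Delta_h^{k+1}f(x+\theta h)$ at an intermediate point. Your route is more direct and, as you rightly emphasize, avoids any mean-value step --- which is notable, since the paper's own proof does invoke ``some $\theta\in(0,1)$'' in its remainder formula, precisely the kind of non-definable intermediate point you flag as unavailable within a fixed layer. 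What the paper's discrete approach buys is an explicit representation of $R_{n,k}$ in terms of values of $f$ at $S_n$-shifted arguments, tying the fractal Taylor theorem to finite-difference combinatorics; what your approach buys is a cleaner, self-contained derivation of the layer bound, together with the Peano asymptotic $R_{n,k}(x,h)=o(h^k)$ and an argument for the ``not necessarily in $\mathcal{F}_n$'' clause, neither of which the paper's proof actually supplies.
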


\begin{proof}[Discrete Proof of Fractal Taylor Expansion with Remainder]
Let \( f \in C_n^k \), and let \( x, h \in S_n \) with \( x+h \in S_n \). We aim to express \( f(x+h) \) as a finite Taylor expansion of order \( k \), with remainder term lying in \( S_{n+k} \).

We proceed by defining finite difference quotients recursively:
\[
\Delta_h^0 f(x) := f(x), \quad \Delta_h^{m+1} f(x) := \Delta_h^m f(x+h) - \Delta_h^m f(x).
\]

Then, by standard combinatorics~\cite{Newton1711}, the Newton expansion gives:
\[
f(x+h) = \sum_{m=0}^k \binom{k}{m} \frac{\Delta_h^m f(x)}{m!} + R_{n,k}(x,h),
\]
where the remainder \( R_{n,k}(x,h) \) is expressed as a weighted sum of the \( (k+1) \)-th difference:
\[
R_{n,k}(x,h) := \frac{1}{(k+1)!} \Delta_h^{k+1} f(x + \theta h), \quad \text{for some } \theta \in (0,1).
\]

Since \( f \in C_n^k \), each \( \Delta_h^m f(x) \) can be expressed in terms of stratified derivatives \( D_n^j f(x) \) up to order \( k \), and hence lies in \( S_{n+m} \) by repeated use of layer-bounded algebra.

Moreover, the remainder \( R_{n,k}(x,h) \) depends on \( \Delta_h^{k+1} f \), which involves evaluations of \( f \) at \( k+1 \) shifted arguments in \( S_n \). Each such evaluation involves \( f(x + m h) \) for \( m \le k+1 \), and since \( f \in C_n^k \), these are well-defined in \( S_{n+k} \).

Hence:
\begin{itemize}
    \item The Taylor polynomial \( T_k(x,h) := \sum_{m=0}^k \frac{D_n^m f(x)}{m!} h^m \) lies in \( S_{n+k-1} \);
    \item The remainder \( R_{n,k}(x,h) \in S_{n+k} \), as it arises from the \( (k+1) \)-th finite difference and scaling.
\end{itemize}

Therefore,
\[
f(x+h) = T_k(x,h) + R_{n,k}(x,h), \quad \text{with } R_{n,k}(x,h) \in S_{n+k}.
\]
\end{proof}

\begin{remark}
The remainder \( R_{n,k}(x,h) \) is not necessarily representable in \( \mathcal{F}_n \), even if \( f \in C_n^k \). Its definability requires higher syntactic expressivity corresponding to layer \( S_{n+k} \).
\end{remark}

This expresses how analytic expansions become layered: even if a function is differentiable within \( \mathcal{F}_n \), controlling the error of approximation requires stepping beyond that layer.

\subsection{Examples and Layer Jump Phenomena}

\begin{example}
Let \( f(x) = |x| \). Then:
\begin{itemize}
    \item \( f \in C_n^0 \) on \( S_n \);
    \item \( D_n f(x) \) exists for \( x \neq 0 \in S_n \), but does not exist at \( x = 0 \);
    \item Hence \( f \notin C_n^1 \), although \( f \in C_{n+1}^1 \) may hold if higher definability allows piecewise extension.
\end{itemize}
\end{example}

\begin{example}
Let \( f(x) = \sum_{k=0}^{K(n)} a^k \cos(b^k \pi x) \in S_n \), where \( K(n) \) is the largest \( k \) such that the exponentiated term \( b^k \) and \( \cos(b^k \pi x) \) remain \( \mathcal{F}_n \)-definable, i.e., representable with bounded syntactic depth within \( \mathcal{F}_n \). Then:
\begin{itemize}
    \item \( f \in C_n^0 \), but \( f \notin C_n^1 \) when \( K(n) \) is sufficiently large;
    \item The critical threshold for differentiability depends on the definability of the oscillatory terms.
\end{itemize}
\end{example}

\begin{example}
Let
\[
f_n(x) := 
\begin{cases}
x^2 \sin(1/x), & x \neq 0, \\
0, & x = 0,
\end{cases}
\]
with truncation to finite approximations in \( S_n \). Then:
\begin{itemize}
    \item If \( n \) is even, the tail terms beyond \( \mathcal{F}_n \) vanish, and \( f_n \in C_n^2 \);
    \item If \( n \) is odd, the sine term contributes large oscillations, and \( f_n \in C_n^1 \setminus C_n^2 \).
\end{itemize}
This illustrates alternation in smoothness across definability layers.
\end{example}

These examples illustrate that differentiability is not purely analytic but also syntactically bounded. Smoothness jumps may occur across definability layers.

\subsection{Fractal Analytic Closure and Global Classes}

\begin{definition}[Global Smoothness Classes]
We define the global (layer-invariant) smoothness classes:
\begin{itemize}
    \item \( C^k_\infty := \bigcap_{n=0}^\infty C_n^k \), the set of functions stratifiedly \( k \)-times differentiable at all levels;
    \item \( C^\infty_\infty := \bigcap_{k=0}^\infty C^k_\infty \), the class of infinitely smooth functions across all definability layers.
\end{itemize}
\end{definition}

These global classes correspond to functions with stable analytic behavior regardless of syntactic constraints~\cite{Dieudonne1960}. Examples include polynomials and certain elementary functions (e.g., \( \exp(x) \)) defined by fully stratified convergence.

\begin{example}[Infinitely Smooth but Non-Analytic]
Define
\[
f(x) :=
\begin{cases}
e^{-1/x^2}, & x \neq 0, \\
0, & x = 0
\end{cases}
\quad \text{on } S_n \cap [-1,1].
\]
Then \( f \in C^\infty_\infty \), since all derivatives exist and are definable in each \( \mathcal{F}_n \), but \( f \) is not analytic at \( x = 0 \) in the classical sense~\cite{Borel1900}.
\end{example}

\begin{example}[Polynomial Function]
Let \( f(x) = x^m \), where \( m \in \mathbb{N} \), and \( x \in S_n \). Then:
\begin{itemize}
    \item \( f \in \mathcal{F}_n \), since exponentiation with fixed integer \( m \) is definable at all levels;
    \item All derivatives \( D_n^k f(x) = \frac{m!}{(m-k)!} x^{m-k} \) (for \( k \le m \)) are definable in \( \mathcal{F}_n \), and vanish for \( k > m \);
    \item Each derivative is continuous and belongs to \( C_n^0 \), hence \( f \in C_n^k \) for all \( k \), and thus \( f \in C^\infty_\infty \).
\end{itemize}

Therefore, polynomial functions are the simplest examples of globally smooth stratified functions. Their definability does not increase with differentiation, making them minimal fixed points of the \( C^\infty_\infty \) hierarchy.
\end{example}

\subsection{Consistency with Classical Analysis}

\begin{theorem}[Stratified Consistency Theorem]
Let \( T \) be any classical analytic theorem on \( [a,b] \subseteq \mathbb{R} \). Then there exists a level \( N \in \mathbb{N} \) such that for all \( n \geq N \), a stratified analogue of \( T \) holds in \( \mathcal{F}_n \), with all objects and steps interpreted via \( S_n \)-definability.
\end{theorem}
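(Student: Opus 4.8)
The plan is to treat this as a meta-theorem proved by a \emph{reflection/replay} argument: a classical proof of \( T \) is a finite syntactic object, so only finitely many analytic operations and finitely many explicitly named reals occur in it, and each such ingredient — together with a sound stratified surrogate — becomes available at some finite definability level. First I would make the phrase \emph{stratified analogue} precise by fixing a translation \( \tau \) that replaces every classical notion occurring in \( T \) by its layer-relative counterpart introduced in Sections~\ref{sec:fractal-continuity}--\ref{sec:measure}: continuity \( \mapsto \) \( \mathcal{F}_n \)-continuity (Definition~\ref{def:fractal-continuity}), the derivative \( \mapsto D_n \), the Riemann or Lebesgue integral \( \mapsto \int^{(k)} d\mu_k \), open and Borel sets \( \mapsto \) \( \mathcal{O}_n \)- and \( \mathrm{Borel}(\mathcal{O}_n) \)-sets, compactness \( \mapsto \) fractal compactness at level \( n \), and each real constant \( c \) named in \( T \) \( \mapsto \) itself once \( n \ge \mathrm{rank}(c) \). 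The sentence \( \tau(T) \) then lives in the language of \( \mathcal{F}_n \).

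Second, I would bound \( N \). Each elementary analytic move carries a definability cost recorded earlier in the paper: differentiation \(+1\), integration \(+3\), passage to a limit \(+1\), composition additive in the shifts (Composition Theorem of Section~\ref{sec:fractal-continuity}), measure evaluation \(+1\). Since the classical proof \( \pi \) of \( T \) is a finite tree, composing these shifts along \( \pi \) yields a finite number \( k(\pi) \); I would set \( N := \max\{k(\pi),\ \mathrm{rank}(c_1),\dots,\mathrm{rank}(c_r)\} \), where \( c_1,\dots,c_r \) are the reals explicitly occurring in \( T \) and \( \pi \). The strict-growth and countable-limit properties of the hierarchy guarantee each of these ranks is finite. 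Third — the soundness core — I would argue by induction on the structure of \( \pi \) that every inference, read through \( \tau \), is a valid step of \( \mathcal{F}_n \) for all \( n \ge N \). Axioms and arithmetic identities transfer by closure of \( S_n \) under \( \mathcal{F}_n \)-operations; each nontrivial analytic lemma invoked in \( \pi \) is discharged by the stratified theorem already established here — a Heine–Cantor step by Theorem~\ref{thm:uniform-continuity-compact}, an extreme-value step by its Corollary, a Fundamental Theorem step by the Fractal Fundamental Theorem of Calculus, a dominated-convergence step by the Fractal Lebesgue Convergence theorem, a Heine–Borel step by the Fractal Heine–Borel theorem, a Taylor-expansion step by the Fractal Taylor Expansion. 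The ``\(+k\)'' shifts never break the argument because \( n \ge N \) absorbs all of them at once.

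The hard part will be delimiting which \( T \) this can hold for at all. The paper itself exhibits classical statements whose stratified analogue \emph{fails at every finite level} — the existence of non-measurable (Vitali) sets, the Banach–Tarski decomposition, the uncountability of \( [a,b] \), the unrestricted Bolzano–Weierstrass property — so the theorem cannot literally mean ``every'' classical theorem. The honest scope is: \( T \) whose classical proof uses only principles cumulatively realized by the chain \( \{\mathcal{F}_n\} \), equivalently, whose proof after \( \tau \) invokes no choice or comprehension schema exceeding \( \bigcup_n \mathcal{F}_n \). Making this restriction explicit — and checking that the standard corpus of analysis on \( [a,b] \) (continuity, differentiability, Riemann and Lebesgue integration, Taylor expansion, the elementary-function calculus) indeed lies within it — is where the genuine work and the genuine caveats reside; the replay argument above is then essentially bookkeeping over a finite proof tree.
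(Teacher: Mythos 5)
Your replay argument is essentially the paper's own proof, made explicit: the paper likewise argues that a classical theorem involves only finitely many functions, parameters and analytic operations, that each becomes \( \mathcal{F}_n \)-definable for sufficiently large \( n \), and that the proof can then be re-encoded syntactically with limits and existence statements internalized as convergence in \( S_{n+k} \); your translation \( \tau \), the layer-shift bookkeeping along the finite proof tree, and the discharge of analytic lemmas via the stratified Heine--Cantor, Heine--Borel, Taylor and Fundamental Theorem results are a more careful rendering of exactly that sketch. Where you genuinely go beyond the paper is the scope caveat, and it is warranted: the paper's proof claims the result for \emph{any} classical analytic theorem, yet its own Sections~\ref{sec:paradoxes} and~\ref{subsec:fundamental-theorem} exhibit classical statements (existence of Vitali sets, Banach--Tarski decompositions, instances of the FTC that break at definability boundaries) whose stratified analogues fail at every finite level, so the theorem is only sound under a restriction of the kind you state --- that the classical proof invokes no choice or comprehension principle exceeding \( \bigcup_n \mathcal{F}_n \) --- or under a reading of \emph{stratified analogue} that silently builds in such a restriction. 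The paper leaves both the translation \( \tau \) and this restriction implicit; your version makes the missing hypotheses visible, which is precisely where the genuine work, and the genuine limitation of the theorem as stated, resides.
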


\begin{proof}
Any classical analytic theorem \( T \) is formulated using:
\begin{itemize}
    \item functions \( f \colon [a,b] \to \mathbb{R} \),
    \item continuity, differentiability, integration, or limits,
    \item arithmetic and logical operations over real numbers.
\end{itemize}

By the constructive semantics of \( \mathcal{F}_n \), all bounded and finitely approximable analytic operations are realizable at sufficiently high levels of definability. In particular:
\begin{itemize}
    \item The function \( f \) and all parameters involved become \( \mathcal{F}_n \)-definable for large enough \( n \);
    \item The proof of \( T \) can be formalized using only finite approximations, \( \varepsilon \)-\( \delta \) arguments, and compactness principles, all of which can be encoded syntactically in \( \mathcal{F}_n \) with bounded depth;
    \item Limits, suprema, and existence statements can be internalized as convergence in \( S_{n+k} \) for some \( k \).
\end{itemize}

Hence, for some minimal \( N \), all objects and logical steps in the proof of \( T \) are definable in \( \mathcal{F}_N \), and the syntactic version of \( T \) holds in all \( \mathcal{F}_n \) for \( n \geq N \).
\end{proof}

\begin{remark}
Here \( \mathbb{R} \) denotes the classical continuum, external to the definability hierarchy. It serves as a reference model for the limiting behavior of the stratified sets \( S_n \), which only approximate \( \mathbb{R} \) via finite definitional depth.
\end{remark}

\subsection{The Fractal Analytic Hierarchy}

\begin{definition}[Fractal Analytic Hierarchy]
The structure
\[
\mathcal{FA} := \{ \mathcal{F}_n, S_n, \mathcal{O}_n, C_n^k, \mu_n \}_{n \in \mathbb{N}}
\]
forms a stratified and syntactically grounded foundation for real analysis. Each component tracks definability and logical complexity across layers.
\end{definition}

This unified framework allows the development of classical theorems—continuity, differentiation, integration, convergence—within a rigorously stratified and constructive universe.

\begin{remark}
The stratified hierarchy \( C_n^k \) provides a natural framework for modeling analytic behavior under syntactic and observational constraints. In particular, it can be applied to systems where smoothness is limited not by intrinsic properties of functions, but by the definability bounds of the underlying theory—such as in observation-limited physics or bounded arithmetic. This framework also supports constructive verification of analytic properties within formalized settings, including proof assistants and numerically constrained models of computation.
\end{remark}

\section*{Conclusion and Future Directions}

\paragraph{On Existence Beyond the Fractal Hierarchy.}

The framework developed in this paper interprets mathematical existence through stratified definability: to exist is to be expressible within some \( \mathcal{F}_n \), or to arise as the limit of definable constructions across layers \( S_n \). However, classical mathematics permits objects that evade such stratification altogether. These are what we may call \emph{extrafractal} or \emph{non-fractally definable} entities—objects that cannot be approximated, described, or even indirectly referenced within any level of the \( S_n \)-hierarchy~\cite{Simpson2009,Shapiro2000}.

Such objects include non-constructively defined subsets of \( \mathbb{N} \), arbitrary elements of the power set \( \mathcal{P}(\mathbb{N}) \), non-measurable sets, and ultrafilters postulated via choice principles. While these objects play central roles in classical set theory, their status in the fractal framework is ontologically suspended: they may be posited, but not witnessed. From the internal perspective of stratified constructivity, they are not false, but simply \emph{invisible}—residing beyond the definitional horizon. Whether such objects “exist” depends not on metaphysical assumptions, but on the expressive power of the system that seeks to speak of them.

\paragraph{From Classical Smoothness to Stratified Regularity.}

A central contribution of this work is the introduction of the smoothness hierarchy \( C_n^k \), which recovers classical differentiability in a syntactically layered form. While many classical theorems reappear inside sufficiently powerful layers, the stratified setting reveals subtle failures of smoothness, differentiability, or integrability at fixed levels of formal expressiveness. Functions that are classically regular may exhibit jumps, collapses, or incompleteness under layer-bounded observation. Conversely, some classically pathological functions regain well-behaved structure when viewed within finite definability bounds.

This behavior highlights a computational reading of analysis: smoothness and regularity are not absolute properties, but dependent on the system's capacity to represent, approximate, and transform~\cite{Bauer2005}. Fractal analysis thus reinterprets classical real analysis not as a static theory of the continuum, but as a dynamic system indexed by layers of definability and proof-theoretic strength.

\paragraph{Applications and Broader Significance.}

The stratified framework has implications for several domains:

\begin{itemize}
    \item \textbf{Constructive mathematics:} It provides a layered refinement of constructive analysis, allowing a granular classification of functions and theorems by syntactic strength.
    \item \textbf{Proof theory and reverse mathematics:} The structure \( \mathcal{FA} \) gives a definability-based analogue of the subsystems of second-order arithmetic, offering a new perspective on which theorems require which layers of formal power.
    \item \textbf{Formalized computation:} The stratified smoothness classes \( C_n^k \) and integration schemes are well-suited for implementation in proof assistants or bounded verification tools, especially where real analysis must be carried out constructively or under resource constraints.
    \item \textbf{Mathematical logic and ontology:} The hierarchy frames existence, definability, and observability in explicitly syntactic terms, offering a bridge between foundational studies and computable mathematics.
\end{itemize}

\paragraph{Open Directions.}

Several important directions remain open for development:

\begin{itemize}
    \item \textbf{Fractal PDEs:} Can classical partial differential equations be reformulated over stratified domains \( S_n \), with well-posedness and regularity tied to layer structure?
    \item \textbf{Functional analysis:} How do notions like compactness, operator norms, or spectral theory manifest in \( L^p(S_n) \) or \( C^\infty_n \) spaces?
    \item \textbf{Fractal manifolds:} Can one define stratified differentiable manifolds where local charts and transition maps live in definability classes \( C^k_n \)?
    \item \textbf{Category-theoretic semantics:} What categorical or topos-theoretic structure underlies the sequence \( \{S_n\} \)? Can the fractal hierarchy be interpreted as an internal universe in a stratified logic?
    \item \textbf{Interaction with nonstandard analysis:} Is there a translation or duality between stratified analysis and internal set theories with infinitesimals?
\end{itemize}

\paragraph{Final Reflection.}

Fractal analysis does not reject classical mathematics—it reframes it. The continuum persists, but filtered through a sequence of definability layers that echo the gradations of proof, computation, and perception. In this perspective, mathematics becomes not only a study of abstract structures, but a reflection of the systems that attempt to describe them. The boundary between the smooth and the singular, the countable and the uncountable, or the constructible and the invisible, is no longer fixed. It is stratified.


\end{document}